\documentclass{article}
\usepackage{graphicx}
\usepackage{amsmath, amsthm, latexsym}
\usepackage{amssymb, amsfonts,mathrsfs,amsthm}
\usepackage[mathscr]{euscript}

\pagenumbering{arabic}
\newtheorem{theorem}{Theorem}
\newtheorem{lemma}{Lemma}
\newtheorem{definition}{Definition}
\newtheorem{corollary}{Corollary}

\newtheorem{remark}{Remark}
\newtheorem{example}{Example}
\usepackage{color}
\usepackage{hyperref} 
\usepackage[font={small,it}]{caption}

\hypersetup{colorlinks=true}    

\newcommand{\der}[2]{\frac{d#1}{d#2}} 
\newcommand{\pder}[2]{\dfrac{\partial#1}{\partial#2}} 
\newcommand{\tpder}[2]{\tfrac{\partial#1}{\partial#2}} 

\begin{document}

\vspace*{3cm} \thispagestyle{empty}
\vspace{5mm}

\noindent \textbf{\Large Pre-big bang geometric extensions of\\ 
inflationary cosmologies}\\

\textbf{\normalsize  \textbf{\normalsize David Klein\footnote{Department of Mathematics and Interdisciplinary Research Institute for the Sciences, California State University, Northridge, Northridge, CA 91330-8313. Email: david.klein@csun.edu.} and Jake Reschke}\footnote{Department of Mathematics, University of California, Davis, Davis, CA 95616
 Email: jreschke@math.ucdavis.edu }}
\\

\vspace{4mm} \parbox{11cm}{\noindent{\small  Robertson-Walker spacetimes within a large class are geometrically extended to larger cosmologies that include spacetime points with zero and negative cosmological times.  In the extended cosmologies, the big bang is lightlike, and though singular, it inherits some geometric structure from the original spacetime. Spacelike geodesics are continuous across the cosmological time zero submanifold which is parameterized by the radius of Fermi space slices, i.e, by the proper distances along spacelike geodesics from a comoving observer to the big bang. The continuous extension of the metric, and the continuously differentiable extension of the leading Fermi metric coefficient $g_{\tau\tau}$ of the observer, restrict the geometry of spacetime points with pre-big bang cosmological time coordinates.  In our extensions the big bang is two dimensional in a certain sense, consistent with some findings in quantum gravity.}\\

\noindent {\small KEY WORDS: Robertson-Walker cosmology, maximal Fermi coordinate chart, inflation, event horizon, pre-big bang}\\

\noindent Mathematics Subject Classification: 83F05, 83C10}\\
\vspace{6cm}
\pagebreak

\setlength{\textwidth}{27pc}
\setlength{\textheight}{43pc}

\section{Introduction}

The big bang singularity in general relativistic cosmologies can be considered from a variety of perspectives.  In quantum gravity theories the singularity can be  eliminated, with the big bang preceded by a big crunch or arising through other scenarios \cite{Ash, loop,  string2, string, string3}. Some investigations suggest that dimensional reduction may be a fundamental feature of quantum gravity with the effective dimension of spacetime points at sufficiently small scales decreasing to $d=2$ \cite{Carlip, Stoica}.\\  

\noindent The geometry near the singularity and possible pre-big bang scenarios  have also received  attention from a classical perspective.  Penrose and other researchers have approached this through conformal geometric methods \cite{Penrose, Penrose2, Penrose3, prebang2, Stoica2}, and dynamics near and through the big bang have also been investigated \cite{Belbruno}.  \\

\noindent In this paper, using the framework of general relativity, we show that Robertson-Walker spacetimes with big bang singularities can be extended to larger cosmologies that include points which, in a natural way, may be assigned negative or zero cosmological times.  Our approach begins with the observation that cosmological time along spacelike geodesics orthogonal to the worldline of a comoving observer decreases monotonically, and the geodesics terminate in finite proper distance at the big bang (c.f. \cite{randles, klein13}). One should therefore be able to construct larger cosmologies by extending these geodesics further, while at the same time preserving some continuity and differentiability properties of the metric tensor.  In the language of coordinates, Fermi charts of comoving observers are extended beyond their maximal domains in standard big bang cosmologies in such a way as to preserve certain properties of the metric, thus giving some geometric structure to the big bang singularity and pre-big bang spacetime points.\\  

\noindent The basic idea is  illustrated with the prototype example of the Milne Universe in two spacetime dimensions.  The line element in curvature coordinates is,
\begin{equation}\label{milnemetric}
ds^2=-dt^2+a^2(t)d\chi^2,
\end{equation}
where in this case the scale factor $a(t)=t$. The formulas $\tau=t\cosh\chi$ and $\rho=t\sinh\chi$, with $\tau>|\rho|$, transform $(t,\chi)$ to Fermi coordinates $(\tau,\rho)$ of the comoving observer at $\chi=\rho=0$ \cite{randles}, and the metric in $(\tau,\rho)$ coordinates becomes, 
\begin{eqnarray}
ds^2&=g_{\tau\tau} d\tau^2+d\rho^2\label{milnefermi1}\\
&\equiv-d\tau^2+d\rho^2,\label{milnefermi2}
\end{eqnarray}
i.e., the Milne Universe is diffeomorphic to the interior of the forward lightcone in Minkowski spacetime.  Eq.\eqref{milnefermi1} gives the form of the metric in Fermi coordinates for a general class of scale factors so that in general $\rho$ is proper distance along spacelike geodesics at fixed proper time $\tau$ (see \cite{randles, klein13}).\\  

\noindent In the original curvature coordinates, the metric of Eq.\eqref{milnemetric} is degenerate at the big bang, $t=0$, a coordinate singularity, but this is not the case for Eq.\eqref{milnefermi2}, nor as we will prove for Eq.\eqref{milnefermi1}, provided $\tau>0$, for the case of more general Robertson-Walker cosmologies which have coordinate-independent curvature singularities at $t=0$.\\

\noindent In the $(\tau,\rho)$ Fermi coordinates for the Milne Universe, spacelike geodesics orthogonal to the path of the comoving observer are horizontal straight lines within the forward light cone of Minkowski spacetime as depicted in Fig.\ref{Milne}. The dotted parts of the horizontal line in Fig \ref{Milne} are extensions of the proper distance coordinate, $\rho$, beyond the lightcone boundary of the Milne Universe at cosmological time $t=0$.\\

\begin{figure}[!h]
\begin{center}
\includegraphics[width=0.7\textwidth]{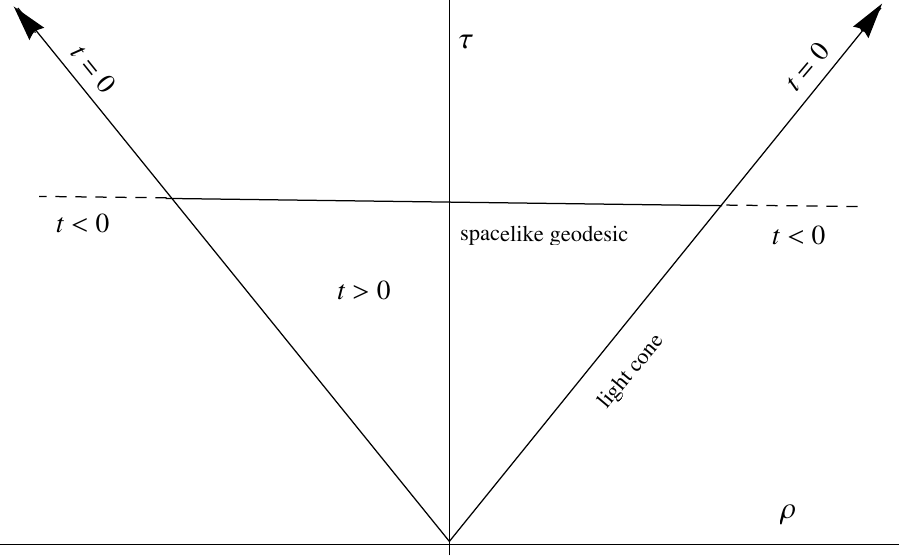}
\end{center}
\caption{The Milne Universe in Fermi coordinates $(\tau,\rho)$ is the interior of the forward light cone in Minkowski space. The comoving observer's worldline is the vertical line $\rho=0$. The dashed portion of the horizontal line extends the spacelike geodesic beyond the Milne Universe to include points with negative cosmological times $t$. }  
\label{Milne}
\end{figure}

\noindent Cosmological time $t$ --- which for notational purposes we shall  designate as $t_{0}$ --- is defined implicitly as a function of $\tau$ and $\rho$ through a natural extension of the inverse Fermi coordinate transformation $t_{0}= t_{0}(\tau,\rho)$ (for the general case see Eq. \eqref{properAlt2} below). In this way, $t_{0}\equiv t<0$ on the dotted portion of the spacelike geodesic in Fig \ref{Milne}.\\

\noindent In this paper, we carry out a similar construction for a class of Robertson-Walker cosmologies consistent with astronomical observations.  In addition to some regularity conditions, we require the scale factor $a(t)$ to be either inflationary near the big bang, or that $\infty>\dot{a}(0)>0$, and in four dimensions that the spacetime is spatially flat.  The cosmological time zero submanifold (defined by $t_{0}=0$) is lightlike in our extension, and parameterized by the (finite) Fermi radius of the orginal universe $\rho_{\mathcal{M}_{\tau}}$ (see Eq.\eqref{radius}).  In general the extension of the metric is not twice continuously differentiable, but continuity is retained along with existence and continuity of the partial derivatives of the nonvanishing leading metric coefficient $g_{\tau\tau}$ (and $g_{\rho\rho}$ is constant). \\

\noindent Our construction is purely geometric and coordinate independent, but Fermi coordinate charts play a useful role because that coordinate system is geometrically constructed.  To define Fermi coordinates, consider a foliation of some neighborhood $\mathcal{U}$ (which might be the entire spacetime) of a comoving observer's  worldline, $\beta(t)$, by disjoint Fermi spaceslices $\{\mathcal{M}_{\tau}\}$.  To define $\mathcal{M}_{\tau}$, let $\varphi_{\tau} : \mathcal{M} \rightarrow \mathbb{R}$ by,
\begin{equation}\label{slice}
\varphi_{\tau}(p)=g(\exp_{\beta(\tau)}^{-1}p,\, \dot{\beta}(\tau)),
\end{equation}
where the overdot represents differentiation with respect to proper time $\tau$ along $\beta$, $g$ is the metric tensor, and the exponential map, $\exp_{p}(v)$ denotes the evaluation at affine parameter $1$ of the geodesic starting at  point $p\in\mathcal{M}$, with initial derivative $v$.  Now define,
\begin{equation}\label{slice2}
\mathcal{M}_{\tau}\equiv \varphi_{\tau}^{-1}(0).
\end{equation}
In other words, the Fermi spaceslice $\mathcal{M}_{\tau}$ of all $\tau$-simultaneous points consists of all the spacelike geodesics orthogonal to the path of the comoving observer $\beta$ at fixed proper time $\tau$.\\

\noindent Fermi coordinates are associated to the foliation $\{\mathcal{M}_{\tau}\}$ in a natural way.  Each spacetime point on $\mathcal{M}_{\tau}$ is assigned time coordinate $\tau$, and the spatial coordinates are defined relative to a parallel transported orthonormal reference frame.   Specifically, a Fermi coordinate system along $\beta$ is determined by an orthonormal frame of vector fields, $e_{0}(\tau), e_{1}(\tau), e_{2}(\tau), e_{3}(\tau)$ parallel along $\beta$, where  $e_{0}(\tau)$ is the four-velocity of the Fermi observer, i.e., the unit tangent vector of $\beta(\tau)$. Fermi coordinates $x^{0}$, $x^{1}$, $x^{2}$, $x^{3}$ relative to this tetrad   are defined by,
\begin{equation}\label{F2}
\begin{split}
x^{0}\left (\exp_{\beta(\tau)} (\lambda^{j}e_{j}(\tau))\right)&= \tau \\
x^{k}\left (\exp_{\beta(\tau)} (\lambda^{j}e_{j}(\tau))\right)&= \lambda^{k}, 
\end{split} 
\end{equation}
where Latin indices  run over $1,2,3$ (and Greek indices run over $0,1,2,3$).\\

\noindent  Fermi coordinates may be constructed in a sufficiently small open neighborhood of any timelike geodesic in any spacetime. The metric tensor expressed in these coordinates is Minkowskian to first order near the geodesic of the Fermi observer, with second order corrections involving only the curvature tensor \cite{MM63}. General formulas in the form of Taylor expansions for coordinate transformations to and from more general Fermi-Walker coordinates are given in \cite{KC1} and exact transformation formulas for a class of spacetimes are given in \cite{CM, KC3}. Applications include the study of relative velocities, tidal dynamics, gravitational waves, statistical mechanics, and the influence of curved space-time on quantum mechanical phenomena 
\cite{Carney1, Carney2, bolos, Jake, CM2, Ishii, pound, FG, KC9, KY, B,P80, PP82, rinaldi}.\\

\noindent It was proved in \cite{randles} that the maximal Fermi chart $(x^{\alpha}, \mathcal{U}_{\mathrm{Fermi}})$ for $\beta(t)$ in a non inflationary\footnote{A Robertson-Walker space-time is non inflationary if $\ddot{a}(t)\leq0$ for all $t$.} Robertson-Walker space-time $(\mathcal{M}, g)$, with increasing scale factor, is global, i.e., $\mathcal{U}_{\mathrm{Fermi}}=\mathcal{M}$.  If, on the other hand, $(\mathcal{M}, g)$ includes inflationary periods, there may exist a cosmological event horizon for the comoving observer, i.e.,
\begin{equation}
\chi_{\mathrm{horiz}}(t_{0})\equiv\int_{t_{0}}^{\infty}\frac{1}{a(t)}dt<\infty,
\end{equation}
for some
for some $t_{0}>0$ (and hence for all $t_{0}>0$).\footnote{$\chi_{\mathrm{horiz}}(t_{0})$ is the $\chi$-coordinate at time $t_{0}$ of the cosmological event horizon, beyond which the co-moving observer at $\chi = 0$ cannot receive a light signal at any future proper time.}  For Robertson-Walker spacetimes with a big bang singularity and a cosmological event horizon, it was proved in \cite{klein13} under a regularity assumption that the maximal Fermi chart $\mathcal{U}_{\mathrm{Fermi}}$ consists of all spacetime points within (but not including) the cosmological event horizon so that the maximal Fermi chart is the causal past of the comoving observer at future infinity.  For cosmologies with no event horizon, it was shown, for both inflationary and non inflationary models, that the Fermi coordinate chart is global.  \\

\noindent It was also shown in \cite{klein13} that all spacelike geodesics with initial point on the worldline $\beta$ of a comoving observer and orthogonal to $\beta$, terminate at the big bang in a finite proper distance $\rho_{\mathcal{M}_{\tau}}$, the radius of $\mathcal{M}_{\tau}$.  In this sense, as already noted by Page \cite{page} using Rindler's observations \cite{rindler}, the big bang is simultaneous with all spacetime events.\\  

\noindent We show in this paper how the spacelike geodesics and the metric tensor can be extended to points in a larger spacetime manifold $\overline{\mathcal{M}}$  with zero or negative cosmological times, analogous to the extensions of the Milne Universe depicted in Fig \ref{Milne}.
For the general case, the extended spacetime\footnote{Here and below, ``extended spacetime'' should be understood to mean ``extended degenerate spacetime,'' in the sense that the spacetime manifold is extended to a larger manifold, but the Lorentzian metric in four spacetime dimensions collapses to a two dimensional Lorentzian metric on the big bang submanifold, to be identified in the sequel.}  $\overline{\mathcal{M}}$ can be expressed as a disjoint union,
\begin{equation}\label{union0}
\overline{\mathcal{M}}=\mathcal{M}^{+}\cup\mathcal{M}^{0}\cup\mathcal{M}^{-},
\end{equation}
where the superscripts indicate respectively that the continuous function $t_{0}(\tau,\rho)$ restricted to the set is positive, zero, or negative. Here $\mathcal{M}^{+}= \mathcal{M}$ denotes the original Robertson-Walker universe. The geometry of the spacetime $\mathcal{M}^{-}$ must be largely undetermined, except for restrictions on the spacetime points close to the big bang, because of our requirement that $g_{\tau\tau}$ be continuously differentiable across $\mathcal{M}^{0}$, that $g_{\rho\rho}\equiv1$ and the remaining metric coefficients in four spacetime dimensions be continuous. The spacetime $\overline{\mathcal{M}}$ can be understood as a smooth manifold with mild singularities of the Lorentzian metric $g$ on $\mathcal{M}^{0}$ representing the big bang. From our extension $\mathcal{M}^{0}$ inherits some geometric structure from the original Robertson-Walker spacetime $\mathcal{M}$.  \\ 

\noindent In two spacetime dimensions, the submanifold $\mathcal{M}^{0}$ defined by $t_{0}(\tau,\rho)=0$ is parameterized in two connected components by $(\tau, \rho_{\mathcal{M}_{\tau}})$, and  $(\tau, -\rho_{\mathcal{M}_{\tau}})$, for $\tau>0$, and we show that $\mathcal{M}^{0}$ is lightlike. In four spacetime dimensions, our extension results in a dimensional reduction of the cotangent bundle at cosmological time zero ($t_{0}=0$) similar to those described in \cite{Carlip, Stoica}.\\

\noindent This paper is organized as follows.  In Section \ref{charts}, we provide a summary of results for maximal Fermi charts on Robertson-Walker cosmologies needed in the sequel.  Section \ref{horizons} reviews relationships and provides a new result on particle and cosmological horizons. This enables us to avoid mutually exclusive conditions on the scale factors we consider.  Section \ref{gtautaulimit} gives results on the limiting values of the metric coefficient $g_{\tau\tau}$ as cosmological time goes to zero, and shows that nonvanishing continuous extensions of $g_{\tau\tau}$ are possible. Section \ref{1+1} is the most technical part of the paper. Here we prove continuity of the partial derivatives of $g_{\tau\tau}$ at the boundary of its domain, the big bang. Section \ref{extend} and Section \ref{3+1} carry out the extensions of the cosmology $\mathcal{M}$ to the larger spacetime $\overline{\mathcal{M}}$ in two and four spacetime dimensions respectively. In Section \ref{examples} we give examples of cosmologies and extensions.  Section \ref{conclusions} summarizes results and offers concluding remarks.  Section \ref{appendix} is the appendix and contains the proofs of the lemmas and theorems in Section \ref{1+1}.

\section{Maximal Fermi Charts}\label{charts}

\noindent This section summarizes results from \cite{randles, klein13}  needed in the sequel.  The Robertson-Walker metric on space-time $\mathcal{M}=\mathcal{M}_{k}$ is given by the line element,

\begin{equation}\label{frwmetric}
ds^2=-dt^2+a^2(t)\left[d\chi^2+S^2_k(\chi)d\Omega^2\right],
\end{equation}

\noindent where $d\Omega^2=d\theta^{2}+\sin^{2}\theta \,d\varphi^{2}$, $a(t)$ is the scale factor, and, 

\begin{equation}\label{Sk}
S_{k}(\chi)=
\begin{cases}
\sin\chi &\text{if}\,\, k= 1\\
 \chi&\text{if}\,\,k= 0\\ 
 \sinh\chi&\text{if}\,\,k=-1.
\end{cases}
\end{equation}

\noindent The coordinate $t>0$ is cosmological time and $\chi, \theta, \varphi$ are dimensionless. Here $\theta$ and $\phi$ lie in intervals $I_{\pi}$ and $I_{2\pi}$ of lengths $\pi$ and $2\pi$ respectively. The values $+1,0,-1$ of the parameter $k$ distinguish the three possible maximally symmetric space slices for constant values of $t$ with positive, zero, and negative curvatures respectively.  The radial coordinate $\chi$ takes all positive values for $k=0$ or $-1$, but is bounded above by $\pi$ for $k=+1$.\\

\noindent We assume henceforth  that $k= 0$ or $-1$ so that the range of $\chi$ is unrestricted. The techniques needed for the case $k=+1$ are the same, but require the additional restriction that $\chi<\pi$ so that spacelike geodesics do not intersect.  We note that $k=+1$ for the Einstein static universe, for which  Fermi coordinates for geodesic observers are global (except for the antipode, $\chi = \pi$) \cite{KC3}.\\

\noindent We assume throughout this paper that the scale factor $a(t)$ is \emph{regular}, i.e., it satisfies the following definition \cite{klein13}.

\begin{definition}\label{regular} Define the scale factor $a(t):[0,\infty)\rightarrow[0,\infty)$ to be \emph{regular} if: 
\begin{enumerate}
\item [(a)] $a(0)=0$, i.e., the associated cosmological model includes a big bang.
\item[(b)] $a(t)$ is increasing and continuous on $[0,\infty)$ and twice continuously differentiable on $(0,\infty)$, with inverse function $b(t)$ on $[0,\infty)$.
\item[(c)] For all $t>0$,
\begin{equation}\label{condition}
\frac{a(t)\ddot{a}(t)}{\dot{a}(t)^{2}}\leq 1.
\end{equation}
\end{enumerate}
If in addition the expression in Eq. \eqref{condition} is bounded below by a constant $-K\leq -1$, we call the scale factor $a(t)$ \emph{strongly regular}.
\end{definition}

\begin{example}\label{2/3}
It is easily verified that power law scale factors of the form $a(t)=t^\alpha$ are strongly regular for all $\alpha>0$. Scale factors of this form include radiation and matter dominated universes as well as inflationary universes for the cases $\alpha>1$. Similarly the inflationary scale factor $a(t) = \sinh t$ is easily seen to be strongly regular.  A more elaborate example of a strongly regular scale factor is given by Eq.\eqref{lambdamatter} and is discussed below in Section \ref{examples}.
\end{example}

\begin{remark} \label{regineq} Under the assumption that $\dot{a}(t)>0$ for all $t$, and $a(t)$ is regular, it follows that for any $\tau>0$, there exists $t\in(0,\tau)$ such that the inequality Eq\eqref{condition} is strict at $t$, and hence by continuity, on an open interval containing $t$.  This follows from the observation that equality in Eq.\eqref{condition} forces $a(t)$ to be an exponential function which violates Definition \ref{regular}a.
\end{remark}

\noindent Let $\beta$ be the path of the comoving observer with fixed coordinate $\chi=0$.  As a preliminary step to express the metric of Eq.\eqref{frwmetric} in Fermi coordinates of $\beta$, we define coordinate transformation functions. For $\tau>t_{0}>0$, define \cite{klein13, Bolos12}  
 \begin{equation}\label{key2}
\chi_{t_{0}}(\tau)=\int_{t_{0}}^{\tau}\frac{1}{a(t)}\frac{a(\tau)}{\sqrt{a^{2}(\tau)-a^{2}(t)}}\,dt.
\end{equation}
The function $\chi_{t_{0}}(\tau)$ is the value of the $\chi$-coordinate of the spacetime point with $t$-coordinate $t_{0}$ on the spacelike geodesic orthogonal to $\beta$ with initial point $\beta(\tau)$. The proper distance $\rho$ along that geodesic from $\beta(\tau)$ to the point with $t$-coordinate $t_{0}$ is given by, 
\begin{equation}\label{properAlt}
\rho=\int_{t_{0}}^{\tau}\frac{a(t)}{\sqrt{a^{2}(\tau)-a^{2}(t)}}\,dt.
\end{equation}
The proper distance along the geodesic increases as cosmological time $t_{0}$ decreases monotonically to zero.  The radius, $\rho_{\mathcal{M}_{\tau}}$, of the Fermi spaceslice $\mathcal{M}_{\tau}$ (see Eq.\eqref{slice2}) is the proper distance along the spacelike geodesic orthogonal to the comoving observer $\beta(\tau)$, from $\beta(\tau)$ to the big bang at $t =0$.  It is given by,

\begin{equation}\label{radius}
\rho_{\mathcal{M}_{\tau}}=\int_{0}^{\tau}\frac{a(t)}{\sqrt{a^{2}(\tau)-a^{2}(t)}}\,dt.
\end{equation}
\noindent It is easy to show, \cite{klein13}, that for a regular scale factor,

\begin{equation}\label{radiusbound}
\rho_{\mathcal{M}_{\tau}} \leq  \frac{\pi}{2}\frac{1}{H(\tau)},
\end{equation}
where $H(\tau) = \dot{a}(\tau)/a(\tau)$ is the Hubble parameter. Moreover, if $a(t)$ is strongly regular, then
\begin{equation}\label{dradius}
\der{\rho_{\mathcal{M}_\tau}}{\tau}(\tau)=\frac{\dot{a}(\tau)}{a(\tau)}\int_0^\tau\left(1-\frac{a(t)\ddot{a}(t)}{\dot{a}^2(t)}\right)\frac{a(t)dt}{\sqrt{a^2(\tau)-a^2(t)}}>0,
\end{equation}
so that $\rho_{\mathcal{M}_\tau}$ is an increasing function of $\tau$ (i.e. the Fermi radius of the universe increases with $\tau$).\footnote{A correction to the published theorem giving this result is posted on arXive, see \cite{klein13}.}  Denote the Fermi coordinates for the comoving observer $\beta(\tau)=(\tau,0,0,0)$ by $\{\tau, x=x^1, y=x^2, z=x^3\}$ according to Eq. \eqref{F2}.
Under the assumption that $a(t)$ is regular, the maximal Fermi chart $\mathcal{U}_{\mathrm{Fermi}}\subset \mathcal{M}$ is given by,
\begin{equation}\label{UFermi} 
\mathcal{U}_{\mathrm{Fermi}} = \left\{(\tau,x,y,z): \tau >0 \text{ and }  \sqrt{x^2+y^2+z^2} < \rho_{\mathcal{M}_{\tau}}\right\},
\end{equation}
and the metric in Fermi coordinates is given by,
\begin{equation}
\begin{split}\label{fermimetric}
ds^2=&\,g_{\tau\tau}(\tau,\rho)\, d\tau^2+dx^2 +dy^2+dz^2\\ 
+&\lambda_{k}(\tau,\rho)\big[(y^2+z^2)dx^2+(x^2+z^2)dy^2+(x^2+y^2)dz^2\\
-&xy(dxdy+dydx)-xz(dxdz+dzdx)-yz(dydz+dzdy)\big],
\end{split}
\end{equation} 
where $\rho=\sqrt{x^2+y^2+z^2}$, 
\begin{equation}
\begin{split}\label{oldg}
\quad g_{\tau\tau}(\tau,\rho)&=-\dot{a}(\tau)^{2}\left[a^{2}(\tau)-a^{2}(t_{0})\right]\times\\
&\left[\frac{1}{\dot{a}(t_{0})\sqrt{a^{2}(\tau)-a^{2}(t_{0})}}-\int_{t_{0}}^{\tau}\frac{\ddot{a}(t)}{\dot{a}(t)^{2}}\frac{dt}{\sqrt{a^{2}(\tau)-a^{2}(t)}}\right]^{2},\\
\end{split}
\end{equation}
and,

\begin{equation}\label{lambda2}
\lambda_{k}(\tau,\rho) =\frac{a^{2}(t_{0})S^2_k(\chi_{t_{0}}(\tau))-\rho^{2}}{\rho^{4}},
\end{equation}
for $\rho>0$ and $\lambda_{k}(\tau,0)=0$. Here, $t_{0}=t_{0}(\tau,\rho)$ is defined implicitly by Eq.\eqref{properAlt},\footnote{The subscript on the cosmological time coordinate $t_{0}$ is included as a convenience so that we may use the symbol $t$ as a dummy variable in integral expressions where it arises naturally.} and $S_k$ is given by Eq.\eqref{Sk}.
It may be shown \cite{randles}, that $\lambda_{k}(\tau,\rho)$ is a smooth function of $\tau$ and $\rho^2$.\\ 

\noindent Applying a standard transformation from Cartesian to spherical coordinates in $\mathbb{R}^{3}$ to the Fermi space coordinates results in the diagonal metric for \emph{Fermi polar coordinates},
\begin{equation}\label{fermipolar}
ds^2=g_{\tau\tau} d\tau^2+d\rho^2 + a^{2}(t_{0})S^2_k(\chi_{t_{0}}(\tau))d\Omega^2,
\end{equation}
with Fermi chart,
\begin{equation}\label{fermipolarchart}
\mathcal{U}_{\mathrm{polar}}=\{(\tau,\rho,\theta,\phi): \tau>0, 0<\rho<\rho_{\mathcal{M}_{\tau}}, \theta\in I_{\pi},\, \phi\in I_{2\pi}\}
\end{equation}

\begin{remark}\label{Mink2} In the Milne Universe where $k=-1$ and $a(t)=t$, it is easily verified that $g_{\tau\tau}\equiv1$ and,
\begin{equation}
a(t_{0})S_k(\chi_{t_{0}}(\tau))= \rho,
\end{equation}
where $\rho=\rho(\tau,t_{0})$ according to Eq.\eqref{properAlt}. Therefore,
\begin{equation}
\lim_{t_{0}\to0^{+}}a(t_{0})S_k(\chi_{t_{0}}(\tau))=\lim_{\quad\rho\to\rho_{\mathcal{M}_\tau}^{-}}\rho=\rho_{\mathcal{M}_\tau}=\tau.
\end{equation}
Then from Eq.\eqref{lambda2},
\begin{equation}
\lambda_{k}(\tau,\rho)\equiv0.
\end{equation}
so  Fermi coordinates in the Milne Universe are just the usual Minkowski coordinates.
\end{remark}

\section{Cosmological and particle horizons}\label{horizons}

\noindent In this section we collect and  prove results that relate the existence of particle horizons and cosmological event horizons to properties of the scale factor $a(t)$ and its derivatives. A Robertson-Walker spacetime has a cosmological event horizon if,
\begin{equation}
\chi_{\mathrm{horiz}}(t_{0})\equiv\int_{t_{0}}^{\infty}\frac{1}{a(t)}dt<\infty,
\end{equation}
for some $t_{0}>0$ (and hence all $t_{0}>0$). 
The spacetime has has finite particle horizon if,
\begin{equation}
\chi_{\text{part}}(\tau)\equiv\int_0^\tau\frac{1}{a(t)}dt<\infty,
\end{equation}  
for some $\tau>0$ (and hence all $\tau>0$). Part (b) of the following theorem shows that a finite particle horizon is mathematically impossible if $\dot{a}(0^{+}) <\infty$.

\begin{theorem}\label{inflation}
Let $a(t)$ be a regular scale factor on a Robertson-Walker spacetime $(\mathcal{M},g)$, where $g$ is given by Eq.\eqref{frwmetric}.
\begin{enumerate}
\item[(a)] If $\mathcal{M}$ has a cosmological horizon, i.e., $\chi_{\mathrm{horiz}}(t_{0})<\infty$ for some $t_{0}>0$, then 
\begin{equation}
\lim_{t\rightarrow\infty} \frac{t}{a(t)}=0=\lim_{t\rightarrow\infty} \dfrac{1}{\dot{a}(t)}.
\end{equation}
Moreover, $\mathcal{M}$ experiences inflationary periods for arbitrarily large cosmological times, that is, for any $N>0$, there exists a non empty open interval $(a,b)$ with $a>N$ such that $\ddot{a}(t)>0$ on $(a,b)$.  However, the condition $\ddot{a}(t)>0$ for all $t>0$ does not imply the existence of a cosmological event horizon.
\item[(b)] If $\mathcal{M}$ has a finite particle horizon, i.e., $\chi_{\text{part}}(\tau)<0$ for some $\tau>0$, then
\begin{equation}\label{finpartlim}
\lim_{t\to0^+}\frac{t}{a(t)}=0=\lim_{t\to0^+}\frac{1}{\dot{a}(t)}.
\end{equation}
Moreover, $\mathcal{M}$ experiences noninflationary periods for arbitrarily small cosmological times, that is, for any $\delta>0$, there exists a non empty open interval $(a,b)\subseteq (0,\delta)$ such that $\ddot{a}(t)<0$ on $(a,b)$. 
\end{enumerate}
\end{theorem}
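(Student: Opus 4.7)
The plan is to handle parts (a) and (b) in parallel, resting on two main ingredients: the regularity hypothesis, which via $\tfrac{d}{dt}(a/\dot a) = 1 - a\ddot a/\dot a^2 \geq 0$ makes the Hubble parameter $H = \dot a/a$ non-increasing; and the elementary inequality $\int_{t/2}^{t} (1/a(s))\,ds \geq (t/2)/a(t)$, valid because $1/a$ is non-increasing in $t$.

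For part (a), the first limit $t/a(t) \to 0$ is immediate: the left side of the elementary inequality tends to $0$ as $t \to \infty$ by convergence of $\chi_{\mathrm{horiz}}$. For $1/\dot a(t) \to 0$, I argue by contradiction. If $\liminf_{t\to\infty} \dot a(t) \leq M < \infty$, choose $t_n \to \infty$ with $\dot a(t_n) \leq M$; monotonicity of $H$ gives $H(s) \leq H(t_n) \leq M/a(t_n)$ for all $s \geq t_n$, and integrating $\dot a/a = H$ yields $a(s) \leq a(t_n)\exp(M(s-t_n)/a(t_n))$, forcing $\chi_{\mathrm{horiz}}(t_n) \geq 1/M > 0$ and contradicting $\chi_{\mathrm{horiz}}(t_n) \to 0$. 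The assertion about inflationary periods then follows by contrapositive: if $\ddot a \leq 0$ on some $(N,\infty)$, then $a$ is concave there, so $a(t) \leq a(N) + \dot a(N)(t-N)$ grows at most linearly, and $1/a$ is not integrable at infinity. An explicit $a$ with $\ddot a > 0$ everywhere but no horizon is $a(t) = t\ln(1+t)$, which is regular, strictly convex, yet satisfies $\int^\infty 1/a\,dt = \infty$.

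For part (b), the first limit follows in the same way, now with the left side of the elementary inequality tending to $0$ as $t \to 0^+$. The noninflationary-periods claim again uses contrapositive: if $\ddot a \geq 0$ on $(0,\delta)$, then convexity together with $a(0) = 0$ makes the secant slope $a(t)/t$ non-decreasing in $t$, so $a(t) \leq (a(\delta)/\delta)\,t$ on $(0,\delta)$, making $1/a$ non-integrable at $0$. For the remaining limit $1/\dot a(t) \to 0$, I apply Cauchy's mean value theorem on $[0,t]$ to the pair $(s,a(s))$ (both vanishing at $0$): there exists $c_t \in (0,t)$ with $t/a(t) = 1/\dot a(c_t)$, and since $t/a(t) \to 0$ we obtain $\liminf_{t\to 0^+} 1/\dot a(t) = 0$. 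To upgrade this to a true limit, I plan to combine the regularity-based upper bound $1/\dot a(t) \leq 1/\dot a(\tau) + \chi_{\mathrm{part}}(\tau)$ (from integrating $\ddot a/\dot a^2 \leq 1/a$ over $[t,\tau]$) with monotonicity of $q = a/\dot a$ to show that $\lim_{t\to 0^+} 1/\dot a(t)$ exists, after which L'Hopital's rule applied to $\lim_{t\to 0^+} t/a(t) = 0$ pins its value to $0$.

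The main obstacle is precisely this last step --- promoting $\liminf 1/\dot a(t) = 0$ to the full limit in part (b). The direct analog of (a)'s exponential-integral contradiction breaks down: for $t_n \to 0^+$ with $\dot a(t_n) \leq M$, the bound $\int_0^{t_n} 1/a\,ds \geq (e^{H(t_n)t_n}-1)/\dot a(t_n)$ is not useful because the first limit of (b) gives $H(t_n)t_n = \dot a(t_n)\,t_n/a(t_n) \leq M\,(t_n/a(t_n)) \to 0$, so the lower bound decays at the same rate as the tail of $\chi_{\mathrm{part}}$ and no contradiction ensues. One is therefore forced into the indirect route of first establishing existence of $\lim 1/\dot a$ via bounded-variation and monotonicity considerations, and only then pinning the value by L'Hopital, rather than closing the argument by a clean contradiction with the particle-horizon hypothesis.
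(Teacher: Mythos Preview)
Your arguments for part (a), and for the first limit and the noninflationary-periods claim in part (b), are correct and in places more direct than the paper's. For $\lim_{t\to 0^+} t/a(t)=0$ the paper introduces the auxiliary quantity $t_0\int_{t_0}^\tau \frac{1}{t\,a(t)}\,dt$, shows it tends to $0$ by dominated convergence, and then applies L'H\^opital to the quotient $\bigl(\int_{t_0}^\tau \tfrac{1}{t\,a(t)}\,dt\bigr)\big/\bigl(\tfrac{1}{t_0}\bigr)$ to recover $t_0/a(t_0)\to 0$; your one-line inequality $\int_{t/2}^{t} 1/a \geq (t/2)/a(t)$ reaches the same conclusion more transparently. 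For the noninflationary periods, the paper first uses $\dot a(t)\to\infty$ and then argues $\dot a$ cannot be increasing on $(0,\delta)$; your convexity contrapositive is independent of the $1/\dot a$ limit, which is a cleaner logical order. Your counterexample $a(t)=t\ln(1+t)$ for the last assertion of (a) is valid.

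The step you flag as the main obstacle is easier than you make it, and neither monotonicity of $q=a/\dot a$ nor a final L'H\^opital step is needed. You already have (i) $\liminf_{t\to 0^+} 1/\dot a(t)=0$ from the mean value theorem, and (ii) the bound $1/\dot a(t)\leq 1/\dot a(\tau)+\int_0^\tau 1/a$ for all $0<t<\tau$, from integrating $\ddot a/\dot a^2\leq 1/a$. Feed (i) into (ii): choose $\tau_n\to 0^+$ with $1/\dot a(\tau_n)\to 0$; since also $\int_0^{\tau_n} 1/a\to 0$ by finiteness of the particle horizon, (ii) yields $\limsup_{t\to 0^+} 1/\dot a(t)\leq 1/\dot a(\tau_n)+\int_0^{\tau_n}1/a\to 0$, so $\lim 1/\dot a=0$ outright. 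Incidentally, the paper itself just asserts that the $1/\dot a$ limit ``follows from'' the $t/a$ limit ``by L'H\^opital's rule,'' which tacitly presupposes the limit of $1/\dot a$ exists; your route, once completed as above, is more careful on exactly this point.
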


\begin{proof}
The proof of part (a) is given in \cite{klein13}. To prove part (b), observe first that the right hand side of Eq. \eqref{finpartlim} follows from the left hand side by L'H\^{o}pital's rule. Observe that
\begin{equation}\label{intlim1}
\lim_{t_0\to0^+}\int_{t_0}^\tau \frac{1}{t}\frac{1}{a(t)}dt=\infty,
\end{equation}
because $\lim_{t\to0^+}1/a(t)=\infty$. Also, for $\tau>t_0>0$,
\begin{equation}
\left|I_{[t_0,\tau]}\frac{t_0}{t}\frac{1}{a(t)}\right|\leqslant \frac{1}{a(t)},
\end{equation}
for $t>0$, where $I_{[t_0,\tau]}$ is the indicator function for the interval $[t_0,\tau]$. So, from the Lebesgue Dominated convergence theorem,
\begin{equation}\label{intlim2}
\lim_{t_0\to0^+}t_0\int_{t_0}^\tau\frac{1}{t}\frac{1}{a(t)}dt=\lim_{t_0\to0^+}\int_0^\tau I_{[t_0,\tau]}\frac{t_0}{t}\frac{1}{a(t)}dt=0.
\end{equation}
Now using Eqs. \eqref{intlim1}, \eqref{intlim2} and L'H\^{o}pital's rule we have that
\begin{multline}
\lim_{t_0\to0^+}t_0\int_{t_0}^\tau\frac{1}{t}\frac{1}{a(t)}dt=\lim_{t_0\to0^+}\left(\int_{t_0}^\tau\frac{1}{t}\frac{1}{a(t)}dt\right)/\left(\frac{1}{t_0}\right)\\=\lim_{t_0\to0^+}-\left(\frac{1}{t_0}\frac{1}{a(t_0)}\right)/\left(-\frac{1}{t_0^2}\right)=\lim_{t_0\to0^+}\frac{t_0}{a(t_0)}=0.
\end{multline}
Given $\delta>0$, $\lim_{t\to0^+}\dot{a}(t)=\infty$ implies that $\dot{a}(t)$ cannot be increasing on $(0,\delta)$. Therefore there must be a $t\in (0,\delta)$ with $\ddot{a}(t)<0$, and by continuity $\ddot{a}(t)<0$ on an open interval containing $t$. 
\end{proof}

\begin{remark} Part b of Theorem \ref{inflation} shows that a scale factor $a(t)$ with $\dot{a}(0^+) =\infty$ such as $a(t)=t^{\alpha}$ for $\alpha<1$ is non inflationary at the big bang.  Our results for continuously differentiable extensions of the metric coefficient $g_{\tau\tau}$ therefore exclude such scale factors, but for continuous extensions of the Fermi metric for this case, see Theorems \ref{gcontin} and \ref{justcontin} below.
\end{remark}

\section{Limiting values of $g_{\tau\tau}$ at the big bang}\label{gtautaulimit}

In this section, we find the limiting value of $g_{\tau\tau}(\tau,\rho)$ as $\rho$ increases to $\rho_{\mathcal{M}_{\tau}}$, and show that the limiting value is not zero.  This will be used in Section \ref{1+1} to construct a continuously differentiable extension of $g_{\tau\tau}$ to a larger spacetime $\overline{\mathcal{M}}$ (see Eq. \eqref{union}).  We begin with an alternative but equivalent expression for $g_{\tau\tau}(\tau,\rho)$ in Eq. \eqref{oldg}.  Rearranging terms gives,
\begin{equation}
\label{newg}
g_{\tau\tau}(\tau,\rho)= -\dot{a}(\tau)^{2}\left[\frac{1}{\dot{a}(t_{0})}-\int_{t_{0}}^{\tau}\frac{\ddot{a}(t)}{\dot{a}(t)^{2}}\frac{\sqrt{a^{2}(\tau)-a^{2}(t_{0})}}{\sqrt{a^{2}(\tau)-a^{2}(t)}}dt\right]^{2}
\end{equation}
 Eq.\eqref{oldg} may then be modified by substituting the relation,

\begin{equation}
\frac{1}{\dot{a}(t_{0})}=\frac{1}{\dot{a}(\tau)}+\int_{t_{0}}^{\tau}\frac{\ddot{a}(t)}{\dot{a}(t)^{2}}dt.
\end{equation}
The result is,
\begin{equation}
\label{newg2}
g_{\tau\tau}(\tau,\rho)= -\left[1-\dot{a}(\tau)\int_{t_{0}}^{\tau}\frac{\ddot{a}(t)}{\dot{a}(t)^{2}}\left(\frac{\sqrt{a^{2}(\tau)-a^{2}(t_{0})}}{\sqrt{a^{2}(\tau)-a^{2}(t)}}-1\right)dt\right]^{2}.
\end{equation}
The following lemma slightly generalizes a result in \cite{klein13} to the case $t_{0}=0$.
\begin{lemma}\label{intbound} Let $a(t)$ be a regular scale factor.  Then for $0\leq t_{0}<\tau$,

\begin{equation}\label{dchi}
\begin{split}
\int_{t_{0}}^{\tau}\frac{\ddot{a}(t)}{\dot{a}(t)^{2}}\left[\frac{a(\tau)}{\sqrt{a^{2}(\tau)-a^{2}(t)}}-1\right]dt&<\int_{t_{0}}^{\tau}\frac{1}{a(t)}\left[\frac{a(\tau)}{\sqrt{a^{2}(\tau)-a^{2}(t)}}-1\right]dt\\
&< \frac{1}{\dot{a}(\tau)}.
\end{split}
\end{equation}\\
\end{lemma}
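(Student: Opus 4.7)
The plan is to establish the two inequalities separately, both via the regularity condition (c).

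For the left inequality, condition (c) gives $\ddot{a}(t)/\dot{a}(t)^2 \le 1/a(t)$ pointwise, while the bracket $[a(\tau)/\sqrt{a^2(\tau)-a^2(t)}-1]$ is nonnegative on $[t_0,\tau)$ and strictly positive on $(0,\tau)$. Multiplying and integrating preserves the inequality, and Remark \ref{regineq} supplies an open subinterval of $(0,\tau)$ on which (c) is strict, upgrading this to strict inequality on the integrals.

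For the right inequality I would apply the substitution $a(t)=a(\tau)\sin\phi$, which sends $[t_0,\tau]$ diffeomorphically onto $[\phi_0,\pi/2]$ with $\sin\phi_0 = a(t_0)/a(\tau)$ (so $\phi_0=0$ precisely when $t_0=0$). Using $1-\cos\phi = 2\sin^2(\phi/2)$ and $\sin\phi = 2\sin(\phi/2)\cos(\phi/2)$, the integrand simplifies to $\tan(\phi/2)\,d\phi/\dot{a}(t(\phi))$. Rewriting (c) as $(a/\dot{a})'(t)=1-a(t)\ddot{a}(t)/\dot{a}(t)^2 \ge 0$ shows that $a/\dot{a}$ is nondecreasing, so $\dot{a}(t) \ge \dot{a}(\tau)\,a(t)/a(\tau) = \dot{a}(\tau)\sin\phi$. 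Substituting this lower bound and using the half-angle identity $\tan(\phi/2)/\sin\phi = 1/(1+\cos\phi)$ reduces the problem to the elementary integral $\int_{\phi_0}^{\pi/2} d\phi/(1+\cos\phi) = 1-\tan(\phi_0/2)$, so the LHS is bounded by $[1-\tan(\phi_0/2)]/\dot{a}(\tau) \le 1/\dot{a}(\tau)$.

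For the strict inequality I would split into cases. If $t_0 > 0$ then $\phi_0 > 0$ and $1-\tan(\phi_0/2) < 1$ already delivers strictness, without any need for strictness in the pointwise integrand bound. If $t_0 = 0$ then $\phi_0 = 0$ and strictness must instead come from the interior: if $\dot{a}(t)=\dot{a}(\tau)\sin\phi$ held almost everywhere, then by continuity $a/\dot{a}$ would be constant on $(0,\tau)$, forcing $a$ to be exponential there, which is incompatible with $a(0)=0$; hence $\dot{a}(t) > \dot{a}(\tau)\sin\phi$ on a subset of positive measure and the strict inequality transfers to the integral.

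The main obstacle is precisely this bookkeeping of strictness: for the previously known case $t_0 > 0$ it is a boundary contribution from $\phi_0$, whereas the newly included case $t_0 = 0$ requires extracting strictness from the interior using $a(0)=0$ (and, in effect, Remark \ref{regineq}) to rule out the exponential scenario. The remaining manipulations are elementary calculus.
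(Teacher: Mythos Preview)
Your proof is correct and essentially the same as the paper's: the first inequality via the pointwise bound $\ddot{a}/\dot{a}^2 \le 1/a$ together with Remark~\ref{regineq}, and the second via the substitution $x=a(t)/a(\tau)$ (equivalently your $\sin\phi$) combined with monotonicity of $a/\dot{a}$, after which the resulting integral evaluates to $1$. Your tracking of the lower limit $\phi_0$ yields the slightly sharper intermediate bound $(1-\tan(\phi_0/2))/\dot{a}(\tau)$ and a separate source of strictness when $t_0>0$, whereas the paper simply extends the lower integration limit to $0$ and invokes strict monotonicity of the Hubble parameter (Remark~\ref{regineq}) for strictness in all cases---a cosmetic rather than substantive difference.
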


\begin{proof} The first inequality follows from Definition \ref{regular}c and Remark \ref{regineq}.  For the second inequality,

\begin{equation}\label{long}
\begin{split}
\int_{t_{0}}^{\tau}\frac{1}{a(t)}&\left[\frac{a(\tau)}{\sqrt{a^{2}(\tau)-a^{2}(t)}}-1\right]dt=\int_{t_{0}}^{\tau}\frac{1}{a(t)}\left[\frac{1-\sqrt{1-\frac{a^{2}(t)}{a^{2}(\tau)}}}{\sqrt{1-\frac{a^{2}(t)}{a^{2}(\tau)}}}\right]dt
\\
=&\frac{1}{a^{2}(\tau)}\int_{t_{0}}^{\tau}\frac{a(t)}{\sqrt{1-\frac{a^{2}(t)}{a^{2}(\tau)}}\left(1+\sqrt{1-\frac{a^{2}(t)}{a^{2}(\tau)}}\right)}dt\\
=&\frac{1}{a(\tau)}\int_{t_{0}}^{\tau}\frac{a(t)}{\dot{a}(t)}\frac{\dot{a}(t)/a(\tau)}{\sqrt{1-\frac{a^{2}(t)}{a^{2}(\tau)}}\left(1+\sqrt{1-\frac{a^{2}(t)}{a^{2}(\tau)}}\right)}dt\\
<&\frac{1}{a(\tau)}\frac{a(\tau)}{\dot{a}(\tau)}\int_{0}^{\tau}\frac{\dot{a}(t)/a(\tau)}{\sqrt{1-\frac{a^{2}(t)}{a^{2}(\tau)}}\left(1+\sqrt{1-\frac{a^{2}(t)}{a^{2}(\tau)}}\right)}dt,\\
\end{split}
\end{equation}
where in the last step, we have used the fact that the Hubble parameter, $H(t)$, is a decreasing function of $t$, and strictly decreasing on an interval, so that $a(t)/\dot{a}(t)<a(\tau)/\dot{a}(\tau)$ on some interval of $t$ values.  To evaluate this last integral, we make the change of variable, $x=a(t)/a(\tau)$, which yields,
\begin{equation}\label{integral}
\begin{split}
\int_{t_{0}}^{\tau}\frac{1}{a(t)}\left[\frac{a(\tau)}{\sqrt{a^{2}(\tau)-a^{2}(t)}}-1\right]dt&<\frac{1}{\dot{a}(\tau)}\int_{0}^{1}\frac{dx}{\sqrt{1-x^{2}}(1+\sqrt{1-x^{2}})}=\frac{1}{\dot{a}(\tau)}.
\end{split}
\end{equation}
\end{proof}

\begin{theorem}\label{c0thrm} 
Let $a(t)$ be strongly regular. Then the Fermi metric coefficient $g_{\tau\tau}(\tau,\rho)$ satisfies the following:
\begin{equation}\label{glimit}
\lim_{\quad\rho\rightarrow\rho_{\mathcal{M}_{\tau}}^{-}}g_{\tau\tau}(\tau,\rho)= -\left[1-\dot{a}(\tau)\int_{0}^{\tau}\frac{\ddot{a}(t)}{\dot{a}(t)^{2}}\left(\frac{a(\tau)}{\sqrt{a^{2}(\tau)-a^{2}(t)}}-1\right)dt\right]^{2}
\end{equation}
Moreover, the function $g_{\tau\tau}(\tau,\rho_{\mathcal{M}_{\tau}})$ defined by this limit satisfies,
\begin{equation}
\infty > -g_{\tau\tau}(\tau,\rho_{\mathcal{M}_{\tau}})>0.
\end{equation}
\end{theorem}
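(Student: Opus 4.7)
The plan is to begin from the alternative expression (\ref{newg2}) for $g_{\tau\tau}$, note that $t_0(\tau,\rho)\to 0^+$ as $\rho\to\rho_{\mathcal{M}_\tau}^-$ (since the integral in (\ref{properAlt}) is a continuous strictly decreasing function of $t_0$ with limiting value $\rho_{\mathcal{M}_\tau}$ at $t_0=0$), and then pass the limit inside the integral in (\ref{newg2}) using the dominated convergence theorem. For each fixed $t\in(0,\tau)$, the integrand converges pointwise to $\frac{\ddot a(t)}{\dot a(t)^{2}}\left(\frac{a(\tau)}{\sqrt{a^{2}(\tau)-a^{2}(t)}}-1\right)$, which is precisely the integrand appearing on the right-hand side of (\ref{glimit}).

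The crux is producing an integrable dominating function uniform in $t_0\in[0,t)$. Because $a$ is increasing and $t_0\le t$, the non-negative factor $\frac{\sqrt{a^{2}(\tau)-a^{2}(t_{0})}}{\sqrt{a^{2}(\tau)-a^{2}(t)}}-1$ is bounded above by the $t_0$-independent quantity $\frac{a(\tau)}{\sqrt{a^{2}(\tau)-a^{2}(t)}}-1$. Strong regularity (the two-sided bound $-K\le a\ddot a/\dot a^{2}\le 1$) yields $\bigl|\ddot a(t)/\dot a(t)^{2}\bigr|\le C/a(t)$ with $C=\max(K,1)$. Hence the integrand in (\ref{newg2}) is dominated, uniformly in $t_0$, by $\frac{C}{a(t)}\left(\frac{a(\tau)}{\sqrt{a^{2}(\tau)-a^{2}(t)}}-1\right)$, which lies in $L^{1}((0,\tau))$ by Lemma \ref{intbound} applied with $t_0=0$. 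The dominated convergence theorem then delivers (\ref{glimit}).

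To establish $\infty>-g_{\tau\tau}(\tau,\rho_{\mathcal{M}_\tau})>0$, I invoke Lemma \ref{intbound} with $t_0=0$ in two ways. The full chain of strict inequalities gives $\int_{0}^{\tau}\frac{\ddot a(t)}{\dot a(t)^{2}}\left(\frac{a(\tau)}{\sqrt{a^{2}(\tau)-a^{2}(t)}}-1\right)dt<1/\dot a(\tau)$, so the bracket in (\ref{glimit}) is strictly positive and its square is strictly positive; thus $-g_{\tau\tau}(\tau,\rho_{\mathcal{M}_\tau})>0$. For finiteness, strong regularity also supplies $\ddot a/\dot a^{2}\ge -K/a$, bounding the same integral below by $-K/\dot a(\tau)$ and keeping the bracket inside the bounded interval $(0,1+K)$, so $g_{\tau\tau}(\tau,\rho_{\mathcal{M}_\tau})$ is finite. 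The main obstacle is the dominated-convergence step: the integrand is singular at $t=\tau$, and the argument succeeds only because replacing $\sqrt{a^{2}(\tau)-a^{2}(t_{0})}$ by its uniform upper bound $a(\tau)$ produces precisely the expression whose integrability Lemma \ref{intbound} already guarantees.
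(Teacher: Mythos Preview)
Your proof is correct and follows essentially the same approach as the paper: the paper also uses the Dominated Convergence Theorem with the identical majorant $\frac{K}{a(t)}\left(\frac{a(\tau)}{\sqrt{a^{2}(\tau)-a^{2}(t)}}-1\right)$ (integrable by Lemma \ref{intbound}), and likewise deduces $-g_{\tau\tau}(\tau,\rho_{\mathcal{M}_\tau})>0$ directly from the strict inequality in Lemma \ref{intbound}. Your explicit lower bound on the integral to confirm finiteness is a minor elaboration; the paper leaves this implicit in the DCT step, since integrability of the dominating function already guarantees a finite limit.
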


\begin{proof} Note that for fixed $\tau$, $\rho\rightarrow\rho_{\mathcal{M}_{\tau}}$ if and only if $t_{0}\rightarrow 0$.  The existence of the finite limit in Eq \eqref{glimit} follows from Lemma \ref{intbound} and the Dominated Convergence Theorem using the comparison,

\begin{equation}
\frac{|\ddot{a}(t)|}{\dot{a}(t)^{2}}\left[\frac{\sqrt{a^{2}(\tau)-a^{2}(t_{0})}}{\sqrt{a^{2}(\tau)-a^{2}(t)}}-1\right]<\frac{K}{a(t)}\left[\frac{a(\tau)}{\sqrt{a^{2}(\tau)-a^{2}(t)}}-1\right]
\end{equation}

\noindent The second assertion, $-g_{\tau\tau}(\tau,\rho_{\mathcal{M}_{\tau}})>0$, follows directly from Lemma \ref{intbound}.

\end{proof}

\begin{remark} It follows from Eq. \eqref{newg2} that $-g_{\tau\tau}(\tau,\rho)\leq1$ if the cosmological time interval from $t_{0}(\tau,\rho)$ to $\tau$ is inflationary, i.e. if $\ddot{a}(t)\geq 0$ on that interval.  Similarly, $-g_{\tau\tau}(\tau,\rho)\geq1$ if the cosmological time interval from $t_{0}(\tau,\rho)$ to $\tau$ is noninflationary, i.e. if $\ddot{a}(t)\leq 0$ on that interval.
\end{remark}

\noindent The following purely technical lemma will be needed in the proof of continuity of the extension of $g_{\tau\tau}$ and its derivatives.

\begin{lemma}\label{continuitylemma} 
Let $a(t)$ be a regular scale factor and assume that $\tau\geq\tau_{0}>0$. Let $\ell(t)$ be a smooth function defined for $t>0$ satisfying
\begin{equation}
|\ell(t)|<\frac{K}{a(t)}
\end{equation}
for some $K>0$ and all $t>0$. Then,
\begin{enumerate}
\item[(a)] 
If $0<x\leq\tau_{0}$,
\begin{equation}
\int_{\tau_{0}}^{\tau}|\ell(t)|\left[\frac{\sqrt{a^{2}(\tau)-a^{2}(x)}}{\sqrt{a^{2}(\tau)-a^{2}(t)}}-1\right]dt<\frac{K}{\dot{a}(\tau)}\left[1-\left(\frac{a(\tau)}{a(\tau_{0})}- \sqrt{\frac{a^{2}(\tau)}{a^{2}(\tau_{0})}-1}  \right) \right]\\
\end{equation}\\
\item[(b)]
\begin{multline}
a(\tau)\int_{\tau_{0}}^\tau |\ell(t)|\frac{dt}{\sqrt{a^2(\tau)-a^2(t)}}<K\int_{\tau_{0}}^{\tau}\frac{1}{a(t)}dt\\
+\frac{K}{\dot{a}(\tau)}\left[1-\left(\frac{a(\tau)}{a(\tau_{0})}- \sqrt{\frac{a^{2}(\tau)}{a^{2}(\tau_{0})}-1}  \right) \right]
\end{multline}
\end{enumerate}
\end{lemma}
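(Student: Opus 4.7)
The plan is to reduce both parts of the lemma to the integral estimate established in the proof of Lemma \ref{intbound}, modified so that the integration begins at $\tau_0>0$ rather than at $0$, while tracking the residual endpoint contribution that survives.

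For part (a), I would first exploit that $0<x\le\tau_0\le t$ together with monotonicity of $a$ to obtain $\sqrt{a^2(\tau)-a^2(x)}\le a(\tau)$ (strictly, since $x>0$). This removes $x$ from the bracket, and combining with the hypothesis $|\ell(t)|<K/a(t)$ reduces the problem to bounding
$$\int_{\tau_{0}}^{\tau}\frac{1}{a(t)}\left[\frac{a(\tau)}{\sqrt{a^{2}(\tau)-a^{2}(t)}}-1\right]dt.$$
From here I would mimic the chain used in Lemma \ref{intbound}: rationalize via $a(\tau)-\sqrt{a^{2}(\tau)-a^{2}(t)}=a^{2}(t)/(a(\tau)+\sqrt{a^{2}(\tau)-a^{2}(t)})$, insert $a(t)/\dot{a}(t)$ and invoke monotonicity of the Hubble parameter $H(t)=\dot{a}(t)/a(t)$ (decreasing under regularity and strictly decreasing on some subinterval by Remark \ref{regineq}) to trade $a(t)/\dot{a}(t)$ for $a(\tau)/\dot{a}(\tau)$, then substitute $u=a(t)/a(\tau)$. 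The integral becomes $\frac{1}{\dot{a}(\tau)}\int_{v_{0}}^{1}du/[\sqrt{1-u^{2}}(1+\sqrt{1-u^{2}})]$ with $v_{0}=a(\tau_{0})/a(\tau)$.

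The step requiring care is evaluating this last integral and matching the algebraic form stated in the lemma. Setting $u=\sin\theta$ and using $1+\cos\theta=2\cos^{2}(\theta/2)$ makes $\tan(\theta/2)$ an antiderivative, yielding $\int_{v_{0}}^{1}=1-v_{0}/(1+\sqrt{1-v_{0}^{2}})$. A cross-multiplication then verifies
$$\frac{v_{0}}{1+\sqrt{1-v_{0}^{2}}}=\frac{1-\sqrt{1-v_{0}^{2}}}{v_{0}}=\frac{a(\tau)}{a(\tau_{0})}-\sqrt{\frac{a^{2}(\tau)}{a^{2}(\tau_{0})}-1},$$
which is exactly the quantity subtracted from $1$ on the right-hand side of the lemma.

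For part (b), I would again use $|\ell(t)|<K/a(t)$ and split the integrand via
$$\frac{a(\tau)}{a(t)\sqrt{a^{2}(\tau)-a^{2}(t)}}=\frac{1}{a(t)}+\frac{1}{a(t)}\left[\frac{a(\tau)}{\sqrt{a^{2}(\tau)-a^{2}(t)}}-1\right].$$
The first summand integrates directly to the $K\int_{\tau_{0}}^{\tau}dt/a(t)$ term. The second summand is precisely the integral already estimated in part (a), specialized to $\ell(t)=1/a(t)$ (or, equivalently, by reapplying the same computation verbatim), so it contributes the residual endpoint term. The only real obstacle is the trigonometric evaluation and the matching algebraic identity in part (a); everything else is a careful application of monotonicity and change of variables already present in the proof of Lemma \ref{intbound}.
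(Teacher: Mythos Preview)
Your proposal is correct and follows essentially the same route as the paper: reduce via $|\ell(t)|<K/a(t)$ and $\sqrt{a^{2}(\tau)-a^{2}(x)}\le a(\tau)$, then reuse the rationalization, Hubble-monotonicity, and substitution $u=a(t)/a(\tau)$ from the proof of Lemma~\ref{intbound} over $[\tau_{0},\tau]$ and evaluate the resulting definite integral. The paper presents the core integral bound first and then reads off both (a) and (b), whereas you order the parts the other way and supply the trigonometric antiderivative explicitly, but the argument is the same.
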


\begin{proof} Using Eq.\eqref{long} we have,

\begin{multline}\label{long2}
\int_{\tau_{0}}^{\tau}\frac{1}{a(t)}\left[\frac{a(\tau)}{\sqrt{a^{2}(\tau)-a^{2}(t)}}-1\right]dt\\
<\frac{1}{\dot{a}(\tau)}\int_{\tau_{0}}^{\tau}\frac{\dot{a}(t)/a(\tau)}{\sqrt{1-\frac{a^{2}(t)}{a^{2}(\tau)}}\left(1+\sqrt{1-\frac{a^{2}(t)}{a^{2}(\tau)}}\right)}dt,
\end{multline}
The change of variable, $x=a(t)/a(\tau)$, yields

\begin{equation}\label{integral}
\begin{split}
\int_{\tau_{0}}^{\tau}\frac{1}{a(t)}\left[\frac{a(\tau)}{\sqrt{a^{2}(\tau)-a^{2}(t)}}-1\right]dt&<\frac{1}{\dot{a}(\tau)}\int_{\frac{a(\tau_{0})}{a(\tau)}}^{1}\frac{dx}{\sqrt{1-x^{2}}(1+\sqrt{1-x^{2}})}\\
&=\frac{1}{\dot{a}(\tau)}\left[1-\left(\frac{a(\tau)}{a(\tau_{0})}- \sqrt{\frac{a^{2}(\tau)}{a^{2}(\tau_{0})}-1}  \right) \right]\\
\end{split}
\end{equation}
Part (b) now follows by rearranging terms and from the hypothesis 

\begin{equation}\label{bdbelow}
|\ell(t)|<\frac{K}{a(t)}
\end{equation}
Part (a) follows from this hypothesis and since $a(\tau)\geq\sqrt{a^{2}(\tau)-a^{2}(x)}$.
\end{proof}

\section{$\mathcal{C}^{1}$ extension of $g_{\tau\tau}$}\label{1+1}

\noindent The plan of this section is first to extend $g_{\tau\tau}$, given by Eq. \eqref{newg2}, as a continuous function to values of $\rho > \rho_{\mathcal{M}_{\tau}}$ (or equivalently to negative values of $t_{0}$).  We then show under some regularity conditions that the first partial derivatives of the extension of $g_{\tau\tau}$ are also continuous.\\

\noindent In order to accomplish this, we must extend the domain of the scale factor to include negative values of cosmological time $t=t_{0}$.  Although not essential, it is convenient using our methods to extend $a(t)$ as an even function so that $a(-t)=a(t)$.  A smooth even extension of $a(t)$ requires $\dot{a}(0)=0$, but we also consider the possibility that $0<\dot{a}(0^{+})<\infty$, which forces a discontinuity in the first derivative of $a(t)$ (but not in the metric coefficients). Both of these geometric properties of the big bang are of interest and result in different extensions of Robertson-Walker spacetimes to negative cosmological times.  As a convenience to the reader, the proofs of the lemmas and theorems of this section have been placed in the appendix labeled as Sect.\ref{appendix}. \\

\noindent As a first step, we extend the proper distance coordinate $\rho$ by the same formula as Eq. \eqref{properAlt}, but so as to allow values of $t_{0}$ in the interval $-\tau<t_0<\tau$,
\begin{equation}\label{properAlt2}
\rho=\int_{t_{0}}^{\tau}\frac{a(t)}{\sqrt{a^{2}(\tau)-a^{2}(t)}}\,dt.
\end{equation} 
Now $t_0 (\tau,\rho)$ is defined implicitly by Eq.\eqref{properAlt2} on the set,
\begin{equation}\label{D}
D=\{(\tau,\rho): \tau>0, 0<\rho<2\rho_{\mathcal{M}_{\tau}}\},
\end{equation}
where $\rho_{\mathcal{M}_{\tau}}$ is the Fermi radius of the universe at proper time $\tau$ of the central observer and is given by Eq.\eqref{radius}.\\

\noindent It follows from the Implicit Function theorem that the function $t_{0}(\tau, \rho)$ is a smooth function of its arguments in $D$, except possibly when $t_{0}=0$, the cosmological time coordinate of the big bang. The next lemma shows that $t_{0}$ is continuous even where $t_{0}=0$. This result will be needed in what follows.  
\begin{lemma}\label{t0cont}
Suppose $a(t)$ is strongly regular. Then the function $t_{0}(\tau,\rho)$, defined implicitly by Eq. \eqref{properAlt2}, is continuous on $D$.
\end{lemma}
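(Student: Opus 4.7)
The plan is to view the defining relation as $\rho = F(\tau, t_0)$ with
\[
F(\tau, t_0) := \int_{t_0}^\tau \frac{a(t)\, dt}{\sqrt{a^2(\tau) - a^2(t)}},
\]
and to prove continuity of the implicitly defined inverse $t_0 = t_0(\tau, \rho)$ by a monotonicity-plus-sequential-compactness argument. Away from the locus $t_0 = 0$ the Implicit Function Theorem already gives smoothness, so the entire issue is the cosmological-time-zero slice, where $\partial F/\partial t_0 = -a(t_0)/\sqrt{a^2(\tau) - a^2(t_0)}$ vanishes because $a(0) = 0$. But a continuous strictly monotone bijection always has a continuous inverse, so this degeneracy is irrelevant for continuity; the real work is (a) strict monotonicity of $F$ in $t_0$, and (b) joint continuity of $F$.

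For (a), I use the even extension $a(-t) = a(t)$ to rewrite
\[
F(\tau, t_0) = \rho_{\mathcal{M}_\tau} - \mathrm{sgn}(t_0) \int_0^{|t_0|} \frac{a(s)\, ds}{\sqrt{a^2(\tau) - a^2(s)}},
\]
which shows $F(\tau, 0) = \rho_{\mathcal{M}_\tau}$ and that $F(\tau, \cdot)$ is a continuous strictly decreasing bijection of $(-\tau, \tau)$ onto $(0, 2\rho_{\mathcal{M}_\tau})$. For (b), the only subtlety is the integrable singularity at $t = \tau$ (and, by the even extension, symmetrically at $t = -\tau$). Splitting the integral at $t = \tau - \epsilon$, the bulk $\int_{t_0}^{\tau - \epsilon}$ is jointly continuous by dominated convergence on compact subdomains where the integrand is bounded, while the tail is bounded by $C\sqrt{\epsilon}$ uniformly in $\tau$ on compact subsets of $(0, \infty)$, via the lower bound $a^2(\tau) - a^2(t) \geq c (\tau - t)$ obtained from positivity and continuity of $a \dot a$ near $\tau$. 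Sending $\epsilon \to 0$ delivers joint continuity.

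The conclusion then follows by sequential compactness. Given $(\tau_n, \rho_n) \to (\tau^*, \rho^*)$ in $D$, set $t_n = t_0(\tau_n, \rho_n)$; the bound $|t_n| < \tau_n$ gives a bounded sequence, and any convergent subsequence $t_{n_k} \to t^{**}$ satisfies $|t^{**}| \leq \tau^*$. The endpoints are ruled out: $t^{**} = \tau^*$ forces $\rho_{n_k} = F(\tau_{n_k}, t_{n_k}) \to 0$ by the tail estimate, contradicting $\rho^* > 0$, while $t^{**} = -\tau^*$ forces $\rho_{n_k} \to 2\rho_{\mathcal{M}_{\tau^*}}$ via the symmetric estimate together with continuity of $\tau \mapsto \rho_{\mathcal{M}_\tau} = F(\tau, 0)$, contradicting $(\tau^*, \rho^*) \in D$. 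Hence $t^{**} \in (-\tau^*, \tau^*)$, and joint continuity of $F$ combined with strict monotonicity forces $t^{**} = t_0(\tau^*, \rho^*)$. Every convergent subsequence having the same limit, the full sequence converges.

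The main obstacle will be the bookkeeping of the tail estimate in (b): the location of the integrable singularity moves with $\tau$, so producing a $\tau$-uniform bound on $\int_{\tau - \epsilon}^\tau a(t)/\sqrt{a^2(\tau) - a^2(t)}\, dt$ requires careful use of the continuity and strict positivity of $a$ and $\dot a$ away from $t = 0$. The apparent difficulty at $t_0 = 0$ — the vanishing of $\partial F/\partial t_0$ there — by contrast plays no role in obstructing continuity, only differentiability.
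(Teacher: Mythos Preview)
Your argument is correct and takes a genuinely different route from the paper. The paper gives a direct $\epsilon$--$\delta$ proof at the one problematic locus $t_0=0$: it fixes an upper bound $M>\tau_0$, uses the differentiability (hence continuity) of $\tau\mapsto\rho_{\mathcal{M}_\tau}$ coming from Eq.~\eqref{dradius} (this is where strong regularity enters), bounds $\bigl|\int_0^{t_0} a(t)/\sqrt{a^2(\tau)-a^2(t)}\,dt\bigr|$ via the triangle inequality, and then compares with the strictly increasing auxiliary function $|t_0|\mapsto\int_0^{|t_0|} a(t)/\sqrt{a^2(M)-a^2(t)}\,dt$ to conclude $|t_0|<\epsilon$. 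You instead prove joint continuity of $F(\tau,t_0)$ once and for all via a moving tail estimate $\int_{\tau-\epsilon}^\tau \leq C\sqrt{\epsilon}$ uniform on compact $\tau$-sets, and then invert by a sequential-compactness argument that exploits only strict monotonicity of $F(\tau,\cdot)$. Your approach is more structural and, as you implicitly note, does not actually need strong regularity: continuity of $\rho_{\mathcal{M}_\tau}=F(\tau,0)$ falls out of your joint continuity rather than being imported from the derivative formula~\eqref{dradius}. The paper's proof is shorter and more self-contained at the single point of interest; yours buys a cleaner modular statement (joint continuity of $F$) that could be reused, and slightly weaker hypotheses.
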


\noindent In Fermi polar coordinates, there is a coordinate singularity at $\rho=0$, but this singularity disappears in $1+1$ dimensions and $\rho$ may be extended symmetrically to negative values as well. However, in what follows it is convenient to restrict $\rho$ to nonnegative values, and this causes no loss of generality. \\


%
%

\noindent With Eq. \eqref{newg2} in mind and with a slight abuse of notation, we define the extension of the metric tensor to $D$ by $g_{\rho\rho}\equiv1$ and,
\begin{equation}
\label{newg3}
g_{\tau\tau}(\tau,\rho)= -[1-\dot{a}(\tau)f(\tau,t_0(\tau,\rho))]^{2}\equiv -[1-\dot{a}(\tau)\mathbf{f}(\tau,\rho)]^{2},
\end{equation}

\noindent where

\begin{equation}\label{gint}
f(\tau,t_0)=\begin{cases}
\int_{t_{0}}^{\tau}\frac{\ddot{a}(t)}{\dot{a}(t)^{2}}\left(\frac{\sqrt{a^{2}(\tau)-a^{2}(t_{0})}}{\sqrt{a^{2}(\tau)-a^{2}(t)}}-1\right)dt&\text{if }\quad\tau>t_0\geqslant 0\\
2f(\tau,0)-f(\tau,-t_0)&\text{if }-\tau<t_0<0,
\end{cases}
\end{equation}
and
\begin{equation}\label{f(tau,rho)}
\mathbf{f}(\tau,\rho)\equiv f(\tau,t_0(\tau,\rho)).
\end{equation}

\begin{remark}
Since $\lim_{t_0\to0^+}f(\tau,t_0)=f(\tau,0)$ exists by Theorem \ref{c0thrm}, by the definition of $f$ we automatically have that $\lim_{t_0\to0^-}f(\tau,t_0)=f(\tau,0)$. Therefore $\lim_{t_0\to0}f(\tau,t_0)=f(\tau,0)$.  We note also that for $t_{0}<0$ an equivalent expression for $f(\tau,t_{0})$ is,
\begin{equation}
f(\tau,t_0)= f(\tau,0)+\int^{0}_{t_{0}}\frac{\ddot{a}(t)}{\dot{a}(t)^{2}}\left(\frac{\sqrt{a^{2}(\tau)-a^{2}(t_{0})}}{\sqrt{a^{2}(\tau)-a^{2}(t)}}-1\right)dt.
\end{equation}
\end{remark}

\begin{remark} \label{toobig}
We note that in general the domain $D$  given by Eq.\eqref{D} will be too large to be a coordinate chart for an extended spacetime $\overline{\mathcal{M}}$, and a proper subset must be used to avoid zeros of $g_{\tau\tau}$ in $\mathcal{M}^{-}$ where $t_{0}<0$. This because the conclusions of Theorem \ref{c0thrm} do not necessarily hold for Eq. \eqref{newg3} on all of $D$.  However, it follows from the continuity of $g_{\tau\tau}$ established in Theorem \ref{gcontin} that $-g_{\tau\tau}(\tau,\rho)>0$ at all points $(\tau,\rho)$ where $t_{0}(\tau,\rho)$ is sufficiently close to zero, including points where $t_{0}(\tau,\rho)<0$.   We elaborate further in Section \ref{extend}.
\end{remark}

%
%

\begin{theorem}\label{gcontin} Let $a(t)$ be strongly regular. Then the metric coefficient $g_{\tau\tau}(\tau,\rho)$ given by Eq.\eqref{newg3} is continuous on $D$.
\end{theorem}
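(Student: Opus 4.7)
The plan is to reduce the continuity of $g_{\tau\tau}$ on $D$ to the joint continuity of the function $f(\tau, t_0)$ of Eq.\eqref{gint}, and then to establish the latter case by case in the sign of $t_0$. Since $\dot a$ is continuous on $(0,\infty)$ and $g_{\tau\tau}(\tau,\rho) = -[1 - \dot a(\tau)\mathbf f(\tau,\rho)]^2$, it suffices to prove continuity of $\mathbf f(\tau,\rho) = f(\tau, t_0(\tau,\rho))$ on $D$. Lemma \ref{t0cont} supplies the continuity of $t_0(\tau,\rho)$, so the task further reduces to showing that $f(\tau, t_0)$ is jointly continuous on the set $\{(\tau, t_0) : \tau>0,\ -\tau < t_0 < \tau\}$, which is the natural domain on which $f$ is defined by Eq.\eqref{gint}.

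On the open regions $t_0 > 0$ and $t_0 < 0$, joint continuity of $f$ should follow from a routine Dominated Convergence argument. For a sequence $(\tau_n, t_{0,n}) \to (\tau_*, t_{0,*})$ with $t_{0,*}$ of definite sign, I would extend the integrand by zero outside $(t_{0,n}, \tau_n)$, obtain pointwise a.e.\ convergence from continuity of $a, \dot a, \ddot a$, and dominate by an integrable function using the strong regularity bound $|\ddot a(t)/\dot a(t)^2| \leq K/a(t)$ together with the elementary inequality $\sqrt{a^2(\tau_n) - a^2(t_{0,n})} \leq a(\tau_n)$. Lemma \ref{intbound} then supplies the integrability of the dominator. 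The case $t_{0,*} < 0$ is reduced to the case $t_{0,*}>0$ (applied at the point $(\tau_*, -t_{0,*})$) via the defining identity $f(\tau, t_0) = 2f(\tau, 0) - f(\tau, -t_0)$.

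The principal obstacle is joint continuity at the boundary points $(\tau_*, 0)$, since there the singularity at $t = \tau$ in the integrand interacts with both moving limits of integration. For $(\tau_n, t_{0,n}) \to (\tau_*, 0)$ with $t_{0,n} \geq 0$, the uniform dominating function in the DCT must handle the fact that the singular point $\tau_n$ varies while $t_{0,n}$ approaches $0$ from the positive side. I would restrict $\tau_n$ to a compact subinterval $[\tau_* - \epsilon, \tau_* + \epsilon] \subset (0,\infty)$, perform the change of variables $x = a(t)/a(\tau_n)$ used in the proof of Lemma \ref{intbound}, and then exploit monotonicity of the Hubble parameter $H(t) = \dot a(t)/a(t)$ to produce a dominator of the form $C/[\sqrt{1-x^2}\,(1+\sqrt{1-x^2})]$ on $(0,1)$ that is independent of $n$ and integrable by Eq.\eqref{integral}. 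DCT then gives $f(\tau_n, t_{0,n}) \to f(\tau_*, 0)$. When $t_{0,n} < 0$, the identity $f(\tau_n, t_{0,n}) = 2f(\tau_n, 0) - f(\tau_n, -t_{0,n})$ reduces the claim to the $t_0 \geq 0$ case already handled: $-t_{0,n} \to 0^+$ forces both $f(\tau_n, 0)$ and $f(\tau_n, -t_{0,n})$ to converge to $f(\tau_*, 0)$, yielding $f(\tau_n, t_{0,n}) \to 2f(\tau_*,0) - f(\tau_*,0) = f(\tau_*, 0)$, as required.
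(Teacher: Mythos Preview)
Your overall architecture matches the paper's: reduce to joint continuity of $f(\tau,t_0)$, invoke Lemma~\ref{t0cont} for $t_0(\tau,\rho)$, and handle $t_0<0$ through the defining identity $f(\tau,t_0)=2f(\tau,0)-f(\tau,-t_0)$. The paper also notes that continuity away from $t_0=0$ is immediate from smoothness of Fermi coordinates on the original chart, so it does not carry out your ``routine DCT'' step there.

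The genuine difference is in the estimate at the critical points $(\tau_*,0)$. The paper fixes $\tau_0$, writes $|f(\tau,x)-f(\tau_0,0)|$ as a sum of three integral pieces via the triangle inequality, and controls each piece separately: the tail $\int_{\tau_0}^{\tau}$ by Lemma~\ref{continuitylemma}(a), the tail $\int_0^x$ by integrability of $h(\tau_0,0,\cdot)$, and the bulk $\int_x^{\tau_0}$ by continuity of $h$ together with integrability. Your route instead pushes the substitution $x=a(t)/a(\tau_n)$ from the proof of Lemma~\ref{intbound} onto $f$ itself, turning the varying interval $(t_{0,n},\tau_n)$ into a subset of the fixed interval $(0,1)$ and using Hubble monotonicity to produce the $n$-independent dominator $C/[\sqrt{1-x^2}(1+\sqrt{1-x^2})]$. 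This is correct and arguably cleaner: it recycles the Lemma~\ref{intbound} computation wholesale, handles the moving singular endpoint $t=\tau_n$ and the moving lower limit $t_{0,n}$ simultaneously, and avoids introducing Lemma~\ref{continuitylemma}. The paper's splitting, by contrast, keeps the original $t$-variable and makes explicit which portion of the error comes from varying $\tau$ versus varying $t_0$; that decomposition is reused later in the proofs of Theorems~\ref{rhoderivg} and~\ref{tauderivg}, so it earns its keep downstream. One small remark: the moving-singularity issue you flag as the ``principal obstacle'' at $(\tau_*,0)$ is equally present in the $t_{0,*}>0$ case you call routine, since $\tau_n$ still varies; your change-of-variables argument covers that case too, but the word ``routine'' undersells it.
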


\noindent Our next task is to prove that the metric coefficient, $g_{\tau\tau}$, as defined by Eq. \eqref{newg3} is differentiable on $D$.  The following lemma deals with the technicality of the unbounded integrand in Eq. \eqref{gint}.

\begin{lemma}\label{c1lemma1} Let $a(t)$ be regular with $\dot{a}(0^+)<\infty$. The function $f(\tau,t_0)$ given by Eq. \eqref{gint} is continuously differentiable with respect to $t_0\neq0$ when $-\tau<t_0<\tau$, and,
\begin{equation}\label{dt0g}
\partial_{t_0}f(\tau,t_0)=-\frac{a(t_{0})\dot{a}(|t_{0}|)}{\sqrt{a^{2}(\tau)-a^{2}(t_{0})}}\int_{|t_{0}|}^{\tau}\frac{\ddot{a}(t)}{\dot{a}(t)^{2}}\frac{dt}{\sqrt{a^{2}(\tau)-a^{2}(t)}}.
\end{equation}
\end{lemma}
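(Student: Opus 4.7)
The plan is to differentiate under the integral sign in the $t_{0}>0$ case via Leibniz's rule, then to reduce the $t_{0}<0$ case to this via the identity $f(\tau,t_0)=2f(\tau,0)-f(\tau,-t_0)$. Fix $t_{0}^{\ast}\in(0,\tau)$ and a small $\delta>0$ with $[t_{0}^{\ast}-\delta,\,t_{0}^{\ast}+\delta]\subset(0,\tau)$. Writing the integrand of Eq.\eqref{gint} as
$$g(t,s)=\frac{\ddot{a}(t)}{\dot{a}(t)^{2}}\!\left(\frac{\sqrt{a^{2}(\tau)-a^{2}(s)}}{\sqrt{a^{2}(\tau)-a^{2}(t)}}-1\right),$$
I would first verify that the integral is well defined: $g(\cdot,s)$ vanishes continuously at $t=s$, and the Taylor expansion $a^{2}(\tau)-a^{2}(t)=2a(\tau)\dot{a}(\tau)(\tau-t)+O((\tau-t)^{2})$ shows the singularity at $t=\tau$ is of order $(\tau-t)^{-1/2}$, hence integrable.

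Next, I would apply Leibniz's rule, viewed as the chain rule on $I(a,b)=\int_a^\tau g(t,b)\,dt$ evaluated at $a=b=s$. The boundary contribution $\partial_a I\cdot 1=-g(s,s)$ is zero, because the parenthetical factor in $g$ vanishes at $t=s$; only the inner-parameter contribution
$$\partial_{s}g(t,s)=-\frac{a(s)\dot{a}(s)}{\sqrt{a^{2}(\tau)-a^{2}(s)}}\cdot\frac{\ddot{a}(t)}{\dot{a}(t)^{2}\sqrt{a^{2}(\tau)-a^{2}(t)}}$$
survives. To justify the interchange of derivative and integral I need a uniform integrable dominator as $s$ ranges over $[t_{0}^{\ast}-\delta,\,t_{0}^{\ast}+\delta]$. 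Continuity of $a$, $\dot{a}$, $\ddot{a}$ on this compact subset of $(0,\infty)$, together with $\dot{a}>0$ there and $\dot{a}(0^{+})<\infty$, bounds the $s$-dependent prefactor uniformly in $s$ by a constant $M$, reducing the issue to integrability of $\ddot{a}(t)/[\dot{a}(t)^{2}\sqrt{a^{2}(\tau)-a^{2}(t)}]$ on $[t_{0}^{\ast}-\delta,\tau]$, which follows from the same $(\tau-t)^{-1/2}$ analysis. This delivers Eq.\eqref{dt0g} for $t_{0}>0$, with $|t_{0}|=t_{0}$.

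For $t_{0}<0$ the chain rule applied to $f(\tau,t_{0})=2f(\tau,0)-f(\tau,-t_{0})$ gives $\partial_{t_{0}}f(\tau,t_{0})=\partial_{s}f(\tau,s)|_{s=-t_{0}}$; substituting the positive-case formula and using the evenness $a(-t_{0})=a(|t_{0}|)=a(t_{0})$ recovers Eq.\eqref{dt0g} verbatim, with $\dot{a}(|t_{0}|)$ appearing automatically. Continuity of $\partial_{t_{0}}f$ on $\{t_{0}\neq 0\}\cap(-\tau,\tau)$ then follows, since the explicit prefactor varies continuously in $t_{0}$ away from $0$ and $\pm\tau$, while the remaining integral depends continuously on its lower limit by a second application of dominated convergence with the same dominator.

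The main obstacle is producing this uniform integrable dominator; everything else is mechanical once the $(\tau-t)^{-1/2}$ singularity is controlled. Crucially, differentiation with respect to $s$ does not worsen the singularity --- the factor $1/\sqrt{a^{2}(\tau)-a^{2}(t)}$ appears with the same exponent in $\partial_{s}g$ as in $g$ itself --- so the integrability estimate used to verify that $f$ is well defined also suffices to justify the Leibniz interchange.
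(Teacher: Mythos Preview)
Your argument is correct and follows essentially the same route as the paper. The paper writes out the difference quotient explicitly, splits off the boundary contribution $\int_{t_{0}}^{t_{0}+\Delta t_{0}}h/\Delta t_{0}$ (handled by the mean value theorem for integrals, yielding $h(\tau,t_{0}+\Delta t_{0},\zeta)\to h(\tau,t_{0},t_{0})=0$), and applies the mean value theorem in the $t_{0}$-slot to the remaining integrand before passing to the limit; you package the same steps as the Leibniz rule and verify its hypotheses via a uniform integrable dominator. Both arguments rest on the same three observations---the boundary term vanishes because $g(s,s)=0$, the $(\tau-t)^{-1/2}$ singularity is integrable and is not worsened by $\partial_{s}$, and the $s$-dependent prefactor is bounded on compact subintervals of $(0,\tau)$---and both handle $t_{0}<0$ identically via the reflection identity.
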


\begin{lemma}\label{c1lemma2}
For a regular scale factor $a(t)$, let $a(0^+)\equiv\lim_{t\to0^+} \dot{a}(t)$. Then:
\begin{enumerate}
\item[(a)] If $\dot{a}(0^+)=0$ and there exists an $\epsilon>0$ such that $\ddot{a}(t)\geqslant0$ for $t\in(0,\epsilon)$, then
\begin{equation}\label{divrint}
\lim_{t_0\to0^+}\int_{t_0}^\tau \frac{\ddot{a}(t)}{\dot{a}^2(t)}\frac{dt}{\sqrt{a^2(\tau)-a^2(t)}}=\infty.
\end{equation}
\item[(b)] If $\infty>\dot{a}(0^+)>0$ then,
\begin{equation}\label{convint}
\lim_{t_0\to0^+}\int_{t_0}^\tau \frac{\ddot{a}(t)}{\dot{a}^2(t)}\frac{dt}{\sqrt{a^2(\tau)-a^2(t)}}=\int_{0}^\tau \frac{\ddot{a}(t)}{\dot{a}^2(t)}\frac{dt}{\sqrt{a^2(\tau)-a^2(t)}}
\end{equation}
exists and is finite.
\end{enumerate}
\end{lemma}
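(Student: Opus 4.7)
The plan is to treat the two cases with opposite techniques: a monotone lower bound for (a), and a Cauchy-type argument via integration by parts for (b).

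For part (a), the key identity is $\ddot{a}(t)/\dot{a}^2(t) = -\tfrac{d}{dt}(1/\dot{a}(t))$, which makes the integral of this quantity alone completely explicit. I would first fix any $\epsilon' \in (0,\min(\epsilon,\tau))$ and decompose $\int_{t_0}^{\tau} = \int_{t_0}^{\epsilon'} + \int_{\epsilon'}^{\tau}$. On $[\epsilon',\tau]$ the function $\ddot{a}/\dot{a}^2$ is continuous and hence bounded, and $1/\sqrt{a^2(\tau)-a^2(t)}$ has only an integrable singularity at $t=\tau$, so the tail is a fixed finite real number independent of $t_0$. On $[t_0,\epsilon'] \subset (0,\epsilon)$, the hypothesis $\ddot{a}\geq 0$ makes the integrand nonnegative and the crude bound $\sqrt{a^2(\tau)-a^2(t)} \leq a(\tau)$ yields
\begin{equation*}
\int_{t_0}^{\epsilon'} \frac{\ddot{a}(t)\,dt}{\dot{a}^2(t)\sqrt{a^2(\tau)-a^2(t)}} \;\geq\; \frac{1}{a(\tau)} \int_{t_0}^{\epsilon'} \frac{\ddot{a}(t)}{\dot{a}^2(t)}\,dt \;=\; \frac{1}{a(\tau)}\left[\frac{1}{\dot{a}(t_0)}-\frac{1}{\dot{a}(\epsilon')}\right].
\end{equation*}
Since $\dot{a}(t_0)\to 0^+$ by hypothesis, the bracketed quantity and hence the full integral diverges to $+\infty$.

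For part (b), I would set $F(t)\equiv 1/(\dot{a}(t)\sqrt{a^2(\tau)-a^2(t)})$, which under the hypothesis $0<\dot{a}(0^+)<\infty$ is continuous on $[0,\tau)$ with $F(0^+) = 1/(\dot{a}(0^+)a(\tau))$ finite. A direct calculation of $F'$ yields the key identity
\begin{equation*}
F'(t) = -\frac{\ddot{a}(t)}{\dot{a}^2(t)\sqrt{a^2(\tau)-a^2(t)}} + \frac{a(t)}{(a^2(\tau)-a^2(t))^{3/2}},
\end{equation*}
so that for any $0 < t_0 < t_1 < \tau$,
\begin{equation*}
\int_{t_0}^{t_1}\frac{\ddot{a}(t)\,dt}{\dot{a}^2(t)\sqrt{a^2(\tau)-a^2(t)}} = F(t_0)-F(t_1) + \int_{t_0}^{t_1}\frac{a(t)\,dt}{(a^2(\tau)-a^2(t))^{3/2}}.
\end{equation*}
Restricting to $t_0,t_1 \in (0,\tau/2)$, the boundary difference $F(t_0)-F(t_1)$ tends to $0$ by continuity of $F$, while the integrand in the remaining integral is bounded on $[0,\tau/2]$ (its denominator is bounded below by $(a^2(\tau)-a^2(\tau/2))^{3/2}>0$), so that remaining integral is at most $|t_1-t_0|$ times a constant and also tends to $0$. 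The Cauchy criterion then delivers existence of $\lim_{t_0\to 0^+}I(t_0)$; finiteness is immediate since $I(\tau/2)$ is finite and the Cauchy differences are bounded.

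The main technical obstacle is that the regularity hypothesis alone does not imply absolute integrability of $\ddot{a}(t)/\dot{a}^2(t)$ near $t=0$ — the inequality $\ddot{a}/\dot{a}^2 \leq 1/a$ controls only one side, and $\ddot{a}$ can be large and negative — so a naive dominated-convergence attack on part (b) fails. The integration-by-parts identity is essential because it trades the delicate improper integral for a continuous boundary difference of $F$ plus a remainder whose integrand is uniformly bounded away from $t=\tau$, thereby bypassing any need for an $L^1$-bound on the original integrand.
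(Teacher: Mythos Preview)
Your argument for part~(a) is correct and coincides with the paper's: both split at a point in $(0,\min(\epsilon,\tau))$, bound the tail as a fixed finite number, and lower-bound the head using $\sqrt{a^2(\tau)-a^2(t)}\le a(\tau)$ together with the explicit antiderivative $-1/\dot a$ of $\ddot a/\dot a^{2}$.

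For part~(b) you take a genuinely different route. The paper's proof simply asserts that, since $\dot a(0^+)\in(0,\infty)$, the integrand $\ddot a(t)/\bigl(\dot a^2(t)\sqrt{a^2(\tau)-a^2(t)}\bigr)$ is bounded near $t=0$, and then concludes convergence directly. As you correctly point out, this tacitly presumes that $\ddot a(t)$ remains bounded as $t\to0^+$, which does not follow from the regularity hypothesis (the condition $a\ddot a/\dot a^2\le 1$ controls only one side, and the lemma does not assume strong regularity). Your approach avoids this issue entirely: the identity obtained by differentiating $F(t)=1/\bigl(\dot a(t)\sqrt{a^2(\tau)-a^2(t)}\bigr)$ converts the improper integral into the continuous boundary difference $F(t_0)-F(t_1)$ plus a remainder whose integrand $a(t)/(a^2(\tau)-a^2(t))^{3/2}$ is uniformly bounded on $[0,\tau/2]$, and the Cauchy criterion then gives existence and finiteness of the limit without any appeal to an $L^1$ bound on the original integrand. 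The paper's argument is shorter and adequate whenever $\ddot a$ is bounded near the big bang (as in all the explicit examples treated there); your argument is correct under the hypotheses exactly as stated and is therefore more robust.
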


\begin{theorem}\label{drhometric'}
Let $a(t)$ be a strongly regular scale factor and suppose that one of the conditions of Lemma \ref{c1lemma2} holds. Then $g_{\tau\tau}$ is  differentiable with respect to $\rho$ in $D$, and
\begin{equation}\label{drhometric}
\partial_\rho g_{\tau\tau}=2\sqrt{-g_{\tau\tau}}\dot{a}(\tau)\partial_\rho \mathbf{f}(\tau,\rho),
\end{equation}
where
\begin{equation}\label{drhometric2}
\partial_\rho \mathbf{f}(\tau,\rho)=\dot{a}(|t_0|)\int_{|t_{0}|}^{\tau}\frac{\ddot{a}(t)}{\dot{a}(t)^{2}}\frac{dt}{\sqrt{a^{2}(\tau)-a^{2}(t)}}
\end{equation}
for $\rho\neq \rho_{\mathcal{M}_\tau}$ (and with $t_{0}=t_{0}(\tau,\rho)$), and
\begin{equation}\label{casedrhog}
\partial_\rho \mathbf{f}(\tau,\rho_{\mathcal{M}_\tau})=
\begin{cases}
\dot{a}(0^+)\int_0^\tau\frac{\ddot{a}(t)}{\dot{a}^2(t)}\frac{1}{\sqrt{a^2(\tau)-a^2(t)}}dt& \text{ if } \dot{a}(0^+)>0\\
\frac{1}{a(\tau)}&\text{ if } \dot{a}(0)=0
\end{cases}
\end{equation}
\end{theorem}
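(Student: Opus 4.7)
The plan is to derive Eq.~\eqref{drhometric} first on the open subset of $D$ where $t_0(\tau,\rho)\neq 0$, and then extend by continuity across the locus $\rho=\rho_{\mathcal{M}_\tau}$ via a mean-value-theorem argument. Throughout, the $\rho$-derivative of $g_{\tau\tau}=-[1-\dot{a}(\tau)\mathbf{f}]^2$ is extracted by the chain rule, giving $\partial_\rho g_{\tau\tau}=2[1-\dot{a}(\tau)\mathbf{f}]\dot{a}(\tau)\partial_\rho\mathbf{f}$, which becomes Eq.~\eqref{drhometric} upon identifying $1-\dot{a}(\tau)\mathbf{f}$ with $\sqrt{-g_{\tau\tau}}$. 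This identification uses the sign estimate $1-\dot{a}(\tau)\mathbf{f}\geqslant 0$, which holds on the Robertson--Walker side by Lemma~\ref{intbound} and persists by continuity (Theorem~\ref{gcontin}) across $\rho_{\mathcal{M}_\tau}$ into the region in question.

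For points with $t_0\neq 0$, the function $t_0(\tau,\rho)$ is smooth by the Implicit Function Theorem applied to Eq.~\eqref{properAlt2}, and implicit differentiation yields $\partial_\rho t_0=-\sqrt{a^2(\tau)-a^2(t_0)}/a(t_0)$. Combining this with $\partial_{t_0}f$ from Lemma~\ref{c1lemma1} via the chain rule $\partial_\rho\mathbf{f}=(\partial_{t_0}f)(\partial_\rho t_0)$, the factor $\sqrt{a^2(\tau)-a^2(t_0)}/a(t_0)$ cancels cleanly and the right-hand side of Eq.~\eqref{drhometric2} falls out immediately. Because $a$ has been extended as an even function, the same manipulation applies on either side of $t_0=0$.

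Differentiability at the boundary $\rho=\rho_{\mathcal{M}_\tau}$ is handled by the standard observation that if a function is continuous at a point, differentiable on both sides, and its derivative admits a common finite limit at the point, then it is differentiable there with that limit as derivative (a direct consequence of the mean value theorem). Since $\mathbf{f}(\tau,\cdot)$ is continuous at $\rho_{\mathcal{M}_\tau}$ by the remark following Eq.~\eqref{gint} together with Lemma~\ref{t0cont}, it thus suffices to compute $\lim_{t_0\to 0^\pm}\partial_\rho\mathbf{f}$. Under hypothesis (b) of Lemma~\ref{c1lemma2}, the integral factor in Eq.~\eqref{drhometric2} tends to the finite value $\int_0^\tau\ddot{a}(t)/[\dot{a}^2(t)\sqrt{a^2(\tau)-a^2(t)}]\,dt$ while $\dot{a}(|t_0|)\to\dot{a}(0^+)$, producing the first branch of Eq.~\eqref{casedrhog}.

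The main obstacle is case (a), where $\dot{a}(0^+)=0$ while the integral in Eq.~\eqref{drhometric2} simultaneously diverges by Lemma~\ref{c1lemma2}(a), so that $\partial_\rho\mathbf{f}$ is of indeterminate form $0\cdot\infty$ as $t_0\to 0^+$. I would resolve it by setting $I(t_0)=\int_{t_0}^{\tau}\ddot{a}(t)/[\dot{a}^2(t)\sqrt{a^2(\tau)-a^2(t)}]\,dt$ and rewriting $\dot{a}(t_0)\,I(t_0)=I(t_0)\big/[1/\dot{a}(t_0)]$ as an $\infty/\infty$ quotient, then applying L'H\^opital's rule. Differentiation of numerator and denominator with respect to $t_0$ produces $-\ddot{a}(t_0)/[\dot{a}^2(t_0)\sqrt{a^2(\tau)-a^2(t_0)}]$ and $-\ddot{a}(t_0)/\dot{a}^2(t_0)$, whose ratio $1/\sqrt{a^2(\tau)-a^2(t_0)}$ has limit $1/a(\tau)$, matching the second branch of Eq.~\eqref{casedrhog}. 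The hypothesis $\ddot{a}\geqslant 0$ near $0$ combined with $\dot{a}(0^+)=0$ forbids $\ddot{a}$ from vanishing identically on any interval adjacent to $0$ (else $\dot{a}$ would be constant and hence zero there, contradicting strict monotonicity of $a$), so the denominator's derivative is nonzero on a dense set and L'H\^opital's rule applies legitimately; the $t_0\to 0^-$ side is handled identically via the even extension.
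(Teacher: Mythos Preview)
Your proof is correct and follows essentially the same route as the paper's: chain rule via Lemma~\ref{c1lemma1} and implicit differentiation of Eq.~\eqref{properAlt2} away from $t_0=0$, then the mean-value/L'H\^opital extension principle (the paper states this as Remark~\ref{derivative}) at $\rho=\rho_{\mathcal{M}_\tau}$, with the same L'H\^opital computation in case~(a). Your explicit justification of the sign $1-\dot{a}(\tau)\mathbf{f}\geqslant 0$ and the nonvanishing of $\ddot a$ near $0$ are details the paper leaves implicit.
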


\noindent We next establish the differentiability of $g_{\tau\tau}(\tau,\rho)$ with respect to $\tau$ in the domain $D$.  From Eq. \eqref{newg3}, this will follow by proving that $f(\tau, t_{0}(\tau, \rho))$ is differentiable with respect to $\tau$.  This is established by the following theorem whose proof depends on Lemmas \ref{firstlemmatau}, \ref{taulimlemma1}, and \ref{taulimlemma2}.

\begin{theorem}\label{dtaug}
Let $a(t)$ be strongly regular with $\dot{a}(0^+)<\infty$.  Suppose that there is a constant $C>0$ such that

\begin{equation}\label{abound}
\left|\frac{\dddot{{}a}(t)a^2(t)}{\dot{a}^3(t)}\right|\leqslant C,
\end{equation}
for all $t$. Then $g_{\tau\tau}$ is  differentiable with respect to $\tau$ in $D$ and,
\begin{equation}\label{tauderivformula}
\partial_{\tau}g_{\tau\tau}=2\sqrt{-g_{\tau\tau}}\,[\ddot{a}(\tau)\mathbf{f}(\tau,\rho)+\dot{a}(\tau)\partial_\tau \mathbf{f}(\tau,\rho)],
\end{equation}
where

\begin{equation}\label{dtauf}
\partial_\tau \mathbf{f}(\tau,\rho)=\partial_{\tau}f(\tau,t_0)+\partial_{t_0}f(\tau,t_0)\partial_\tau t_0(\tau,\rho),
\end{equation}
 for $(\tau,\rho)\neq(\tau,\rho_{\mathcal{M}_\tau})$, where $\partial_\tau f(\tau,t_0),\partial_{t_0}f(\tau,\rho)$ and $\partial_{\tau}t_0(\tau,t_0)$ are given by Eqs. \eqref{ftauderivative}, \eqref{dt0g} and \eqref{taudp}, respectively. If $(\tau,\rho)=(\tau,\rho_{\mathcal{M}_\tau})$ then,
\begin{equation}
\partial_{\tau}\mathbf{f}(\tau,\rho_{\mathcal{M}_\tau})=\begin{cases}
\tpder{}{\tau}f(\tau,0)+\frac{1}{a(\tau)}\der{\rho_{\mathcal{M}_\tau}}{\tau}(\tau) &\text{ if }\dot{a}(0)=0\\
\tpder{}{\tau}f(\tau,0)+\dot{a}(0^+)\der{\rho_{\mathcal{M}_\tau}}{\tau}(\tau)\int_0^\tau\frac{\ddot{a}(t)}{\dot{a}^2(t)}\frac{dt}{\sqrt{a^2(\tau)-a^2(t)}} &\text{ if } \dot{a}(0^+)>0
\end{cases}
\end{equation}
\end{theorem}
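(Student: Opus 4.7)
The plan is to reduce the statement to proving that $\mathbf{f}(\tau,\rho) = f(\tau, t_0(\tau,\rho))$ is $\tau$-differentiable throughout $D$, after which \eqref{tauderivformula} follows by a direct application of the chain rule to $g_{\tau\tau} = -[1 - \dot{a}(\tau)\mathbf{f}(\tau,\rho)]^2$. The sign choice $1 - \dot{a}(\tau)\mathbf{f} = +\sqrt{-g_{\tau\tau}}$ follows from Lemma \ref{intbound} when $t_0 \geq 0$, and extends by continuity (Theorem \ref{gcontin} together with Remark \ref{toobig}) to a neighborhood of $t_0 = 0$ in $D$, which is the only regime relevant for the differentiation.

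On the open subset of $D$ where $\rho \neq \rho_{\mathcal{M}_\tau}$, equivalently $t_0(\tau,\rho) \neq 0$, the Implicit Function Theorem applied to the defining equation \eqref{properAlt2} produces a smooth $t_0(\tau, \rho)$ with $\partial_\tau t_0$ given by \eqref{taudp}. Differentiation under the integral sign in \eqref{gint} --- legitimized by strong regularity and, when the derivative passes to $\tau$, by the bound \eqref{abound} on $\dddot{a}\,a^2/\dot{a}^3$ --- yields $\partial_\tau f$ as in \eqref{ftauderivative}, while Lemma \ref{c1lemma1} already supplies $\partial_{t_0} f$. The chain rule then gives \eqref{dtauf}, which is the interior case of the theorem.

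The main obstacle is at boundary points $(\tau^*, \rho^*)$ with $\rho^* = \rho_{\mathcal{M}_{\tau^*}}$, where $t_0$ vanishes, the integrand in \eqref{gint} can become unbounded (Lemma \ref{c1lemma2}(a)), and the piecewise definition of $f$ switches branches. At such a point, holding $\rho = \rho^*$ fixed and letting $\tau \to \tau^*$ drives $t_0(\tau,\rho^*)$ through zero (by the monotonicity \eqref{dradius}), so I would form the one-sided difference quotients $[\mathbf{f}(\tau^* + h, \rho^*) - \mathbf{f}(\tau^*, \rho^*)]/h$ and evaluate $h \to 0^\pm$ separately. Lemmas \ref{firstlemmatau}, \ref{taulimlemma1}, \ref{taulimlemma2} are precisely the technical tools for this limit: they control the integrand via \eqref{abound} so that dominated convergence delivers $\partial_\tau f(\tau^*,0)$ in the limit, and they isolate the residual ``transverse'' contribution, which by implicit differentiation along the boundary curve $\rho = \rho_{\mathcal{M}_\tau}$ equals $\tfrac{d\rho_{\mathcal{M}_\tau}}{d\tau}$ times the coefficient appearing in \eqref{casedrhog}. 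Matching the two one-sided limits --- which works because $f$ is defined symmetrically about $t_0 = 0$ in \eqref{gint} --- produces the boundary formula in the two stated cases, with the dichotomy between $\dot{a}(0)=0$ and $\dot{a}(0^+)>0$ coming directly from Lemma \ref{c1lemma2}. The delicate cancellation between divergent factors in the case $\dot{a}(0)=0$, where the integral in \eqref{divrint} blows up but is tamed by the factor $a(t_0)\dot{a}(t_0)/\sqrt{a^2(\tau)-a^2(t_0)} \to 0$ sitting in front, is where I expect the bulk of the work to lie.
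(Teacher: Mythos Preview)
Your proposal is correct and follows essentially the same route as the paper: reduce to $\tau$-differentiability of $\mathbf{f}$, handle interior points via the chain rule, and at boundary points $(\tau_0,\rho_{\mathcal{M}_{\tau_0}})$ invoke Lemmas \ref{firstlemmatau}, \ref{taulimlemma1}, \ref{taulimlemma2} to control the limit. One clarification: you describe the boundary step as directly evaluating the difference quotient, but the lemmas you cite compute the \emph{limit of the derivative} $\lim_{\tau\to\tau_0}\partial_\tau\mathbf{f}(\tau,\rho_{\mathcal{M}_{\tau_0}})$ along $\tau\neq\tau_0$; the paper bridges from ``limit of derivative exists'' to ``derivative exists at $\tau_0$'' via the elementary L'H\^opital observation recorded in Remark \ref{derivative}, which you should make explicit rather than appeal to matching one-sided difference quotients.
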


\noindent The lemmas that follow in this section rely on the differentiability of $t_{0}(\tau, \rho)$, on $D$ except possibly where $t_{0}=0$. This follows from the Implicit Function theorem and Eq.\eqref{properAlt2}. 

\begin{lemma}\label{firstlemmatau}
Let $a(t)$ be strongly regular. Suppose that there is a constant $C>0$ such that

\begin{equation}\label{abound}
\left|\frac{\dddot{a}(t)a^2(t)}{\dot{a}^3(t)}\right|\leqslant C.
\end{equation}
Then for any $\tau>0$, $f(\tau,t_0)$ is differentiable with respect to $\tau$, and
\begin{equation}\label{ftauderivative}
\pder{f}{\tau}(\tau,t_0)=\begin{cases}
I_1(\tau,t_0)+I_2(\tau,t_0) &\text{ if }t_0>0\\
2\tpder{}{\tau}f(\tau,0)-I_1(\tau,-t_0)-I_2(\tau,-t_0)& \text{ if } t_0<0
\end{cases}
\end{equation}
\noindent where
\begin{align}\label{I1}
I_1(\tau,t_0)&=-\frac{\dot{a}(\tau)}{a(\tau)}\int_{t_0}^\tau\left[3\frac{\ddot{a}^2(t)a(t)}{\dot{a}^4(t)}-\frac{\dddot{{}a}(t)a(t)}{\dot{a}^3(t)}\right]\left[\frac{\sqrt{a^2(\tau)-a^2(t_0)}}{\sqrt{a^2(\tau)-a^2(t)}}-1\right]dt,
\end{align}

\begin{align}\label{I2}
I_2(\tau,t_0)&=\frac{\dot{a}(\tau)}{a(\tau)}\int_{t_0}^\tau\frac{\ddot{a}(t)}{\dot{a}^2(t)}\left[\frac{a^2(\tau)}{\sqrt{a^2(\tau)-a^2(t)}\sqrt{a^2(\tau)-a^2(t_0)}}-1\right]dt
\end{align}

\noindent and

\begin{equation}\label{partialtauf0} \pder{f}{\tau}(\tau,0)=\frac{\dot{a}(\tau)}{a(\tau)}\int_0^\tau\left[\frac{\ddot{a}(t)}{\dot{a}^2(t)}+\frac{\dddot{{}a}(t)a(t)}{\dot{a}^3(t)}-3\frac{\ddot{a}^2(t)a(t)}{\dot{a}^4(t)}\right]\left[\frac{a(\tau)}{\sqrt{a^2(\tau)-a^2(t)}}-1\right]dt
\end{equation}
\end{lemma}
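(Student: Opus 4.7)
The plan is to reduce the improper integral in \eqref{gint} to a proper one by reparametrizing so that the upper-endpoint singularity is pinned at a $\tau$-independent location, and then to differentiate under the integral sign directly. For $t_0>0$ I substitute $u=a(t)/a(\tau)$, which sends $\sqrt{a^{2}(\tau)-a^{2}(t)}\mapsto a(\tau)\sqrt{1-u^{2}}$ and converts \eqref{gint} into
\[
f(\tau,t_0) \;=\; \int_{u_0(\tau)}^{1}\frac{a(\tau)\,\ddot a(t(u,\tau))}{\dot a^{3}(t(u,\tau))}\left[\frac{\sqrt{1-u_0^{2}(\tau)}}{\sqrt{1-u^{2}}}-1\right]du,
\]
with $u_0(\tau)=a(t_0)/a(\tau)$ and $t(u,\tau)$ determined implicitly by $a(t)=u\,a(\tau)$. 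The integrand retains an integrable $(1-u)^{-1/2}$ singularity at $u=1$, but critically this singular factor no longer depends on $\tau$, so Leibniz's rule applies without obstruction.

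Differentiating under the integral sign, the boundary term at the fixed endpoint $u=1$ vanishes, and the boundary term at $u_0(\tau)$ vanishes because the bracket equals zero there. From $a(t(u,\tau))=u\,a(\tau)$ I obtain $\partial_\tau t = a(t)\dot a(\tau)/[a(\tau)\dot a(t)]$, which combined with the chain rule gives
\[
\partial_\tau\!\left(\frac{a(\tau)\,\ddot a(t)}{\dot a^{3}(t)}\right) = \frac{\dot a(\tau)}{\dot a(t)}\!\left[\frac{\ddot a(t)}{\dot a^{2}(t)} + \frac{\dddot{a}(t)\,a(t)}{\dot a^{3}(t)} - 3\,\frac{\ddot a^{2}(t)\,a(t)}{\dot a^{4}(t)}\right],
\]
while differentiating $\sqrt{1-u_0^{2}(\tau)}$ produces a term proportional to $a^{2}(t_0)\,\dot a(\tau)/[a(\tau)\sqrt{a^{2}(\tau)-a^{2}(t_0)}]$. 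Converting back to $t$ via $du = \dot a(t)\,dt/a(\tau)$ and rearranging using the identity $\int_{t_0}^{\tau}\ddot a/\dot a^{2}\,dt = 1/\dot a(t_0)-1/\dot a(\tau)$ splits the result into the two integrals $I_1(\tau,t_0)$ and $I_2(\tau,t_0)$ in the form \eqref{I1}--\eqref{I2}.

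For $t_0<0$ I differentiate the defining relation $f(\tau,t_0)=2f(\tau,0)-f(\tau,-t_0)$, applying the $t_0>0$ formula at $-t_0>0$ together with a separate derivation of $\partial_\tau f(\tau,0)$. The latter I obtain by taking $t_0\to 0^+$ in $I_1+I_2$: the factor $\sqrt{a^{2}(\tau)-a^{2}(t_0)}$ tends to $a(\tau)$, both kernels collapse to the common factor $[a(\tau)/\sqrt{a^{2}(\tau)-a^{2}(t)}-1]$, and combining the bracketed coefficients reproduces \eqref{partialtauf0}. The interchange of limit and integration is justified by dominated convergence with an integrable majorant supplied by Lemma \ref{continuitylemma}, after using strong regularity to bound $\ddot a\,a/\dot a^{2}$ and the hypothesis \eqref{abound} to bound the $\dddot a\,a/\dot a^{3}$ contribution.

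The main obstacle is the $(\tau-t)^{-1/2}$ singularity at the upper endpoint of \eqref{gint}: naive termwise differentiation in the original $t$-variable produces a kernel of order $(\tau-t)^{-3/2}$, which is not Lebesgue integrable. The substitution $u=a(t)/a(\tau)$ is engineered precisely to absorb the $\tau$-dependence of the singularity's location into the variable of integration, after which all $\tau$-derivatives act on smooth coefficients and generate only integrable $(1-u)^{-1/2}$ factors. A secondary difficulty is the passage $t_0\to 0^+$ required for \eqref{partialtauf0}, where \eqref{abound} is essential in order to dominate the $\dddot a$ contribution uniformly near the big bang.
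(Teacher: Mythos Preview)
Your approach is essentially the paper's: both pin the moving singular endpoint by a change of variable (you use $u=a(t)/a(\tau)$; the paper uses $\sigma=(a(\tau)/a(t))^{2}=1/u^{2}$) and then differentiate under the integral sign via dominated convergence, handling $t_0<0$ by differentiating the defining relation $f(\tau,t_0)=2f(\tau,0)-f(\tau,-t_0)$.

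The one substantive difference is your treatment of $\partial_\tau f(\tau,0)$. The paper applies its substitution directly to $f(\tau,0)$ (the $\sigma$-integral then runs over $[1,\infty)$) and differentiates that expression. You instead propose to recover \eqref{partialtauf0} as $\lim_{t_0\to 0^{+}}\bigl(I_1(\tau,t_0)+I_2(\tau,t_0)\bigr)$, justifying the passage to the limit by dominated convergence. As stated this leaves a small gap: dominated convergence shows the limit of the $\tau$-derivatives exists and equals the right-hand side of \eqref{partialtauf0}, but it does not by itself show that $f(\tau,0)$ is differentiable in $\tau$ with that value. The easiest repair is to observe that your substitution applies verbatim at $t_0=0$ (with $u_0=0$, integral over $[0,1]$), so the same Leibniz argument yields \eqref{partialtauf0} directly---which is exactly what the paper does. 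A minor side remark: the split into $I_1$ and $I_2$ does not use the antiderivative identity $\int\ddot a/\dot a^{2}\,dt=1/\dot a(t_0)-1/\dot a(\tau)$ you cite, but rather the algebraic identity
\[
\frac{\sqrt{a^{2}(\tau)-a^{2}(t_0)}}{\sqrt{a^{2}(\tau)-a^{2}(t)}}+\frac{a^{2}(t_0)}{\sqrt{a^{2}(\tau)-a^{2}(t_0)}\sqrt{a^{2}(\tau)-a^{2}(t)}}=\frac{a^{2}(\tau)}{\sqrt{a^{2}(\tau)-a^{2}(t_0)}\sqrt{a^{2}(\tau)-a^{2}(t)}}.
\]
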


\begin{lemma}\label{taulimlemma1}
Assume that the conditions of Lemma \ref{firstlemmatau} hold.
Then for any $\tau_0>0$ we have that
\begin{equation}
\lim_{\tau\to\tau_0}\pder{f}{\tau}(\tau,t_0(\tau))=\pder{f}{\tau}(\tau_0,0),
\end{equation}
where $\partial_\tau f(\tau_0,0)$ is given by Eq. \eqref{partialtauf0} and $t_0(\tau)\equiv t_0(\tau,\rho_{\mathcal{M}_{\tau_0}})$.
\end{lemma}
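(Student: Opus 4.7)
The plan is to reduce the claimed limit to the joint continuity of $I_1(\tau,t_0)+I_2(\tau,t_0)$ at the boundary point $(\tau_0,0)$ and then to compose with the continuity of $t_0(\tau,\rho)$ supplied by Lemma \ref{t0cont}.  Since $\rho_{\mathcal{M}_\tau}$ is strictly increasing in $\tau$ by Eq.\eqref{dradius}, while $\rho_{\mathcal{M}_{\tau_0}}$ is held fixed in the definition $t_0(\tau)=t_0(\tau,\rho_{\mathcal{M}_{\tau_0}})$, one has $t_0(\tau)>0$ for $\tau>\tau_0$, $t_0(\tau_0)=0$, and $t_0(\tau)<0$ for $\tau<\tau_0$, and in either case $t_0(\tau)\to 0$ as $\tau\to\tau_0$.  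A direct substitution of $t_0=0$ in \eqref{I1}--\eqref{I2}, using $\sqrt{a^2(\tau)-a^2(0)}=a(\tau)$, shows that $I_1(\tau,0)+I_2(\tau,0)$ reproduces the right-hand side of \eqref{partialtauf0}, so the right-sided ($\tau>\tau_0$) version of the claim is precisely a joint-continuity statement for $I_1+I_2$ at $(\tau_0,0)$.

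For the regime $\tau>\tau_0$ I would establish this joint continuity by dominated convergence.  Strong regularity yields $|\ddot a(t)/\dot a^2(t)|\leq K/a(t)$, its square gives $|\ddot a^2(t)a(t)/\dot a^4(t)|\leq K^2/a(t)$, and the hypothesis \eqref{abound} gives $|\dddot{{}a}(t)a(t)/\dot a^3(t)|\leq C/a(t)$; hence every prefactor multiplying a bracketed term in $I_1$ and $I_2$ is bounded by $K'/a(t)$ for some $K'$ independent of $(\tau,t_0)$ in a compact neighborhood of $(\tau_0,0)$.  Lemmas \ref{intbound} and \ref{continuitylemma} then furnish an integrable majorant for both integrands that is uniform in $(\tau,t_0)$ over such a neighborhood.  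Pointwise convergence of the integrands is immediate, so dominated convergence yields continuity of $I_1+I_2$ at $(\tau_0,0)$; composition with the continuous map $\tau\mapsto t_0(\tau)$ then gives the limit from the right.

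For $\tau<\tau_0$, where $t_0(\tau)<0$, I would invoke the second branch of \eqref{ftauderivative} to write
\begin{equation}
\pder{f}{\tau}(\tau,t_0(\tau))=2\tpder{}{\tau}f(\tau,0)-I_1(\tau,-t_0(\tau))-I_2(\tau,-t_0(\tau)).\notag
\end{equation}
Since $-t_0(\tau)\to 0^+$, the joint continuity just proved gives $I_1(\tau,-t_0(\tau))+I_2(\tau,-t_0(\tau))\to \pder{f}{\tau}(\tau_0,0)$.  A nearly identical dominated-convergence argument applied directly to \eqref{partialtauf0} shows continuity of $\tau\mapsto\pder{f}{\tau}(\tau,0)$, so $2\pder{f}{\tau}(\tau,0)\to 2\pder{f}{\tau}(\tau_0,0)$, and subtracting yields the common limiting value $\pder{f}{\tau}(\tau_0,0)$.

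The hard part will be the construction of a single integrable majorant for the $I_2$ integrand, whose bracketed factor $a^2(\tau)/[\sqrt{a^2(\tau)-a^2(t)}\sqrt{a^2(\tau)-a^2(t_0)}]-1$ does not vanish as $t\to\tau^-$ and is therefore not directly controlled by the estimate of Lemma \ref{intbound}.  The trick is to exploit the cancellation $\sqrt{a^2(\tau)-a^2(t_0)}\to a(\tau)$ as $t_0\to 0$ by decomposing this factor as $(a(\tau)/\sqrt{a^2(\tau)-a^2(t)}-1)$ plus a remainder of order $a^2(t_0)/a^2(\tau)$ times $a(\tau)/\sqrt{a^2(\tau)-a^2(t)}$; the principal part is then bounded by Lemma \ref{intbound} and the remainder, once multiplied by $K/a(t)$, is handled by Lemma \ref{continuitylemma}(b).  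The $I_1$ integrand is easier because its bracketed factor $\sqrt{a^2(\tau)-a^2(t_0)}/\sqrt{a^2(\tau)-a^2(t)}-1$ vanishes at $t=\tau$ uniformly for $t_0$ in a neighborhood of zero, so Lemma \ref{continuitylemma}(a) applies directly.
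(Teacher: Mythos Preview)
Your proposal follows the same architecture as the paper's proof: reduce to $I_j(\tau,t_0(\tau))\to I_j(\tau_0,0)$ via dominated convergence, handle $\tau<\tau_0$ through the second branch of \eqref{ftauderivative}, and verify continuity of $\tau\mapsto\partial_\tau f(\tau,0)$ separately. The one real difference is in the treatment of $I_2$: the paper splits the \emph{domain} of integration at a fixed $\delta\in(0,\tau_0)$, bounding the integrand on $[t_0,\delta]$ by the constant $K\,a(\delta)/(a^2(\tau)-a^2(\delta))$ and on $[\delta,\tau]$ via a compact-set bound $|\ddot a/\dot a^3|\le D$; you instead split the \emph{integrand} algebraically as $AB-1=(A-1)+A(B-1)$. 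Your decomposition is valid, but Lemma~\ref{continuitylemma}(b) is not the right tool for the remainder---that lemma only controls tail integrals $\int_{\tau_0}^\tau$, not the full integral from $t_0$. What actually makes the remainder vanish is $(B-1)\int_{t_0}^\tau\frac{K}{a(t)}\frac{a(\tau)}{\sqrt{a^2(\tau)-a^2(t)}}\,dt=K(B-1)\chi_{t_0}(\tau)\to 0$, and this requires the separate fact $a^2(t_0)\chi_{t_0}(\tau)\to 0$ (cf.\ Eq.~\eqref{H(0+)}). Likewise the $I_1$ majorant is furnished by Lemma~\ref{intbound}, not Lemma~\ref{continuitylemma}(a). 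These are lemma-citation slips rather than structural gaps; once corrected, your route and the paper's both go through, and yours has the mild advantage of avoiding the auxiliary cutoff $\delta$. One last point: both arguments need the \emph{generalized} dominated convergence theorem, since the natural majorants depend on $\tau$; your phrase ``uniform in $(\tau,t_0)$'' is a slight overstatement and should be read in that sense.
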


\begin{lemma}\label{taulimlemma2}
Under the assumptions of Lemma \ref{firstlemmatau} and one of the conditions of Lemma \ref{c1lemma2}, we have that 
\begin{equation}
\lim_{\tau\to\tau_0}\pder{f}{t_0}(\tau,t_0(\tau))\pder{t_0}{\tau}(\tau,t_0(\tau))
\end{equation}
exists and is finite, where as in Lemma \ref{taulimlemma1}, $t_0(\tau)\equiv t_0(\tau,\rho_{\mathcal{M}_{\tau_0}})$.
\end{lemma}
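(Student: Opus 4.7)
The plan is to rewrite the product $\partial_{t_0}f\cdot\partial_\tau t_0$ in terms of quantities already known to extend continuously to the big bang. Implicit differentiation of the identity $\rho(\tau,t_0(\tau,\rho))\equiv\rho$ with respect to $\rho$ and $\tau$ separately gives $\partial_\rho t_0=1/\partial_{t_0}\rho$ and $\partial_\tau t_0=-\partial_\tau\rho/\partial_{t_0}\rho$, and together with $\partial_\rho\mathbf{f}=\partial_{t_0}f\cdot\partial_\rho t_0$ this yields the exact identity
\begin{equation*}
\partial_{t_0}f(\tau,t_0)\cdot\partial_\tau t_0(\tau,t_0)=-\,\partial_\rho\mathbf{f}(\tau,\rho)\cdot\partial_\tau\rho(\tau,t_0)
\end{equation*}
at every point where $t_0\neq0$.

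To handle $\partial_\tau\rho$ I would first integrate by parts in Eq.~\eqref{properAlt2} using $\tfrac{a(t)}{\sqrt{a^2(\tau)-a^2(t)}}=-\tfrac{1}{\dot a(t)}\partial_t\sqrt{a^2(\tau)-a^2(t)}$; the boundary term at $t=\tau$ vanishes, producing
\begin{equation*}
\rho(\tau,t_0)=\frac{\sqrt{a^2(\tau)-a^2(t_0)}}{\dot a(t_0)}-\int_{t_0}^\tau\frac{\ddot a(t)}{\dot a^2(t)}\sqrt{a^2(\tau)-a^2(t)}\,dt,
\end{equation*}
whose integrand is bounded at the upper endpoint. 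Differentiating under the integral and writing $J(\tau,t_0)$ for the integral appearing in Eq.~\eqref{dt0g}, one obtains
\begin{equation*}
\partial_\tau\rho(\tau,t_0)=a(\tau)\dot a(\tau)\left[\frac{1}{\dot a(t_0)\sqrt{a^2(\tau)-a^2(t_0)}}-J(\tau,t_0)\right].
\end{equation*}
The bracket $B$ is exactly the one appearing in Eq.~\eqref{oldg}, and comparing with that formula gives the key identification
\begin{equation*}
\partial_\tau\rho(\tau,t_0)=\frac{a(\tau)}{\sqrt{a^2(\tau)-a^2(t_0)}}\,\mathrm{sgn}(B)\,\sqrt{-g_{\tau\tau}(\tau,\rho)}.
\end{equation*}

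Taking $\tau\to\tau_0^+$ along $\rho=\rho_{\mathcal{M}_{\tau_0}}$ (so that $t_0(\tau)\to0^+$), the prefactor $a(\tau)/\sqrt{a^2(\tau)-a^2(t_0)}\to1$, while $\sqrt{-g_{\tau\tau}}$ tends to the strictly positive finite value $\sqrt{-g_{\tau\tau}(\tau_0,\rho_{\mathcal{M}_{\tau_0}})}$ by Theorems~\ref{c0thrm} and~\ref{gcontin}; since $-g_{\tau\tau}$ is nonzero at the limit, $B$ is bounded away from zero there and $\mathrm{sgn}(B)$ is locally constant. Hence $\partial_\tau\rho(\tau,t_0(\tau))$ has a finite limit. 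Simultaneously $\partial_\rho\mathbf{f}(\tau,\rho_{\mathcal{M}_{\tau_0}})=\dot a(t_0(\tau))J(\tau,t_0(\tau))$ has a finite limit as $\tau\to\tau_0^+$: in the case $\dot a(0^+)>0$ this follows from a dominated convergence argument as in Lemma~\ref{c1lemma2}(b), while in the case $\dot a(0)=0$ the formula~\eqref{casedrhog} gives $\dot a(t_0)J\to1/a(\tau_0)$ by the same techniques used to establish Theorem~\ref{drhometric'}. For $\tau\to\tau_0^-$, so that $t_0(\tau)<0$, the even extension of $a$ provides the symmetry $\rho(\tau,t_0)=2\rho_{\mathcal{M}_\tau}-\rho(\tau,-t_0)$ and hence $\partial_\tau\rho(\tau,t_0)=2\,d\rho_{\mathcal{M}_\tau}/d\tau-\partial_\tau\rho(\tau,-t_0)$, with $d\rho_{\mathcal{M}_\tau}/d\tau$ given by Eq.~\eqref{dradius}; this reduces the analysis to the $t_0>0$ side already handled, and the one-sided limits from above and below are easily verified to agree.

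The hardest step is the integration-by-parts reformulation of $\rho$, which moves the endpoint singularity out of the integrand and allows $\partial_\tau\rho$ to be identified with $\sqrt{-g_{\tau\tau}}$ up to an explicit prefactor. Once this identification is made, the delicate case $\dot a(0^+)=0$---in which $\partial_{t_0}f$ blows up and $\partial_\tau t_0$ vanishes individually---is tamed automatically, because the product is reassembled into the factor $\sqrt{-g_{\tau\tau}}$, which is already known from Theorem~\ref{c0thrm} to admit a finite, strictly positive extension to the big bang.
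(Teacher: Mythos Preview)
Your proof is correct and takes a genuinely different route from the paper for the factor $\partial_\tau\rho$. Both arguments begin from the same algebraic reduction: the paper writes $\rho=\rho_{\mathcal{M}_\tau}-G(\tau,t_0)$ with $G(\tau,t_0)=\int_0^{t_0}a(t)/\sqrt{a^2(\tau)-a^2(t)}\,dt$, obtaining the product as $-\dot a(|t_0|)\,J\cdot\bigl[d\rho_{\mathcal{M}_\tau}/d\tau-\partial_\tau G\bigr]$, which is exactly your $-\partial_\rho\mathbf{f}\cdot\partial_\tau\rho$. The paper then shows directly via dominated convergence that $\partial_\tau G\to0$ and $d\rho_{\mathcal{M}_\tau}/d\tau\to d\rho_{\mathcal{M}_{\tau_0}}/d\tau$. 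Your integration-by-parts step instead identifies $\partial_\tau\rho$ with $\tfrac{a(\tau)}{\sqrt{a^2(\tau)-a^2(t_0)}}\sqrt{-g_{\tau\tau}}$ (the sign ambiguity is harmless: $B>0$ throughout by continuity from $\rho=0$, where $B=1/\dot a(\tau)$, together with $-g_{\tau\tau}>0$), reducing the question to the already-established continuity of $g_{\tau\tau}$ at the big bang. This is more conceptual and, pleasantly, anticipates the identity $\sqrt{-g_{\tau\tau}(\tau,\rho_{\mathcal{M}_\tau})}=d\rho_{\mathcal{M}_\tau}/d\tau$ that the paper only invokes later in Theorem~\ref{lightlike}.

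One citation should be sharpened. In the case $\dot a(0)=0$ you appeal to ``the same techniques used to establish Theorem~\ref{drhometric'}'' for $\dot a(t_0(\tau))J(\tau,t_0(\tau))\to1/a(\tau_0)$, but Theorem~\ref{drhometric'} only gives that limit along $t_0\to0$ at \emph{fixed} $\tau$; here both $\tau$ and $t_0$ move. What is needed is joint continuity of $(\tau,x)\mapsto\dot a(x)J(\tau,x)$ at $(\tau_0,0)$, which is precisely case~(b) in the proof of Theorem~\ref{rhoderivg}. That theorem does not depend on Lemma~\ref{taulimlemma2}, so citing it here is non-circular. (The paper's own proof is equally terse at this point, simply invoking ``generalized dominated convergence and L'H\^{o}pital's rule.'')
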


\noindent Now that we have established the existence of both partial derivatives of $g_{\tau\tau}$ on the domain $D$ we proceed to show that $g_{\tau\tau}$ is continuously differentiable on $D$.

\begin{theorem}\label{rhoderivg}
Let $a(t)$ be strongly regular and suppose that one of the conditions of Lemma \ref{c1lemma2} hold. Then the partial derivative $\partial_\rho g_{\tau\tau}$ given by Eq. \eqref{drhometric} is continuous on $D$.
\end{theorem}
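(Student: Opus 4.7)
From Eq.~\eqref{newg3}, differentiation yields $\partial_\rho g_{\tau\tau}=2(1-\dot a(\tau)\mathbf{f}(\tau,\rho))\,\dot a(\tau)\,\partial_\rho\mathbf{f}(\tau,\rho)$, which agrees with Eq.~\eqref{drhometric} when $1-\dot a(\tau)\mathbf f\ge 0$ (as holds near the ``big bang set'' $\mathcal B:=\{(\tau,\rho_{\mathcal M_\tau}):\tau>0\}$ by Theorem~\ref{c0thrm}). Since $\dot a$ is continuous and $\mathbf{f}(\tau,\rho)=f(\tau,t_0(\tau,\rho))$ is continuous on $D$ (by Lemma~\ref{t0cont} together with the dominated convergence argument underlying Theorem~\ref{c0thrm}), the task reduces to showing that the piecewise-defined $\partial_\rho\mathbf{f}$ from Eqs.~\eqref{drhometric2}--\eqref{casedrhog} is continuous on $D$. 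I would handle $D\setminus\mathcal B$ and $\mathcal B$ separately.

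\textbf{Away from $\mathcal B$.} On $D\setminus\mathcal B$, the Implicit Function Theorem applied to Eq.~\eqref{properAlt2} makes $t_0(\tau,\rho)$ smooth, with $|t_0|$ bounded away from $0$ on compact neighborhoods. The substitution $u=a(t)/a(\tau)$ converts the integral in Eq.~\eqref{drhometric2} to $\int_{a(|t_0|)/a(\tau)}^{1}(\ddot a/\dot a^3)|_{t=b(a(\tau)u)}\,du/\sqrt{1-u^2}$, whose integrand is continuous in $(\tau,u)$ on the region $a(|t_0|)/a(\tau)\ge\delta>0$ and has only the integrable singularity $1/\sqrt{1-u^2}$ at $u=1$. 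Standard dominated convergence on compact $(\tau,t_0)$-sets yields joint continuity; composing with $(\tau,\rho)\mapsto t_0(\tau,\rho)$ and the continuous factor $\dot a(|t_0|)$ then gives continuity of $\partial_\rho\mathbf f$ throughout $D\setminus\mathcal B$.

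\textbf{At $\mathcal B$.} Fix $(\tau_0,\rho_{\mathcal M_{\tau_0}})\in\mathcal B$ and take $(\tau_n,\rho_n)\to(\tau_0,\rho_{\mathcal M_{\tau_0}})$; set $t_n:=t_0(\tau_n,\rho_n)$, so $t_n\to 0$ by Lemma~\ref{t0cont} and Eq.~\eqref{dradius}. In case (b) of Lemma~\ref{c1lemma2} ($\dot a(0^+)>0$), $\dot a(|t_n|)\to\dot a(0^+)$, and Lemma~\ref{c1lemma2}(b) combined with a splitting of the integration interval at some fixed $\tau_*\in(0,\tau_0)$ (dominated convergence on $[\tau_*,\tau_n]$ via the substitution $u=a(t)/a(\tau_n)$, and uniform convergence of the parametric family on $[|t_n|,\tau_*]$) passes to the joint limit, matching Eq.~\eqref{casedrhog}. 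The delicate case is (a), $\dot a(0)=0$, where the integral diverges but $\dot a(|t_n|)\to 0$ produces a $0\cdot\infty$ indeterminate form. Here I would exploit the antiderivative identity $\int_{|t_n|}^{\tau_n}(\ddot a/\dot a^2)\,dt=1/\dot a(|t_n|)-1/\dot a(\tau_n)$ together with the decomposition
\[
\frac{1}{\sqrt{a^2(\tau_n)-a^2(t)}}=\frac{1}{a(\tau_n)}+\frac{1}{a(\tau_n)}\!\left[\frac{a(\tau_n)}{\sqrt{a^2(\tau_n)-a^2(t)}}-1\right].
\]
The first summand contributes
\[
\frac{\dot a(|t_n|)}{a(\tau_n)}\!\left[\frac{1}{\dot a(|t_n|)}-\frac{1}{\dot a(\tau_n)}\right]=\frac{1}{a(\tau_n)}-\frac{\dot a(|t_n|)}{a(\tau_n)\dot a(\tau_n)}\longrightarrow\frac{1}{a(\tau_0)},
\]
while the second is $\dot a(|t_n|)/a(\tau_n)$ times an integral bounded, via strong regularity $|\ddot a/\dot a^2|\le K/a$ and Lemma~\ref{intbound}, by $K/\dot a(\tau_n)$; since $\dot a(|t_n|)\to 0$ this summand vanishes. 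The combined limit $1/a(\tau_0)$ matches the second branch of Eq.~\eqref{casedrhog}, completing the proof.

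\textbf{Main obstacle.} The hardest step is the joint limit in case (a): balancing the vanishing prefactor $\dot a(|t_n|)$ against the divergent integral while $\tau_n\to\tau_0$. The explicit decomposition of $1/\sqrt{a^2(\tau_n)-a^2(t)}$ above isolates the singular leading behaviour cleanly, and Lemma~\ref{intbound} provides the uniform bound needed to make the remainder vanish in the joint limit. A secondary subtlety in case (b) is that $\ddot a/\dot a^2$ need not be absolutely integrable near $0$; the substitution $u=a(t)/a(\tau)$ exposes the true integrable structure $(\ddot a/\dot a^3)/\sqrt{1-u^2}$, which accommodates joint passage to the limit.
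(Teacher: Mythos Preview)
Your proof is correct, and the overall structure (reduce to continuity of $\partial_\rho\mathbf f$ at $(\tau_0,\rho_{\mathcal M_{\tau_0}})$, handle the two cases of Lemma~\ref{c1lemma2} separately) matches the paper's. The real difference is in the $\dot a(0)=0$ case. The paper introduces $F_0(\tau,x)=\dot a(x)\int_x^\tau(\ddot a/\dot a^2)\,dt/\sqrt{a^2(\tau)-a^2(t)}$ and proves continuity at $(\tau_0,0)$ via a two-stage triangle inequality $|F_0(\tau,x)-F_0(\tau_0,x)|+|F_0(\tau_0,x)-1/a(\tau_0)|$; the first difference is then controlled by a further splitting at a point $\epsilon$ where $\ddot a\ge 0$ on $(0,\epsilon)$, explicitly invoking the sign hypothesis in Lemma~\ref{c1lemma2}(a) so that $\dot a(x)\int_x^\epsilon(\ddot a/\dot a^2)\,dt=1-\dot a(x)/\dot a(\epsilon)$ is bounded. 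Your additive decomposition of $1/\sqrt{a^2(\tau_n)-a^2(t)}$ into $1/a(\tau_n)$ plus a remainder does the same work in one stroke: the leading piece integrates exactly to $1/a(\tau_n)-\dot a(|t_n|)/(a(\tau_n)\dot a(\tau_n))\to 1/a(\tau_0)$, while strong regularity and Lemma~\ref{intbound} bound the remainder by $K\dot a(|t_n|)/(a(\tau_n)\dot a(\tau_n))\to 0$. This is cleaner, handles the joint limit in $(\tau_n,t_n)$ directly rather than sequentially, and never uses the sign condition $\ddot a\ge 0$ near $0$ (though it is part of the hypothesis). For the $\dot a(0^+)>0$ case the two arguments are essentially the same: the paper splits with Lemma~\ref{continuitylemma}(b) at $\tau_0$, you split at an interior $\tau_*$ and change variables; either route works.
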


\begin{theorem}\label{tauderivg}
Suppose that the conditions of Lemma \ref{firstlemmatau} and one of the conditions of Lemma \ref{c1lemma2} hold. Then the partial derivative $\partial_\tau g_{\tau\tau}$ given by Eq. \eqref{tauderivformula} is continuous on $D$.
\end{theorem}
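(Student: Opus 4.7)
The plan is to use Theorems \ref{gcontin} and \ref{c0thrm} together with Lemmas \ref{firstlemmatau}--\ref{taulimlemma2} to verify continuity of $\partial_\tau g_{\tau\tau}$, as given by \eqref{tauderivformula}, throughout $D$. Since $\sqrt{-g_{\tau\tau}}$ is continuous on $D$ by Theorem \ref{gcontin} and $\ddot a(\tau)\mathbf{f}(\tau,\rho)$ is continuous on $D$ via continuity of $\mathbf{f}$ (Theorem \ref{c0thrm} together with the piecewise definition \eqref{gint}), the task reduces to proving continuity of $\partial_\tau\mathbf{f}(\tau,\rho)$ on $D$. Since the curve $\rho = \rho_{\mathcal{M}_\tau}$ coincides with the big bang submanifold $\{t_0 = 0\}$ and separates the original region ($t_0 > 0$, $\rho < \rho_{\mathcal{M}_\tau}$) from the pre-big bang extension ($t_0 < 0$, $\rho > \rho_{\mathcal{M}_\tau}$), I will establish continuity in two stages: (i) at interior points of $D$ with $t_0 \neq 0$, and (ii) at boundary points $(\tau_0, \rho_{\mathcal{M}_{\tau_0}})$ where the two-case formula for $\partial_\tau\mathbf{f}(\tau,\rho_{\mathcal{M}_\tau})$ takes over.

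Stage (i) is largely bookkeeping. The Implicit Function Theorem applied to \eqref{properAlt2} yields smoothness of $t_0(\tau,\rho)$ and $\partial_\tau t_0(\tau,\rho)$ off $\{t_0 = 0\}$. The integrands defining $I_1$ and $I_2$ in \eqref{I1} and \eqref{I2} are jointly continuous in $(\tau,t_0)$ for $t_0 > 0$, and Lemma \ref{intbound} together with the uniform bound \eqref{abound} furnishes integrable dominations; a dominated convergence argument gives continuity of $\partial_\tau f(\tau,t_0)$ on $\{t_0 > 0\}$. The symmetric definition in \eqref{gint} extends this to $\{t_0 < 0\}$ via $\partial_\tau f(\tau,t_0) = 2\partial_\tau f(\tau,0) - \partial_\tau f(\tau,-t_0)$. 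The explicit formula \eqref{dt0g} for $\partial_{t_0}f$, which uses $|t_0|$ and the even extension of $a$, is manifestly continuous on $\{t_0 \neq 0\}$. Combining via \eqref{dtauf} gives continuity of $\partial_\tau\mathbf{f}$ on $\{t_0 \neq 0\}$.

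Stage (ii) is the substantive obstacle, and is where Lemmas \ref{taulimlemma1} and \ref{taulimlemma2} become essential. Approach a boundary point $(\tau_0,\rho_{\mathcal{M}_{\tau_0}})$ along an arbitrary curve $\tau \mapsto (\tau, \rho(\tau))$ with $\rho(\tau_0) = \rho_{\mathcal{M}_{\tau_0}}$, so that $t_0(\tau,\rho(\tau)) \to 0$ as $\tau \to \tau_0$ by Lemma \ref{t0cont}. Lemma \ref{taulimlemma1} applied along this curve yields $\lim_{\tau \to \tau_0}\partial_\tau f(\tau,t_0(\tau,\rho(\tau))) = \partial_\tau f(\tau_0,0)$, and Lemma \ref{taulimlemma2} yields existence of the finite limit of $\partial_{t_0}f(\tau,t_0(\tau,\rho(\tau)))\,\partial_\tau t_0(\tau,\rho(\tau))$; crucially, in case (a) of Lemma \ref{c1lemma2} neither individual factor has a limit (the integral in \eqref{dt0g} diverges while $\dot a(|t_0|)$ vanishes), and only their product is controlled. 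These limits depend only on $(\tau_0, 0)$ and not on the approach curve, so $\partial_\tau\mathbf{f}(\tau,\rho)$ extends continuously to the boundary point with the path-independent value $\partial_\tau f(\tau_0, 0) + \lim \partial_{t_0}f\cdot\partial_\tau t_0$. To verify that this coincides with the two-case formula stated in the theorem, one differentiates $\rho = \rho_{\mathcal{M}_\tau}$ implicitly through \eqref{properAlt2} to express $\partial_\tau t_0$ on the boundary in terms of $d\rho_{\mathcal{M}_\tau}/d\tau$; the $\dot a(0)=0$ case produces the prefactor $1/a(\tau_0)$ through a cancellation between the $a(t_0)$ factor in \eqref{dt0g} and the singular factor in $\partial_\tau t_0$, while the $\dot a(0^+) > 0$ case leaves $\dot a(0^+)$ as a finite factor multiplying $d\rho_{\mathcal{M}_\tau}/d\tau$ and the convergent integral from Lemma \ref{c1lemma2}(b). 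The principal technical difficulty, concentrated at this stage, is that $\partial_\tau\mathbf{f}$ is not a sum of individually continuous summands on the boundary, and its continuity hinges on the precise cancellation between a diverging and a vanishing factor captured by Lemma \ref{taulimlemma2}.
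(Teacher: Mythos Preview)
Your overall reduction is sound: continuity of $\partial_\tau g_{\tau\tau}$ on $D$ reduces, via \eqref{tauderivformula} and Theorem \ref{gcontin}, to continuity of $\partial_\tau\mathbf{f}(\tau,\rho)$ at points $(\tau_0,\rho_{\mathcal{M}_{\tau_0}})$, and splitting this via \eqref{dtauf} into $\partial_\tau f(\tau,t_0)$ and $\partial_{t_0}f\cdot\partial_\tau t_0$ is exactly the right move. (One small correction: continuity of $\mathbf{f}$ on all of $D$ comes from Theorem \ref{gcontin}, not Theorem \ref{c0thrm}, which only gives the one-sided radial limit.)

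The gap is in Stage (ii). Lemmas \ref{taulimlemma1} and \ref{taulimlemma2} are stated and proved only for the \emph{specific} curve $t_0(\tau)\equiv t_0(\tau,\rho_{\mathcal{M}_{\tau_0}})$ with $\rho$ held fixed at $\rho_{\mathcal{M}_{\tau_0}}$; they were designed to establish the one-variable limit needed for \emph{differentiability} in Theorem \ref{dtaug} via Remark \ref{derivative}. Continuity at $(\tau_0,\rho_{\mathcal{M}_{\tau_0}})$ is a two-variable statement, and you cannot simply ``apply Lemma \ref{taulimlemma1} along this curve'' for an arbitrary approach $\tau\mapsto(\tau,\rho(\tau))$: nothing in those lemmas guarantees uniformity in the second variable or path-independence of the limit. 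Your assertion that ``these limits depend only on $(\tau_0,0)$ and not on the approach curve'' is precisely the content that needs proof, and it is not supplied by the lemmas as written.

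The paper closes this gap differently. Rather than invoking Lemmas \ref{taulimlemma1} and \ref{taulimlemma2}, it proves \emph{joint} continuity of $I_1(\tau,x)$, $I_2(\tau,x)$, and the product $F_3(\tau,x)\equiv\partial_{t_0}f(\tau,x)\,\partial_\tau t_0(\tau,x)$ as functions of the two independent variables $(\tau,x)$ at $(\tau_0,0)$, by direct triangle-inequality estimates of the same flavor as in the proof of Theorem \ref{gcontin} (using Lemma \ref{continuitylemma}). For $F_3$ it factors out $F_0(\tau,x)$, whose continuity at $(\tau_0,0)$ was already established in the proof of Theorem \ref{rhoderivg}. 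Continuity of $\partial_\tau\mathbf{f}$ then follows by composing with the continuous map $(\tau,\rho)\mapsto t_0(\tau,\rho)$ from Lemma \ref{t0cont}. This joint-continuity route is what actually delivers path-independence; your proposal would need either to reprove the lemmas in this stronger two-variable form or to adopt the paper's direct estimates.
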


\noindent The following theorem summarizes the results of this section.
\begin{theorem}\label{C1}
Suppose that one of the conditions of Lemma \ref{c1lemma2} hold and that $a(t)$ is strongly regular. Suppose also that there is a constant $C>0$ such that
\begin{equation}\label{abound'}
\left|\frac{\dddot{{}a}(t)a^2(t)}{\dot{a}^3(t)}\right|\leqslant C.
\end{equation}
for all $t$. Then $g_{\tau\tau}(\tau,\rho)$ is continuously differentiable on $D$.
\end{theorem}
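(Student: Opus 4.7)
The plan is to read Theorem \ref{C1} as the direct assembly of Theorems \ref{rhoderivg} and \ref{tauderivg}, combined via the standard multivariable-calculus criterion that a function whose two partial derivatives both exist and are continuous on an open set of $\mathbb{R}^2$ is of class $\mathcal{C}^1$ there. So the proof is essentially a matching of hypotheses followed by an appeal to each of those two theorems in turn, and finally to the total-differential theorem.

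First I would line up the hypotheses for Theorem \ref{rhoderivg}. That result requires strong regularity of $a(t)$ and one of the two cases of Lemma \ref{c1lemma2} (either $\dot{a}(0^+)=0$ with $\ddot{a}\geq 0$ on an interval $(0,\epsilon)$, or $0<\dot{a}(0^+)<\infty$). Both of these are hypotheses of Theorem \ref{C1}, so invoking Theorem \ref{rhoderivg} yields the existence and continuity of $\partial_\rho g_{\tau\tau}$ on the full domain $D=\{(\tau,\rho):\tau>0,\ 0<\rho<2\rho_{\mathcal{M}_\tau}\}$ defined by \eqref{D}, with $\partial_\rho g_{\tau\tau}$ given by \eqref{drhometric}--\eqref{casedrhog}, including at the lightlike slice $\rho=\rho_{\mathcal{M}_\tau}$ where $t_0=0$.

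Next I would verify the hypotheses of Theorem \ref{tauderivg}. Lemma \ref{firstlemmatau} requires strong regularity together with the uniform bound \eqref{abound} on $|\dddot{a}\,a^2/\dot{a}^3|$; this is precisely hypothesis \eqref{abound'} of Theorem \ref{C1}. Coupled with one of the two conditions of Lemma \ref{c1lemma2}, this is exactly what Theorem \ref{tauderivg} needs. Applying it delivers continuity of $\partial_\tau g_{\tau\tau}$ on $D$, with the value at the boundary $\rho=\rho_{\mathcal{M}_\tau}$ obtained by the limiting case split recorded in Theorem \ref{dtaug}. Because $D$ is open and both partials of $g_{\tau\tau}$ exist and are continuous on $D$, the classical total-differential theorem then gives that $g_{\tau\tau}$ is continuously (indeed Fréchet) differentiable on $D$, which is the claim.

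Thus the assembly step itself is routine; the genuine work lies in the two continuity theorems fed into it, and those in turn are the real obstacle and are relegated to the appendix. The delicate point is the behaviour of both $\partial_\rho g_{\tau\tau}$ and $\partial_\tau g_{\tau\tau}$ as $(\tau,\rho)\to(\tau_0,\rho_{\mathcal{M}_{\tau_0}})$ from either side of the big-bang slice $t_0=0$: the integrand in \eqref{gint} contains the factor $\sqrt{a^2(\tau)-a^2(t_0)}/\sqrt{a^2(\tau)-a^2(t)}-1$ which is only conditionally integrable near $t=\tau$, and the formula \eqref{dt0g} for $\partial_{t_0}f$ carries the factor $a(t_0)/\sqrt{a^2(\tau)-a^2(t_0)}$, which vanishes as $t_0\to 0^\pm$ and must be balanced against the potentially divergent integral in \eqref{divrint}. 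These are exactly the phenomena tamed by Lemma \ref{continuitylemma} (via the substitution $x=a(t)/a(\tau)$) and by the case split in Lemmas \ref{c1lemma2}, \ref{taulimlemma1} and \ref{taulimlemma2}. Once those ingredients are available, the proof of Theorem \ref{C1} reduces to the brief matching of hypotheses outlined above, so I would present it in essentially one paragraph: invoke Theorems \ref{rhoderivg} and \ref{tauderivg}, then conclude by the total-differential theorem on the open set $D$.
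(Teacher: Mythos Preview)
Your proposal is correct and matches the paper's approach exactly: the paper presents Theorem \ref{C1} as a summary statement with no separate proof, since it is an immediate consequence of Theorems \ref{rhoderivg} and \ref{tauderivg} (which in turn rest on Theorems \ref{drhometric'} and \ref{dtaug} for existence of the partials). Your hypothesis-matching and appeal to the standard $\mathcal{C}^1$ criterion on the open set $D$ is precisely the intended assembly.
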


\section{$\overline{\mathcal{M}}$ in $1+1$ dimensions}\label{extend}

In this section we define a two dimensional spacetime manifold $\overline{\mathcal{M}}$ that includes pre-big bang events and the $1+1$ dimensional Robertson-Walker spacetime $\mathcal{M}$ as a submanifold.  The extended Robertson-Walker metric on $\overline{\mathcal{M}}$, restricted to the submanifold $\mathcal{M}^{0}$ of cosmological time zero events, is singular in the sense that it is continuously differentiable on $\mathcal{M}^{0}$, but in general not twice differentiable there. We assume throughout this section that the scale factor satisfies the hypotheses of Theorem \ref{C1}.  \\ 

\noindent We begin with the line element for $\mathcal{M}$ in curvature coordinates,   
\begin{equation}\label{frwmetric2d}
ds^2=-dt^2+a^2(t)d\chi^2.
\end{equation}
There is a coordinate singularity in Eq.\eqref{frwmetric} in four spacetime dimensions at $\chi=0$, but this singularity disappears in two spacetime dimensions and $\chi$ may be extended symmetrically to take all real values (for $k=0, -1$).  For the comoving observer at $\chi=0$, the maximal Fermi chart then consists of all $(\tau,\rho)$ with $\tau>0$ and $ |\rho| <\rho_{\mathcal{M}_{\tau}}$ (see Fig. \ref{Milne} and Section \ref{charts}), with the metric given by,
\begin{equation}\label{fermi2}
ds^2=g_{\tau\tau}(\tau,\rho) d\tau^2+d\rho^2,
\end{equation}
where $g_{\tau\tau}$ is defined by Eq.\eqref{newg3}.   From Theorem \ref{C1}, it is easily verified by symmetry that $g_{\tau\tau}$ is continuously differentiable on the set,
\begin{equation}
D'=\{(\tau,\rho): \tau>0, |\rho| <2\rho_{\mathcal{M}_{\tau}}\},
\end{equation}
with $t_{0}=t_{0}(\tau,\rho)$ defined implicitly by a slight modification of Eq.\eqref{properAlt2}:
\begin{equation}\label{properAlt2'}
|\rho|=\int_{t_{0}}^{\tau}\frac{a(t)}{\sqrt{a^{2}(\tau)-a^{2}(t)}}\,dt.
\end{equation}


\noindent In order to define the extended spacetime, $\overline{\mathcal{M}}$, we first extend the maximal Fermi chart of the $\chi=0$ comoving observer.  For that purpose, we use a subset of $D'$.  By Theorem \ref{c0thrm}, $g_{\tau\tau}(\tau, \rho_{\mathcal{M}_{\tau}})<0$, but we have not ruled out the possibility that $g_{\tau\tau}(\tau,\rho)=0$ at points  where 
$t_{0}(\tau,\rho)<0$ in the domain $D'$ (or $D$).  However this is not an essential feature of our construction because the choice we made for the metric tensor for negative cosmological times was arbitrary except at points $(\tau,\rho)$ where $t_{0}(\tau,\rho)<0$ and $t_{0}(\tau,\rho)$ is close to zero.  At such points near the big bang, continuous differentiability of $g_{\tau\tau}$ at $(\tau, \rho_{\mathcal{M}_{\tau}})$ places restrictions on any extension of this function so that $g_{\tau\tau}$ cannot differ greatly at points near the big bang from the definition given in Eq.\eqref{newg3}.   \\

\noindent In accordance with Remark \ref{toobig}, and to eliminate ambiguities in the extension of $g_{\tau\tau}$, let the coordinate chart $D_{2}\subset D'$ be  defined by,
\begin{equation}\label{D2}
D_{2}=\{(\tau,\rho): \tau>0, |\rho|<\rho_{\text max}(\tau)\},
\end{equation}
where,
\begin{equation}\label{D2b}
\rho_{\text max}(\tau)=\inf \{\rho: 0<\rho<2\rho_{\mathcal{M}_{\tau}}\, \text{and}\,\, g_{\tau\tau}(\tau,\rho)=0\},
\end{equation}
provided the infimum exists, with $\rho_{\text{max}}=2\rho_{\mathcal{M}_\tau}$ otherwise.\\ 

\noindent We can now define $\overline{\mathcal{M}}$ as a disjoint union,
\begin{equation}\label{union}
\overline{\mathcal{M}}=\mathcal{M}^{+}\cup\mathcal{M}^{0}\cup\mathcal{M}^{-},
\end{equation}
where the superscripts indicate respectively that cosmological time $t_{0}$ restricted to the set is positive, zero, or negative. Here $\mathcal{M}^{+}= \mathcal{M}$ denotes the original Robertson-Walker universe where cosmological time is postive, and $t_{0}=t_{0}(\tau,\rho)$ is a continuous function of $\tau$ and $\rho$ within the Fermi chart and on $\mathcal{M}^{0}\cup\mathcal{M}^{-}$, all of which are covered by the chart $D_{2}$.  The original smooth charts on $\mathcal{M}^{+}= \mathcal{M}$ together with $D_{2}$ form an atlas on $\overline{\mathcal{M}}$. The metric restricted to the submanifold of $\overline{\mathcal{M}}$ with chart $D_{2}$ is given by Eq. \eqref{fermi2}.  Under the assumptions of Theorem \ref{C1}, the metric of Eq.\eqref{fermi2} is $\mathcal{C}^{1}$ on $\mathcal{M}^{0}$, and smooth on $\overline{\mathcal{M}}\smallsetminus\mathcal{M}^{0}$.

\begin{remark}\label{geodesic} It is easily verified that all connection coefficients from the metric of Eq. \eqref{fermi2} are continuous on $\overline{\mathcal{M}}$, and that the spacelike path $\gamma(\rho)=(\tau_{0}, \rho)$ satisfies the geodesic equations for all $\rho$ with $|\rho|<\rho_{\text max}(\tau_{0})$.  Thus, all spacelike geodesics in $\overline{\mathcal{M}}$ orthogonal to the Fermi observers's worldlike $\beta(\tau)=(\tau, 0)$ pass through spacetime points in the big bang $\mathcal{M}^{0}$ as well as pre-big bang points in $\mathcal{M}^{-}$.
\end{remark}

\noindent  Geometrically, $\rho_{\text max}(\tau_{0})$ may be understood as the first zero of $g_{\tau\tau}(\tau_{0}, \cdot)$ along the spacelike geodesic orthogonal to $\beta$, and starting from, $\beta(\tau_{0})$, as $\rho$ increases.  It follows from Theorem \ref{c0thrm} that $\rho_{\text max}(\tau)>\rho_{\mathcal{M}_{\tau}}$ so that $\mathcal{M}^{-}$ is not empty and necessarily consists of points with negative cosmological times $t_{0}$.\\ 

\noindent The subset $\mathcal{M}^{0}$ defined by $t_{0}(\tau,\rho)=0$ is a $C^{1}$ submanifold parameterized in two connected components by $(\tau, \rho_{\mathcal{M}_{\tau}})$, and  $(\tau, -\rho_{\mathcal{M}_{\tau}})$, for $\tau>0$, where the one-to-one, continuously differentiable function $\rho_{\mathcal{M}_{\tau}}$ is given by Eq.\eqref{radius}. It follows from Remark \ref{regineq} and Eq.\eqref{dradius} that $d\rho_{\mathcal{M}_{\tau}}/d\tau$ is nonvanishing for the scale factors we consider. The charactor of $\mathcal{M}^{0}$ is described by the following theorem.
\begin{theorem}\label{lightlike}
Under the assumptions of Theorem \ref{C1}, 
$\mathcal{M}^{0}$ is lightlike.
\end{theorem}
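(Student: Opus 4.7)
The submanifold $\mathcal{M}^{0}$ has two connected components, parameterized by the $C^{1}$ curves $\gamma^{\pm}(\tau)=(\tau,\pm\rho_{\mathcal{M}_{\tau}})$, $\tau>0$, with tangent vector $\dot\gamma^{\pm}(\tau)=\partial_{\tau}\pm (d\rho_{\mathcal{M}_{\tau}}/d\tau)\,\partial_{\rho}$. Since $g_{\rho\rho}\equiv 1$ and the metric in Eq.\,\eqref{fermi2} has no cross terms, the squared norm of $\dot\gamma^{\pm}$ is $g_{\tau\tau}(\tau,\pm\rho_{\mathcal{M}_{\tau}})+(d\rho_{\mathcal{M}_{\tau}}/d\tau)^{2}$. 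Because $g_{\tau\tau}$ is even in $\rho$, it is enough to treat the component $\gamma^{+}$, and the theorem reduces to the scalar identity
\begin{equation*}
\left(\frac{d\rho_{\mathcal{M}_{\tau}}}{d\tau}\right)^{2}=-g_{\tau\tau}(\tau,\rho_{\mathcal{M}_{\tau}}).
\end{equation*}
Theorem \ref{c0thrm} expresses the right-hand side as the square of $1-\dot a(\tau)F(\tau)$, with $F(\tau)=\int_{0}^{\tau}\frac{\ddot a(t)}{\dot a(t)^{2}}\bigl(\tfrac{a(\tau)}{\sqrt{a^{2}(\tau)-a^{2}(t)}}-1\bigr)dt$; Lemma \ref{intbound} guarantees $1-\dot a(\tau)F(\tau)>0$, while Eq.\,\eqref{dradius} guarantees $d\rho_{\mathcal{M}_{\tau}}/d\tau>0$ under the strongly regular hypothesis, so the positive square roots agree iff
\begin{equation*}
\frac{d\rho_{\mathcal{M}_{\tau}}}{d\tau}=1-\dot a(\tau)F(\tau).
\end{equation*}

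The plan for establishing this scalar identity is to expand Eq.\,\eqref{dradius} into the two sub-integrals $\frac{\dot a(\tau)}{a(\tau)}\int_{0}^{\tau}\frac{a(t)\,dt}{\sqrt{a^{2}(\tau)-a^{2}(t)}}$ and $-\frac{\dot a(\tau)}{a(\tau)}\int_{0}^{\tau}\frac{\ddot a(t)\,a^{2}(t)\,dt}{\dot a^{2}(t)\sqrt{a^{2}(\tau)-a^{2}(t)}}$, expand $1-\dot a(\tau)F(\tau)$ in a parallel fashion, and match via a single integration-by-parts. The pivotal antiderivative is
\begin{equation*}
\Phi(t):=\frac{\sqrt{a^{2}(\tau)-a^{2}(t)}-a(\tau)}{\dot a(t)},
\end{equation*}
whose derivative is $\Phi'(t)=-\tfrac{a(t)}{\sqrt{a^{2}(\tau)-a^{2}(t)}}-\tfrac{\ddot a(t)(\sqrt{a^{2}(\tau)-a^{2}(t)}-a(\tau))}{\dot a^{2}(t)}$, with boundary values $\Phi(\tau)=-a(\tau)/\dot a(\tau)$ and, crucially, $\Phi(0^{+})=0$. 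Integrating $\Phi'$ from $0$ to $\tau$ yields the combined identity
\begin{equation*}
\int_{0}^{\tau}\!\left[\frac{a(t)}{\sqrt{a^{2}(\tau)-a^{2}(t)}}+\frac{\ddot a(t)\bigl(\sqrt{a^{2}(\tau)-a^{2}(t)}-a(\tau)\bigr)}{\dot a^{2}(t)}\right]dt=\frac{a(\tau)}{\dot a(\tau)},
\end{equation*}
and a short algebraic rearrangement then converts Eq.\,\eqref{dradius} into $1-\dot a(\tau)F(\tau)$, as required.

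The main technical obstacle is the degenerate case $\dot a(0^{+})=0$ of Lemma \ref{c1lemma2}(a), in which the natural intermediate integrals $\int_{0}^{\tau}\ddot a(t)/\dot a^{2}(t)\,dt$ and $\int_{0}^{\tau}a(t)\,dt/\sqrt{a^{2}(\tau)-a^{2}(t)}$ can individually diverge as $1/\dot a(t)\to\infty$ at $t=0$, even though their appropriate combinations remain finite. The formulation through $\Phi$ sidesteps this because the boundary value $\Phi(0^{+})=0$ holds regardless of whether $\dot a(0^{+})$ is zero or strictly positive, and the combined integrand in the displayed identity is controlled near $t=0$ by the strong-regularity bound $|\ddot a(t)\,a^{2}(t)/\dot a^{2}(t)|\le K\,a(t)$. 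To be fully rigorous in case (a), I would first prove the identity on $[t_{0},\tau]$ for $t_{0}>0$, where every intermediate quantity is finite and the Fundamental Theorem of Calculus applies directly, and then pass to the limit $t_{0}\to 0^{+}$ via dominated convergence on the grouped integrand. The symmetric component $\gamma^{-}$ is then immediate from the evenness of $g_{\tau\tau}$ in $\rho$ and the symmetry of $\rho_{\mathcal{M}_{\tau}}$.
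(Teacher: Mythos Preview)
Your proof is correct and takes a genuinely different route from the paper's. The paper does not compute anything: it invokes Eq.~(66) of \cite{randles} together with Eq.~(79) and Theorem~8 of \cite{klein13} to identify $\sqrt{-g_{\tau\tau}(\tau,\rho_{\mathcal{M}_\tau})}$ as the limiting ratio $\|v_{\mathrm{Fermi}}\|/\|v_{\mathrm{kin}}\|$ of Fermi to kinematic relative speeds of a comoving test particle, and then quotes that this ratio equals $d\rho_{\mathcal{M}_\tau}/d\tau$. Your argument, by contrast, is entirely self-contained: the antiderivative $\Phi(t)=(\sqrt{a^2(\tau)-a^2(t)}-a(\tau))/\dot a(t)$ together with the algebraic identity $S(t)-a(\tau)+a^2(t)/S(t)=a(\tau)\bigl(a(\tau)/S(t)-1\bigr)$ converts Eq.~\eqref{dradius} directly into $1-\dot a(\tau)F(\tau)$. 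The paper's route is shorter on the page but opaque without the cited references and imports the machinery of geometrically defined relative velocities; yours is longer but elementary, and makes transparent exactly where the hypotheses on $a(t)$ enter (strong regularity to bound the grouped integrand near $t=0$, and $H(0^+)=+\infty$ to force $\Phi(0^+)=0$). One small slip: of the two ``natural intermediate integrals'' you flag as potentially divergent in case~(a), the second one, $\int_0^\tau a(t)\,dt/\sqrt{a^2(\tau)-a^2(t)}$, is just $\rho_{\mathcal{M}_\tau}$ and is always finite by Eq.~\eqref{radiusbound}; the genuine divergences are in $\int_0^\tau \ddot a/\dot a^2\,dt$ and $\int_0^\tau (\ddot a/\dot a^2)\,a(\tau)/S(t)\,dt$ separately, which is exactly what your grouped formulation via $\Phi$ avoids.
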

\begin{proof} A tangent vector to $\mathcal{M}^{0}$ at the point $(\tau, \pm\, \rho_{\mathcal{M}_{\tau}})$ is $u=(1,\pm\, d\rho_{\mathcal{M}_{\tau}}/d\tau)$, (where $+$ is used for the component with positive space coordinates and $-$ for the other component).  From Eq.(66) in \cite{randles} and Eq.(79) and Theorem 8 in \cite{klein13}, it follows that,
\begin{equation}
\sqrt{-g_{\tau\tau}(\tau, \rho_{\mathcal{M}_{\tau}})}=\lim_{\quad\rho\to\rho_{\mathcal{M}_\tau}^{-}}\sqrt{-g_{\tau\tau}(\tau,\rho)}=\lim_{t_{0}\to0^{+}}\frac{\|v_{\mathrm{Fermi}}\|}{\|v_{\mathrm{kin}}\|}=\der{\rho_{\mathcal{M}_\tau}}{\tau}(\tau),
\end{equation}
where the limit in the third term is of the ratio of the Fermi relative speed to the kinematic relative speed of a comoving test particle at the spacetime point uniquely determined by $\tau$ and $t_{0}$ (see \cite{randles,Bolos12,klein13}), and both speeds are relative to the comoving observer $\beta$.  Then,

\begin{equation}
g(u,u)=g_{\tau\tau}(\tau, \rho_{\mathcal{M}_{\tau}})+\left(\der{\rho_{\mathcal{M}_\tau}}{\tau}\right)^{2}=0,
\end{equation}
where $g$ is the metric tensor.
\end{proof}

\section{$\overline{\mathcal{M}}$ in $3+1$ dimensions}\label{3+1}

\noindent  In this section we construct a spacetime $\overline{\mathcal{M}}$ in four spacetime dimensions, analogous to the construction in Section \ref{extend} for the two dimensional case. Analogous to the two dimensional case, here $\overline{\mathcal{M}}$ includes both pre-big bang events and the  $3+1$ dimensional Robertson-Walker spacetime $\mathcal{M}$ as a submanifold, for $k=0$ (see Eq.\eqref{Sk}).  The extended metric tensor on $\overline{\mathcal{M}}$ is smooth except for its restriction to the submanifold $\mathcal{M}^{0}$ of cosmological time zero events. On $\mathcal{M}^{0}$, the extended metric is $\mathcal{C}^{0}$ but also with certain differentiability properties. The big bang, $\mathcal{M}^{0}$, inherits  geometric structure  from $\overline{\mathcal{M}}$, and the dimension of the the cotangent bundle on  $\mathcal{M}^{0}$ is two dimensional.\\  

\noindent  In subsection \ref{angular} we develop extended Fermi polar coordinates and extend the metric of Eqs. \eqref{fermipolar} and \eqref{fermipolarchart}. Because partial derivatives of the metric coefficient $a^{2}(t_{0})S^2_k(\chi_{t_{0}}(\tau))$ in Eq. \eqref{fermipolar} diverge at  cosmological time zero spacetime points, it cannot be extended as a differentiable function of $\tau$ and $\rho$ under the general assumptions that make $g_{\tau\tau}$ continuously differentiable where $t_{0}=0$. Subsection \ref{7.2fermi} develops the full (Cartesian) Fermi coordinates $\tau, x, y, z$ of Eq.\eqref{fermimetric} and finishes the construction of $\overline{\mathcal{M}}$. 

\subsection{Angular Coordinates}\label{angular}

\noindent Here we define an extension, 
\begin{equation}\label{extendpolar}
ds^2=g_{\tau\tau} d\tau^2+d\rho^2 + g_{\theta\theta}\,d\theta^{2}+g_{\phi\phi}\,d\phi^{2},
\end{equation}
of the Eq.\eqref{fermipolar} in Fermi polar form.  A new chart $D_{\mathrm{polar}}\supset \mathcal{U}_{\mathrm{polar}}$ (see Eq \eqref{fermipolarchart}) is defined by,
\begin{equation}\label{fermipolarchart+}
D_{\mathrm{polar}}=\{(\tau,\rho,\theta,\phi): \tau>0, 0<\rho<\rho_{\text max}(\tau), \theta\in I_{\pi},\, \phi\in I_{2\pi}\},
\end{equation}
where the notation is the same as in Eq. \eqref{D2b}, and as before $I_{\pi}$ and $I_{2\pi}$ may be chosen to be any open intervals of length $\pi$ and $2\pi$ respectively.  As before, we assume that the scale factor $a(t)$ is extended as an even function of $t$. It follows from Theorem \ref{C1} that $g_{\tau\tau}$ is continuously differentiable on $D_{\mathrm{polar}}$ under the hypotheses of that theorem.  As in Section \ref{extend}, we define $g_{\rho\rho}\equiv1$ and define $g_{\tau\tau}$ by Eq. \eqref{newg3} on $D_{\mathrm{polar}}$. For $t_{0}(\tau,\rho)\neq0$, let

\begin{equation}\label{chitermk'}
g_{\theta\theta}(\tau,\rho)=\bar{g}_{\theta\theta}(\tau,t_0(\tau,\rho))=a^{2}(t_{0})S^2_k(\chi_{|t_{0}|}(\tau)),
\end{equation}
and
\begin{equation}\label{same}
g_{\phi\phi}(\tau,\rho,\theta)=g_{\theta\theta}(\tau,\rho)\sin^{2}\theta.
\end{equation}
In what follows, we shall define $g_{\theta\theta}$ and $g_{\phi\phi}$, at points $(\tau,\rho,\theta,\phi)\in D_{\mathrm{polar}}$ where $t_{0}(\tau,\rho)=0$, i.e., where $\rho=\rho_{\mathcal{M}_\tau}$, by the limiting values of those functions as $t_{0}\to0$.\\

\noindent To define $g_{\theta\theta}$ at points where $t_{0}=0$, we first consider the case $k=1$.  Since $S_{k=1}$ is bounded, the right side of Eq. \eqref{chitermk'} converges to zero as $t_{0}\rightarrow0$, so $g_{\theta\theta}$ and $g_{\phi\phi}$ can both be defined to take the value zero at such points. For the other cases, $k=0,-1$, we first observe that from Eq.\eqref{key2} and Lemma \ref{intbound}, for a regular scale factor $a(t)$,
\begin{equation}\label{chibound}
\int_{t_{0}}^{\tau}\frac{1}{a(t)}dt\leq\chi_{t_{0}}(\tau)< \,\frac{1}{\dot{a}(\tau)}+\int_{t_{0}}^{\tau}\frac{1}{a(t)}dt
\end{equation}
for $\tau>t_{0}\geq0$.  In both cases $S_{k}$ is an increasing function, so combining \eqref{chitermk'} and \eqref{chibound}, gives,
\begin{equation}\label{metricbound}
g_{\theta\theta}(\tau,\rho)=a^{2}(t_{0})S^2_k(\chi_{|t_{0}|}(\tau))\leq a^{2}(t_{0})S^2_k\left(\frac{1}{\dot{a}(\tau)}+ \int_{|t_{0}|}^{\tau}\frac{1}{a(t)}dt\right). 
\end{equation}
By Theorem \ref{inflation}, the assumptions we make in Theorem  \ref{C1} and Lemma  \ref{c1lemma2} are inconsistent with the existence of finite particle horizons in the cosmologies we consider. Nevertheless we point out that if the scale factor $a(t)$ (in violation with those hypotheses) does give rise to finite particle horizons, i.e., 
\begin{equation}\label{particlehor}
\int_{0}^{\tau}\frac{1}{a(t)}dt<\infty,
\end{equation}
then we have the following result.
\begin{theorem} \label{finitepart} For $k=-1,0,1$ and a regular scale factor $a(t)$ with finite particle horizon, i.e., satisfying Eq. \eqref{particlehor} for $\tau>0$, $g_{\theta\theta}(\tau,\rho)$ and $g_{\phi\phi}(\tau,\rho,\phi)$ are continuous on $D_{\mathrm{polar}}$ and vanish at $(\tau, \rho_{\mathcal{M}_\tau})$.
\end{theorem}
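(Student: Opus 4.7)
The plan is to establish continuity of $g_{\theta\theta}$ on $D_{\mathrm{polar}}$, from which continuity of $g_{\phi\phi}$ follows immediately via Eq. \eqref{same}. The argument splits into two cases: interior points where $t_0(\tau,\rho)\neq 0$ and boundary points where $t_0(\tau,\rho)=0$, i.e., $\rho=\rho_{\mathcal{M}_\tau}$.

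First, at any point with $t_0(\tau,\rho)\neq 0$, the function $t_0$ is smooth by the Implicit Function Theorem applied to Eq. \eqref{properAlt2}, the scale factor $a$ is continuous by regularity, and the integrand in Eq. \eqref{key2} is continuous in its arguments away from $t=0$, so $\chi_{|t_0|}(\tau)$ depends continuously on $(\tau,\rho)$ there. Composing with $S_k$ and squaring yields continuity of $g_{\theta\theta}$ at such points.

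The substantive content is continuity at a boundary point $(\tau,\rho_{\mathcal{M}_\tau},\theta,\phi)$ together with the vanishing claim. Let $(\tau_n,\rho_n)\to(\tau,\rho_{\mathcal{M}_\tau})$ in $D_{\mathrm{polar}}$; by Lemma \ref{t0cont}, $t_{0,n}\equiv t_0(\tau_n,\rho_n)\to 0$, so by continuity of $a$ and Definition \ref{regular}(a), $a(t_{0,n})\to a(0)=0$. It then suffices to bound $S_k^2(\chi_{|t_{0,n}|}(\tau_n))$ uniformly in $n$. Applying Eq. \eqref{chibound} with $\tau$ and $t_0$ replaced by $\tau_n$ and $|t_{0,n}|$ yields
\[
\chi_{|t_{0,n}|}(\tau_n) \;\leq\; \frac{1}{\dot{a}(\tau_n)} + \int_0^{\tau_n}\frac{1}{a(t)}\,dt,
\]
and since $\tau_n\to\tau>0$ while the right-hand side depends continuously on $\tau_n$ precisely under the finite particle horizon hypothesis Eq. \eqref{particlehor}, the left-hand side stays in a bounded interval. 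A short case analysis then finishes the proof: for $k=1$ one has $\sin^2\leq 1$; for $k=0$, $S_0^2(\chi)=\chi^2$ is bounded on bounded intervals; and for $k=-1$, $\sinh^2$ is bounded on bounded intervals. In all three cases $g_{\theta\theta}(\tau_n,\rho_n)\leq a^2(t_{0,n})\cdot C\to 0$, proving continuity and the vanishing claim at the boundary. Continuity of $g_{\phi\phi}$ then follows from Eq. \eqref{same} since $\sin^2\theta$ is bounded and continuous.

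The only delicate ingredient is controlling $\chi_{|t_0|}(\tau)$ as $t_0\to 0$; because $1/a(t)$ is unbounded near the big bang, such a bound fails in general, and the finite particle horizon hypothesis is exactly what keeps it finite uniformly for nearby $\tau_n$. Everything else is routine bookkeeping using the regularity of $a$ and the previously established continuity of $t_0$.
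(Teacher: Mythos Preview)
Your proof is correct and follows essentially the same route as the paper's: reduce to boundary points, invoke Lemma~\ref{t0cont} to get $t_0\to 0$, control $\chi_{|t_0|}(\tau)$ via the bound in Eq.~\eqref{chibound} (which is exactly where the finite particle horizon hypothesis enters), and conclude that $a^2(t_0)S_k^2(\chi_{|t_0|}(\tau))\to 0$. The paper packages the last two steps together through Eq.~\eqref{metricbound}, while you make the case split on $k$ explicit, but the argument is the same.
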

\begin{proof} It is necessary to show continuity only at points of the form $(\tau, \rho_{\mathcal{M}_\tau}, \theta,\phi)$.  Since $t_{0}(\tau,\rho)=0$ if and only if $\rho = \rho_{\mathcal{M}_\tau}$ and $t_{0}(\tau,\rho)$ is continuous on $D$ by Lemma \ref{t0cont}, it suffices to show that $\bar{g}_{\theta\theta}$ is continuous at $(\tau_{0},0)$ for any $\tau_{0}>0$.  Using Eq.\eqref{metricbound}, we have,

\begin{equation}\label{finitepartcont}
\lim_{(\tau,t_{0})\to(\tau_0,0)}\bar{g}_{\theta\theta}(\tau,t_0)=\lim_{(\tau,t_{0})\to(\tau_0,0)}a^{2}(t_{0})S_k^{2}(\chi_{t_{0}}(\tau))= 0
\end{equation}
\end{proof}
\noindent The following remark shows that the case $k=-1$ is problematic for inflationary cosmologies. 

\begin{remark}\label{Mink1} In the Milne Universe $k=-1$ and Remark \ref{Mink2} shows that $g_{\theta\theta}$ and $g_{\phi\phi}$ have obvious smooth extensions to $D_{\mathrm{polar}}$. If $a(t)=t^{\alpha}$ and $\alpha<1$, then particle horizons are finite and Theorem \ref{finitepart} applies. However, for $k=-1$ with inflationary power law scale factors of the form $a(t)=t^{\alpha}$ with $\alpha>1$, it is readily seen that 
\begin{equation}
\lim_{t_{0}\rightarrow0^{+}}a(t_{0})S_k(\chi_{t_{0}}(\tau))=\lim_{t_{0}\rightarrow0^{+}}a(t_{0})\sinh(\chi_{t_{0}}(\tau))= \infty.
\end{equation}
Therefore continuous extensions of $g_{\theta\theta}(\tau,\rho)$ and $g_{\phi\phi}(\tau,\rho,\phi)$ to $D_{\mathrm{polar}}$ are not possible for these cosmologies.  
\end{remark}

\noindent Specializing to the case $k=0$ and regular scale factors with infinite particle horizons, we have, using Eq.\eqref{chibound} and L'H\^{o}pital's rule,

\begin{equation}\label{H(0+)}
\lim_{t_{0}\rightarrow0}a(t_{0})\chi_{|t_{0}|}(\tau) = \lim_{t_{0}\rightarrow0}\frac{a(t_{0})}{\dot{a}(t_{0})}=\lim_{t_{0}\rightarrow0}\frac{1}{H(t_{0})}= \frac{1}{H(0^{+})} ,
\end{equation}
which exists because the Hubble parameter $H(t_{0})$ is a decreasing function for regular scale factors.

\begin{theorem}\label{gthetacontin}
Let $k=0$ and $a(t)$ be a regular scale factor. Then $g_{\theta\theta}(\tau,\rho)$ and $g_{\phi\phi}(\tau,\rho,\phi)$ are continuous on $D_{\mathrm{polar}}$.
\end{theorem}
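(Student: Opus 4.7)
The plan is to reduce continuity on $D_{\mathrm{polar}}$ to a single joint-limit calculation at boundary points where $t_0=0$ and then attack that limit with the sandwich in Eq.\eqref{chibound} together with the L'H\^opital identity of Eq.\eqref{H(0+)}. Away from $\rho=\rho_{\mathcal{M}_\tau}$, the Implicit Function Theorem applied to Eq.\eqref{properAlt2} makes $t_0(\tau,\rho)$ smooth, so $\bar{g}_{\theta\theta}(\tau,t_0)=a^2(t_0)\,\chi_{|t_0|}(\tau)^2$ is a smooth composition; continuity of $g_{\theta\theta}$ is automatic there and continuity of $g_{\phi\phi}=g_{\theta\theta}\sin^2\theta$ follows at once. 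All substantive work is concentrated at points where $t_0(\tau,\rho)=0$.

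Since Lemma \ref{t0cont} gives continuity of $t_0$ on $D$, the remaining task is to verify, for each $\tau_0>0$,
\begin{equation}
\lim_{(\tau,t_0)\to(\tau_0,0)}\bar{g}_{\theta\theta}(\tau,t_0)=\frac{1}{H(0^+)^2},
\end{equation}
which is exactly the boundary value by which $g_{\theta\theta}$ has been defined at $(\tau_0,\rho_{\mathcal{M}_{\tau_0}},\theta,\phi)$. Because the scale factor has been extended as an even function, $\bar{g}_{\theta\theta}$ depends on $t_0$ only through $|t_0|$, so it suffices to consider the one-sided limit $t_0\to 0^+$.

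For that one-sided joint limit, multiplying Eq.\eqref{chibound} by $a(t_0)$ yields the sandwich
\begin{equation}
a(t_0)\int_{t_0}^{\tau}\frac{dt}{a(t)} \;\leq\; a(t_0)\,\chi_{t_0}(\tau) \;\leq\; \frac{a(t_0)}{\dot{a}(\tau)}+a(t_0)\int_{t_0}^{\tau}\frac{dt}{a(t)},
\end{equation}
in which $a(t_0)/\dot{a}(\tau)\to 0$ trivially as $(\tau,t_0)\to(\tau_0,0^+)$ since $\dot{a}(\tau_0)>0$. To control the remaining integral uniformly in $\tau$ I would split
\begin{equation}
\int_{t_0}^{\tau}\frac{dt}{a(t)}=\int_{t_0}^{\tau_0}\frac{dt}{a(t)}+\int_{\tau_0}^{\tau}\frac{dt}{a(t)}.
\end{equation}
The second piece is bounded in absolute value by $|\tau-\tau_0|/a(\min(\tau,\tau_0))$ (since $a$ is increasing), and multiplied by $a(t_0)$ it tends to zero; on the first piece the upper limit is frozen at $\tau_0$, and the single-variable L'H\^opital argument of Eq.\eqref{H(0+)} applies verbatim, giving $1/H(0^+)$, and covering both the infinite particle horizon case (honest $\infty/\infty$) and the finite case (where the product tends to $0=1/H(0^+)$). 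Squaring and using $S_0(\chi)=\chi$ completes the identification of the limit, and continuity of $g_{\phi\phi}=g_{\theta\theta}\sin^2\theta$ is then automatic.

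The main obstacle I anticipate is precisely this upgrade from a fixed-$\tau$ L'H\^opital statement, as in Eq.\eqref{H(0+)}, to a uniform joint limit in $(\tau,t_0)$: one must rule out pathological $\tau$-dependence near the singularity. The splitting above localizes the singular L'H\^opital behavior into a fixed-endpoint integral and reduces the remaining error to an estimate depending only on the regularity of $a$ and on $\tau$ staying in a neighborhood of $\tau_0>0$ where $a$ is bounded below.
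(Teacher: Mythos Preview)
Your argument is correct, and it takes a genuinely different and more elementary route than the paper. The paper works directly with $F_4(\tau,x)=a(x)\chi_{|x|}(\tau)$, applies a triangle-inequality decomposition $|F_4(\tau,x)-F_4(\tau_0,0)|\le|F_4(\tau,x)-F_4(\tau_0,x)|+|F_4(\tau_0,x)-1/H(0^+)|$, and then needs the technical estimate of Lemma~\ref{continuitylemma}(b) to control the tail $a(x)\int_{\tau_0}^{\tau}h_4(\tau,t)\,dt$ with the singular integrand $h_4(\tau,t)=\frac{1}{a(t)}\frac{a(\tau)}{\sqrt{a^2(\tau)-a^2(t)}}$, followed by a further $\delta$-split and a uniform-continuity argument for the remaining piece; the finite-particle-horizon case is handled separately via Theorem~\ref{finitepart}. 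Your key simplification is to invoke the sandwich Eq.~\eqref{chibound} \emph{before} passing to the joint limit, which replaces the $\tau$-dependent singular integrand by the $\tau$-independent $1/a(t)$, at the cost of the harmless additive error $a(t_0)/\dot a(\tau)$. After that, the split $\int_{t_0}^{\tau}=\int_{t_0}^{\tau_0}+\int_{\tau_0}^{\tau}$ isolates all singular behavior in a one-variable integral with fixed upper limit, so Eq.~\eqref{H(0+)} applies verbatim, and the cross term is controlled by the completely elementary bound $|\int_{\tau_0}^{\tau}\frac{dt}{a(t)}|\le|\tau-\tau_0|/a(\min(\tau,\tau_0))$. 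This avoids Lemma~\ref{continuitylemma} entirely and treats the finite and infinite particle-horizon cases uniformly. The price is that your argument is specific to $S_0(\chi)=\chi$, whereas the paper's direct estimate of $F_4$ is closer in spirit to the other continuity proofs in Section~\ref{1+1}; but for the statement at hand your approach is both shorter and cleaner.
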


\begin{proof}
It is sufficient to prove that $g_{\theta\theta}$ is continuous at points of the form $(\tau, \rho_{\mathcal{M}_\tau})$. For the case of finite particle horizons, the result follows by Theorem \ref{finitepart}. For the case of infinite particle horizons, we define a new function of two independent variables $\tau$ and $x$, by,
\begin{equation}\label{phidef}
F_{4}(\tau,x)\equiv a(x)\int_{|x|}^\tau\frac{1}{a(t)}\frac{a(\tau)}{\sqrt{a^2(\tau)-a^2(t)}}dt\equiv a(x)\int_{|x|}^\tau h_{4}(\tau,t)dt,
\end{equation}
with the restriction $\tau>x>0$.  In light of Eq. \eqref{H(0+)}, we define $F_{4}(\tau,0)= 1/H(0^{+})$.   Since $t_{0}(\tau_{0},\rho_{\mathcal{M}_{\tau_{0}}})=0$, our plan of proof is first to show that $F_{4}(\tau,x)$ is continuous at any point of the form $(\tau_{0},0)$. Then using Lemma \ref{t0cont},  the composition $F_{4}(\tau, t_{0}(\tau,\rho))$ must be continuous at any point of the form $(\tau_{0},\rho_{\mathcal{M}_{\tau_{0}}})$ and and therefore $g_{\theta\theta}(\tau,\rho)=F_{4}^{2}(\tau, t_{0}(\tau,\rho))$ must be continuous at any point of the form $(\tau_{0},\rho_{\mathcal{M}_{\tau_{0}}})$. \\

\noindent For $\tau>\tau_0>x>0$,
\begin{multline}\label{cont1}
|F_{4}(\tau,x)-F_{4}(\tau_0,0)|\leqslant |F_{4}(\tau,x)-F_{4}(\tau_0,x)|+|F_{4}(\tau_0,x)-1/{H(0^+)}|\\
\leqslant a(x)\left|\int_{\tau_0}^\tau h_{4}(\tau,t)dt\right|+a(x)\int_x^{\tau_0}|h_{4}(\tau,t)-h_{4}(\tau_0,t)|dt+|F_{4}(\tau_0,x)-1/H(0^+)|
\end{multline}
The third term in Eq. \eqref{cont1} can be made arbitrarily small for all $x$ sufficiently close to 0 by Eq. \eqref{H(0+)}, while the first term can be made small for all $\tau$ sufficiently close to $\tau_0$ by Lemma \ref{continuitylemma}b. For the middle term, choose any $0<\delta<\tau_0$. We can assume that $x<\tau_0-\delta$. Then,
\begin{multline}\label{inequal1}
a(x)\int_x^{\tau_0}|h_{4}(\tau,t)-h_{4}(\tau_0,t)|dt\\
=a(x)\int_{\tau_0-\delta}^{\tau_0}|h_{4}(\tau,t)-h_{4}(\tau_{0},t)|dt+a(x)\int_x^{\tau_0-\delta}|h_{4}(\tau,t)-h_{4}(\tau_0,t)|dt
\end{multline}
The first integral is bounded by the  integrability of $h_{4}(\tau_0,t)$ and $h_{4}(\tau,t)$, so it can be made small for small enough $a(x)$, which is achieved by choosing $x$ sufficiently close to 0. For the second term, note that $a(t)h_{4}(\tau,t)$ is uniformly continuous in $t$ and $\tau$ for $t\in[0,\tau_0-\delta]$. So, for any $\epsilon>0$ we can choose $\tau$ close enough to $\tau_0$ so that,
\begin{equation}
a(x)\int_x^{\tau_0-\delta}\frac{1}{a(t)}|a(t)h_{4}(\tau,t)-a(t)h_{4}(\tau_0,t)|dt\leqslant \epsilon a(x)\int_x^{\tau_0-\delta}\frac{1}{a(t)}dt.
\end{equation}
Since $\lim_{x\to 0}a(x)\int_x^{\tau_0-\delta}\frac{1}{a(t)}dt=1/H(0+)$, this term is bounded for $x$ close to 0. Therefore we can make the entire second term in Eq. \eqref{inequal1} small by choosing $(\tau,x)$ sufficiently close to $(\tau_{0},0)$. The $\tau<\tau_0$ case is simliar.
\end{proof}

\noindent The following corollary shows that if $k=0$ and the scale factor $a(t)$ is analytic at $t_{0}=0$ and regular, then $g_{\theta\theta}(\tau, \rho_{\mathcal{M}_\tau}) = g_{\phi\phi}(\tau, \rho_{\mathcal{M}_\tau},\theta)=0$. 

\begin{corollary} \label{anglelimit} If $k=0$ and $a(t)$ is regular and an even function on $\mathbb{R}$, with either $\infty>\dot{a}(0^{+})>0$ or $\dot{a}(0)=0$ but the $n$th derivative of $a(t)$ at $t=0$ for some $n$ does not vanish, then $g_{\theta\theta}(\tau,\rho)$ and $g_{\phi\phi}(\tau,\rho,\phi)$ are continuous on $D_{\mathrm{polar}}$ and vanish at $(\tau, \rho_{\mathcal{M}_\tau})$ for any $\tau>0$.
\end{corollary}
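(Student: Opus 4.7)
The plan is to reduce the corollary to the limit identity \eqref{H(0+)} and then show that the Hubble parameter blows up at the big bang under each of the two hypotheses.

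First, continuity of $g_{\theta\theta}$ and $g_{\phi\phi}$ on $D_{\mathrm{polar}}$ is already given by Theorem \ref{gthetacontin} (with $k=0$), so the only new content is the vanishing at $(\tau, \rho_{\mathcal{M}_\tau})$. Recall from Eq. \eqref{chitermk'} with $k=0$ that $g_{\theta\theta}(\tau,\rho) = a^2(t_0)\chi^2_{|t_0|}(\tau)$, and that $t_0(\tau,\rho)\to 0$ precisely as $\rho\to \rho_{\mathcal{M}_\tau}$ by Lemma \ref{t0cont}. Thus, by Eq. \eqref{H(0+)},
\begin{equation}
g_{\theta\theta}(\tau,\rho_{\mathcal{M}_\tau})=\lim_{t_0\to 0}a^2(t_0)\chi^2_{|t_0|}(\tau)=\frac{1}{H(0^+)^2},
\end{equation}
so the proof reduces to establishing $H(0^+)=\infty$ in each of the two cases of the hypothesis.

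In the first case, $\infty>\dot{a}(0^+)>0$, this is immediate: since $a(0)=0$ and $a(t)>0$ for $t>0$, we have $H(t)=\dot{a}(t)/a(t)\to \dot a(0^+)/0^+ = \infty$ as $t\to 0^+$. In the second case, $\dot{a}(0)=0$ but some derivative of $a$ at $0$ is nonzero. The evenness of $a$ forces all odd-order derivatives at $0$ to vanish, so the smallest $n\geq 1$ with $a^{(n)}(0)\neq 0$ must be even, and since $a(0)=\dot a(0)=0$ we have in fact $n\geq 2$. By Taylor's theorem, near $t=0$,
\begin{equation}
a(t)=\frac{a^{(n)}(0)}{n!}t^n+o(t^n),\qquad \dot a(t)=\frac{a^{(n)}(0)}{(n-1)!}t^{n-1}+o(t^{n-1}),
\end{equation}
so $H(t)=\dot a(t)/a(t)=n/t+o(1/t)\to\infty$ as $t\to 0^+$. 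In both cases $1/H(0^+)=0$, and hence $g_{\theta\theta}(\tau,\rho_{\mathcal{M}_\tau})=0$. Then by Eq. \eqref{same}, $g_{\phi\phi}(\tau,\rho_{\mathcal{M}_\tau},\theta)=g_{\theta\theta}(\tau,\rho_{\mathcal{M}_\tau})\sin^2\theta=0$ as well, completing the argument.

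There is no serious obstacle here; the main subtlety is merely exploiting the evenness of $a$ to conclude that the first nonvanishing derivative at the origin has order at least two, so that $H(t)\sim n/t$ diverges. Everything else is bookkeeping built on top of Theorem \ref{gthetacontin} and Eq. \eqref{H(0+)}.
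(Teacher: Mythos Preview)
Your proof is correct and follows essentially the same route as the paper: continuity comes from Theorem \ref{gthetacontin}, and vanishing at $(\tau,\rho_{\mathcal{M}_\tau})$ reduces via Eq.~\eqref{H(0+)} to showing $1/H(0^+)=0$. The only cosmetic difference is that the paper computes $\lim a/\dot a=0$ by repeated L'H\^opital, whereas you obtain the equivalent $H(t)\sim n/t$ from a Taylor expansion; the paper also inserts a separate finite-particle-horizon case, but that case is in fact vacuous under the stated hypotheses (by Theorem~\ref{inflation}(b), $\dot a(0^+)<\infty$ forces the particle horizon to be infinite), so your omission of it is harmless.
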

\begin{proof} Continuity follows from Theorem \ref{gthetacontin} and if the particle horizon is finite, $g_{\theta\theta}(\tau,\rho)$ and $g_{\phi\phi}(\tau,\rho,\phi)$ vanish at $(\tau, \rho_{\mathcal{M}_\tau})$ by Theorem \ref{finitepart}. If the particle horizon is infinite and $\dot{a}(0^{+})>0$, the result follows from  Eq.\eqref{H(0+)} and the assumption that $a(0)=0$.  Alternatively, if $\dot{a}(0)=0$ but $a^{(n)}(0)\neq0$ for some $n$, then for the smallest such $n$, we must have
\begin{equation}
a(0)=\dot{a}(0)=\ddot{a}(0)=\cdots=a^{(n-1)}(0)=0,
\end{equation}
but $a^{(n)}(0)\neq0$. Repeated application of L'H\^{o}pital's rule then gives,
\begin{equation}
\lim_{t_{0}\to0^{+}}a(t_{0})\chi_{t_{0}}(\tau) = \lim_{t_{0}\rightarrow0}\frac{a^{(n-1)}(t_{0})}{a^{(n)}(t_{0})}=0.
\end{equation}
So,
\begin{equation}\label{finitepartcont'}
g_{\theta\theta}(\tau, \rho_{\mathcal{M}_\tau})=\lim_{(\tau,t_{0})\to(\tau_0,0)}a^{2}(t_{0})S_k^{2}(\chi_{|t_{0}|}(\tau))=\lim_{t_{0}\rightarrow0^{+}}a^{2}(t_{0})\chi_{t_{0}}^{2}(\tau)= 0.
\end{equation}
Eq.\eqref{same} finishes the proof.
\end{proof}

\noindent We collect results from this subsection and Theorem \ref{C1} in the following theorem.

\begin{theorem}\label{C0}
Suppose that one of the conditions of Lemma \ref{c1lemma2} hold and that $a(t)$ is strongly regular. Suppose also that there is a constant $C>0$ such that
\begin{equation}\label{abound'}
\left|\frac{\dddot{a}(t)a^2(t)}{\dot{a}^3(t)}\right|\leqslant C.
\end{equation}
for all $t$. Then $g_{\tau\tau}(\tau,\rho)$ is continuously differentiable on $D_{\mathrm{polar}}$ and if $k=0$,  $g_{\theta\theta}(\tau,\rho)$ and $g_{\phi\phi}(\tau,\rho,\phi)$  are continuous on $D_{\mathrm{polar}}$.
\end{theorem}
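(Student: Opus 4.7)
The plan is to derive the theorem as a straightforward consolidation of Theorem \ref{C1} and Theorem \ref{gthetacontin}, essentially by reading off that their hypotheses coincide with those given here and that their conclusions pull back to $D_{\mathrm{polar}}$.

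First I would address $g_{\tau\tau}$. By Eq.\eqref{newg3}, $g_{\tau\tau}$ depends only on $\tau$ and $\rho$. The projection of $D_{\mathrm{polar}}$ onto the $(\tau,\rho)$-plane is $\{(\tau,\rho) : \tau>0,\, 0 < \rho < \rho_{\text{max}}(\tau)\}$, which by the definition \eqref{D2b} of $\rho_{\text{max}}(\tau)$ is a subset of $D$. The three hypotheses assumed here, namely strong regularity, one of the alternatives of Lemma \ref{c1lemma2}, and the uniform bound on $|\dddot{a}\,a^2/\dot{a}^3|$, are precisely those of Theorem \ref{C1}. Invoking that theorem gives continuous differentiability of $g_{\tau\tau}$ on the projection, and since the function is independent of $\theta$ and $\phi$, this lifts immediately to continuous differentiability on all of $D_{\mathrm{polar}}$.

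Next I would address $g_{\theta\theta}$ and $g_{\phi\phi}$ in the $k = 0$ case. Strong regularity implies regularity, so Theorem \ref{gthetacontin} applies and yields continuity of $g_{\theta\theta}(\tau,\rho)$ on $D_{\mathrm{polar}}$. For $g_{\phi\phi}(\tau,\rho,\theta) = g_{\theta\theta}(\tau,\rho)\sin^2\theta$ as given by Eq.\eqref{same}, continuity then follows at once as a product of continuous functions of $(\tau,\rho,\theta)$.

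There is no substantive obstacle here: the technical work has already been done in Sections \ref{1+1} and \ref{angular}. The only things to verify are the matching of hypotheses (immediate) and the inclusion of the $(\tau,\rho)$-projection of $D_{\mathrm{polar}}$ inside $D$ (built into the definition of $\rho_{\text{max}}(\tau)$), after which the result is just the combination of two earlier theorems.
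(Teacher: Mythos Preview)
Your proposal is correct and matches the paper's approach exactly: the paper introduces Theorem~\ref{C0} with the sentence ``We collect results from this subsection and Theorem~\ref{C1} in the following theorem'' and offers no separate proof, treating it as a consolidation of Theorem~\ref{C1} and Theorem~\ref{gthetacontin}. Your verification that the $(\tau,\rho)$-projection of $D_{\mathrm{polar}}$ sits inside $D$ via the definition of $\rho_{\max}(\tau)$, and that the hypotheses line up, is precisely the routine check the paper leaves implicit.
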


\subsection{Fermi Coordinates}\label{7.2fermi}

%
\noindent  We begin with a definition of the key chart for the extended spacetime $\overline{\mathcal{M}}$ in $3+1$ spacetime coordinates.  Let $D_{\mathrm{Fermi}}\supset\mathcal{U}_{\mathrm{Fermi}}$ of Eq.\eqref{UFermi} be defined by,  
\begin{equation}\label{UFermi2}
D_{\mathrm{Fermi}} = \left\{(\tau,x,y,z): \tau >0 \text{ and } \sqrt{x^2+y^2+z^2} < \rho_{\text max}(\tau)\right\},
\end{equation}
where $\rho_{\text max}(\tau)$ is given by Eq.\eqref{D2b}. In light of Eq.\eqref{chitermk'} and Theorem \ref{gthetacontin}, $\lambda_{k}(\tau,\rho)$ in Eq.\eqref{lambda2} may be extended as a continuous function to $D_{\mathrm{Fermi}}$ by the formula,
\begin{equation}\label{lambda2'}
\lambda_{k}(\tau,\rho) =\frac{g_{\theta\theta}(\tau,\rho)-\rho^{2}}{\rho^{4}},
\end{equation}
for a regular scale factor $a(t)$.  In particular,
\begin{equation}
\begin{split}
\lambda_{k}(\tau,\rho_{\mathcal{M}_\tau})&=\lim_{\quad\rho\to\rho_{\mathcal{M}_\tau}^{-}}\lambda_{k}(\tau,\rho)\\
&=\lim_{\quad\rho\to\rho_{\mathcal{M}_\tau}^{-}}\frac{a^{2}(t_{0})S^2_k(\chi_{t_{0}}(\tau))-\rho^{2}}{\rho^{4}}
=\frac{1/H^2(0^+)-\rho^2_{\mathcal{M}_\tau}}{\rho_{\mathcal{M}_\tau}^{4}}.
\end{split}
\end{equation}

\noindent We note that if the hypotheses of Corollary \ref{anglelimit} are satisfied, then $\lambda_{k}(\tau,\rho_{\mathcal{M}_\tau})=-1/\rho_{\mathcal{M}_\tau}$.\\  

\noindent In analogy with Eq.\eqref{union} for the two dimensional case, we can now define the four dimensional spacetime $\overline{\mathcal{M}}$ as a disjoint union,
\begin{equation}\label{union4}
\overline{\mathcal{M}}=\mathcal{M}^{+}\cup\mathcal{M}^{0}\cup\mathcal{M}^{-},
\end{equation}
where, as before, the superscripts indicate respectively that cosmological time $t_{0}$ restricted to the set is positive, zero, or negative. Here $\mathcal{M}^{+}= \mathcal{M}$ denotes the original four dimensional Robertson-Walker universe where cosmological time is postive, and $t_{0}=t_{0}(\tau,\rho)$ is a continuous function of $\tau$ and $\rho$ within the Fermi chart and on $\mathcal{M}^{0}\cup\mathcal{M}^{-}$, all of which are covered by the chart $D_{\mathrm{Fermi}}$.  The original smooth charts on $\mathcal{M}^{+}= \mathcal{M}$ together with $D_{\mathrm{Fermi}}$ form an atlas on $\overline{\mathcal{M}}$. The metric restricted to the submanifold of $\overline{\mathcal{M}}$ with chart $D_{\mathrm{Fermi}}$ is given by,  
\begin{equation}
\begin{split}\label{fermimetric+}
ds^2=&\,g_{\tau\tau}(\tau,\rho)\, d\tau^2+dx^2 +dy^2+dz^2\\ 
+&\lambda_{k}(\tau,\rho)\big[(y^2+z^2)dx^2+(x^2+z^2)dy^2+(x^2+y^2)dz^2\\
-&xy(dxdy+dydx)-xz(dxdz+dzdx)-yz(dydz+dzdy)\big],
\end{split}
\end{equation} 
Under the assumptions of Theorem \ref{C1}, the metric of Eq.\eqref{fermimetric+} is smooth on $\overline{\mathcal{M}}\smallsetminus\mathcal{M}^{0}$.  The next theorem summarizes the results of this subsection.

\begin{theorem}\label{C02}
Suppose that one of the conditions of Lemma \ref{c1lemma2} hold and that the scale factor $a(t)$ is strongly regular. Suppose also that there is a constant $C>0$ such that
\begin{equation}\label{abound'}
\left|\frac{\dddot{a}(t)a^2(t)}{\dot{a}^3(t)}\right|\leqslant C,
\end{equation}
for all $t$. Then $g_{\tau\tau}(\tau,\rho)$ is continuously differentiable on $D_{\mathrm{Fermi}}$ and $\lambda_{k=0}(\tau,\rho)$ is continuous on $D_{\mathrm{Fermi}}$.
\end{theorem}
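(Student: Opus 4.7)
The plan is to assemble Theorem \ref{C02} as a bookkeeping-style synthesis of three already-established facts: (i) Theorem \ref{C1}, which gives continuous differentiability of $g_{\tau\tau}(\tau,\rho)$ on the two-dimensional domain $D$; (ii) Theorem \ref{gthetacontin}, which yields continuity of $g_{\theta\theta}(\tau,\rho)$ (and hence $g_{\phi\phi}$) on the polar chart $D_{\mathrm{polar}}$ when $k=0$; and (iii) the classical fact (recalled after Eq. \eqref{lambda2}) that, in the original Fermi chart $\mathcal{U}_{\mathrm{Fermi}}$, both $g_{\tau\tau}$ and $\lambda_k$ are smooth functions of $\tau$ and of $\rho^{2}$. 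The theorem is really a translation of (i) and (ii) from the $(\tau,\rho)$ and $(\tau,\rho,\theta,\phi)$ variables to the Cartesian Fermi variables $(\tau,x,y,z)$ on $D_{\mathrm{Fermi}}$.

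For the continuous differentiability of $g_{\tau\tau}$ on $D_{\mathrm{Fermi}}$, I would argue as follows. Since $g_{\tau\tau}$ depends on the spatial Cartesian coordinates only through the radius $\rho=\sqrt{x^{2}+y^{2}+z^{2}}$, the chain rule together with Theorem \ref{C1} immediately gives that the partials $\partial_{\tau}g_{\tau\tau}$ and $\partial_{x^{i}}g_{\tau\tau}=\partial_{\rho}g_{\tau\tau}\cdot(x^{i}/\rho)$ exist and are continuous at every point with $\rho>0$. The only point requiring care is the polar coordinate singularity at $\rho=0$. But any such point lies on the Fermi observer's worldline, where $t_{0}=\tau>0$, so it sits inside $\mathcal{U}_{\mathrm{Fermi}}\subset D_{\mathrm{Fermi}}$; in a neighborhood of this worldline, $g_{\tau\tau}$ is a smooth function of $\tau$ and $\rho^{2}$, and therefore smooth (a fortiori $C^{1}$) in the Cartesian coordinates $(\tau,x,y,z)$. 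Matching the two descriptions on the overlap yields $C^{1}$-regularity on all of $D_{\mathrm{Fermi}}$.

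For the continuity of $\lambda_{k=0}$ on $D_{\mathrm{Fermi}}$, I would use the extended formula \eqref{lambda2'}, writing $\lambda_{k}(\tau,\rho)=(g_{\theta\theta}(\tau,\rho)-\rho^{2})/\rho^{4}$. For $\rho>0$, Theorem \ref{gthetacontin} gives continuity of $g_{\theta\theta}$ on $D_{\mathrm{polar}}$, and composing with the continuous and nonvanishing function $\rho(x,y,z)$ yields continuity of $\lambda_{k=0}$ at every point with $\rho>0$ in $D_{\mathrm{Fermi}}$. At $\rho=0$, we are again inside $\mathcal{U}_{\mathrm{Fermi}}$, and the already-recalled smoothness of $\lambda_{k}$ as a function of $\tau$ and $\rho^{2}$ provides continuity (indeed smoothness) there, with matching value $\lambda_{k}(\tau,0)=0$.

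The main point requiring attention, as opposed to a genuine obstacle, is the coordinate singularity at $\rho=0$: one cannot simply invoke Theorem \ref{C1} or Theorem \ref{gthetacontin} there, because $(x,y,z)\mapsto\rho$ fails to be $C^{1}$ at the origin. Handling this requires the observation that $\rho=0$ lies in the original Robertson-Walker region $\mathcal{M}^{+}$, where the polynomial-in-$\rho^{2}$ structure of $g_{\tau\tau}$ and $\lambda_{k}$ was established in Section \ref{charts}; everything else is a routine chain-rule computation on the Cartesian chart.
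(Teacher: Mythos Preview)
Your proposal is correct and matches the paper's intent: the paper presents Theorem \ref{C02} as a summary statement (``The next theorem summarizes the results of this subsection'') with no explicit proof, relying implicitly on Theorem \ref{C1} for $g_{\tau\tau}$ and on Theorem \ref{gthetacontin} together with the extended formula \eqref{lambda2'} for $\lambda_{k=0}$. Your write-up is in fact more careful than the paper's, since you explicitly treat the polar coordinate singularity at $\rho=0$ by falling back on the smoothness of the Fermi metric inside $\mathcal{U}_{\mathrm{Fermi}}$; this is a point the paper leaves implicit. One minor quibble: the paper only states the smoothness-in-$\rho^{2}$ explicitly for $\lambda_{k}$, not for $g_{\tau\tau}$, so at $\rho=0$ it is cleaner to appeal directly to the fact that the Fermi metric \eqref{fermimetric} is smooth in the Cartesian coordinates $(\tau,x,y,z)$ on $\mathcal{U}_{\mathrm{Fermi}}$, rather than asserting a $\rho^{2}$-dependence for $g_{\tau\tau}$.
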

\begin{remark}\label{qadot}
A sufficient condition for the bound Eq.\eqref{abound'} under the hypotheses of Theorems \ref{C0} and \ref{C02} is the existence of a constant $C$ such that, 

\begin{equation}\label{abound''}
\left|\frac{\dddot{a}(t)\dot{a}(t)}{\ddot{a}^2(t)}\right|\leqslant \frac{C}{K^{2}},
\end{equation}
so that, for example, if $C=K^{2}$, then $\dot{a}(t)$ itself regarded as a scale factor would be regular according to Definition \ref{regular}.  This follows from the implication,
\begin{equation}
\left|\frac{a(t)\ddot{a}(t)}{\dot{a}(t)^{2}}\right|\leq K \implies \left|\frac{a(t)}{\dot{a}(t)}\right|\leq K\left|\frac{\dot{a}(t)}{\ddot{a}(t)}\right|.
\end{equation}
\end{remark}

\noindent The following corollary, established by direct calculation, shows consistency with the polar and cartesian forms of the metric extended to $D_{\mathrm{Fermi}}$.

\begin{corollary} \label{rank}
Under the hypotheses of Corollary \ref{anglelimit}, the Fermi metric of Eq.\eqref{fermimetric}  expressed as a $4$ by $4$ matrix, 
\[ \left( \begin{array}{cccc}
g_{\tau\tau} & 0 & 0 & 0 \\
0 & 1+\lambda_{k}(y^{2}+z^{2}) & -\lambda_{k}xy& -\lambda_{k}xz\\
0 & -\lambda_{k}xy & 1+\lambda_{k}(x^{2}+z^{2}) & -\lambda_{k}yz\\
0 & -\lambda_{k}xz & -\lambda_{k}yz & 1+\lambda_{k}(x^{2}+y^{2})
 \end{array} \right)\] 
and evaluated at $(\tau,\rho_{\mathcal{M}_\tau})$ has rank $2$ for all $\tau>0$.  Thus the cotangent space at each point in $\mathcal{M}^{0}$ is two dimensional.
\end{corollary}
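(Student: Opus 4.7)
The plan is to explicitly diagonalize (or at least identify the kernel of) the spatial $3\times 3$ block and combine that with the fact, already established in Theorem \ref{c0thrm}, that the $(\tau,\tau)$ entry is nonzero at the big bang.

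First I would invoke the hypotheses of Corollary \ref{anglelimit}, which force $g_{\theta\theta}(\tau,\rho_{\mathcal{M}_\tau})=0$, and therefore, by Eq.\eqref{lambda2'},
\begin{equation*}
\lambda_k(\tau,\rho_{\mathcal{M}_\tau})=-\frac{1}{\rho_{\mathcal{M}_\tau}^{\,2}}.
\end{equation*}
At any point $(\tau,x,y,z)$ with $x^2+y^2+z^2=\rho_{\mathcal{M}_\tau}^{\,2}$, set $v=(x,y,z)^{T}$ and $\rho=\rho_{\mathcal{M}_\tau}$. A direct inspection of the spatial $3\times 3$ block $M$ shows that it can be written in the compact form
\begin{equation*}
M \;=\; I+\lambda_k\bigl(\rho^{2}I-vv^{T}\bigr)\;=\;(1+\lambda_k\rho^{2})\,I-\lambda_k\,vv^{T}.
\end{equation*}
Substituting $\lambda_k=-1/\rho^{2}$ makes the coefficient of $I$ vanish, leaving $M=\rho^{-2}\,vv^{T}$, which is manifestly a rank-one matrix (it is a nonzero outer product of a single nonzero vector with itself, with image spanned by $v$ and kernel equal to $v^{\perp}$).

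Next I would combine this with the other nonzero piece of the metric. By Theorem \ref{c0thrm}, $g_{\tau\tau}(\tau,\rho_{\mathcal{M}_\tau})<0$, so the $(\tau,\tau)$ entry contributes rank one independently of the spatial block (the matrix is block diagonal in the $\tau$ row/column against the spatial rows/columns). Hence the total rank of the $4\times 4$ matrix at $(\tau,\rho_{\mathcal{M}_\tau})$ is exactly $1+1=2$. Consequently the cotangent space at each point of $\mathcal{M}^{0}$ is two dimensional, as claimed.

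There is no real obstacle here; the only thing to be careful about is confirming the algebraic identity $M=(1+\lambda_k\rho^{2})I-\lambda_k vv^{T}$ (a one-line check entry by entry using $y^{2}+z^{2}=\rho^{2}-x^{2}$, etc.) and invoking the correct limiting value of $\lambda_k$ from the previous subsection rather than its generic value $(1/H^{2}(0^{+})-\rho_{\mathcal{M}_\tau}^{\,2})/\rho_{\mathcal{M}_\tau}^{\,4}$, which would only give rank $2$ under the additional vanishing of $g_{\theta\theta}$ guaranteed by Corollary \ref{anglelimit}.
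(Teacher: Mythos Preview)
Your argument is correct and is precisely the ``direct calculation'' the paper alludes to without writing out; your compact rewriting of the spatial block as $M=(1+\lambda_k\rho^{2})I-\lambda_k vv^{T}$, followed by the substitution $\lambda_k(\tau,\rho_{\mathcal{M}_\tau})=-1/\rho_{\mathcal{M}_\tau}^{\,2}$ from Corollary~\ref{anglelimit} to get $M=\rho^{-2}vv^{T}$, together with $g_{\tau\tau}(\tau,\rho_{\mathcal{M}_\tau})\neq 0$ from Theorem~\ref{c0thrm}, is exactly the intended verification.
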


\begin{remark}\label{geodesic2} The submanifold obtained by assigning fixed values  $\theta_{0}$ and $\phi_{0}$ to the angular coordinates in the chart $D_{\mathrm{polar}}$ for $\overline{\mathcal{M}}$ is the two dimensional spacetime analyzed in Section \ref{extend}. From Remark \ref{geodesic}, it follows that the spacelike path $\gamma(\rho)=(\tau_{0}, \rho, \theta_{0}, \phi_{0})$ with $|\rho|<\rho_{\text max}(\tau_{0})$ is geodesic in the submanifold, intersects the big bang $\mathcal{M}^{0}$, and reaches pre-big bang points in $\mathcal{M}^{-}$.
\end{remark}

\noindent The next theorem collects assumptions from Theorems \ref{gcontin} and \ref{gthetacontin} needed for a continuous (not necessarily differentiable) extension of the metric to $\mathcal{M}^{0}$.

\begin{theorem}\label{justcontin}
Let $k=0$ and let $a(t)$ be a strongly regular scale factor. Then the metric coefficients of Eq.\eqref{fermimetric+} are continuous on $D_{\mathrm{Fermi}}$, and the metric coefficients for the polar form, Eq.\eqref{extendpolar}, are continuous on $D_{\mathrm{polar}}$.
\end{theorem}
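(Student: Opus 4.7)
The plan is to assemble Theorem \ref{justcontin} as a corollary of the continuity results already established, verifying that each coefficient appearing in Eq.\eqref{fermimetric+} and Eq.\eqref{extendpolar} is continuous on the asserted domain. There is no new analytic content beyond reconciling the two coordinate representations and checking continuity at the central worldline $\rho=0$ for the Cartesian form.

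First I would handle the polar form Eq.\eqref{extendpolar}. The coefficient $g_{\rho\rho}\equiv 1$ is trivially continuous. Continuity of $g_{\tau\tau}$ on $D_{\mathrm{polar}}$ follows from Theorem \ref{gcontin}, since the $(\tau,\rho)$-projection of $D_{\mathrm{polar}}$ satisfies $0<\rho<\rho_{\mathrm{max}}(\tau)\leq 2\rho_{\mathcal{M}_\tau}$ and is therefore contained in $D$ of Eq.\eqref{D}. Because $k=0$ is assumed, Theorem \ref{gthetacontin} gives continuity of $g_{\theta\theta}(\tau,\rho)$ on $D_{\mathrm{polar}}$, and then $g_{\phi\phi}(\tau,\rho,\theta)=g_{\theta\theta}(\tau,\rho)\sin^2\theta$ is continuous as well. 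This disposes of the polar claim.

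Next I would turn to the Cartesian form Eq.\eqref{fermimetric+} on $D_{\mathrm{Fermi}}$. Continuity of $g_{\tau\tau}(\tau,\rho)$ with $\rho=\sqrt{x^2+y^2+z^2}$ follows by composing Theorem \ref{gcontin} with the (smooth) map $(x,y,z)\mapsto\rho$; this composition is continuous at $\rho=0$ because $g_{\tau\tau}$ is defined and continuous there by the original Fermi construction of Eq.\eqref{oldg}, $\rho=0$ being a point on the central worldline lying in $\mathcal{M}^+$. For the remaining coefficients we need continuity of $\lambda_{k=0}(\tau,\rho)$ given by Eq.\eqref{lambda2'}. For $\rho>0$, continuity is immediate from continuity of $g_{\theta\theta}$ (Theorem \ref{gthetacontin}) since $\lambda_0=(g_{\theta\theta}-\rho^2)/\rho^4$ is then a continuous algebraic combination. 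At points with $\rho=0$ (which lie in $\mathcal{M}^+$, since $t_0=\tau>0$), I would invoke the observation already recorded after Eq.\eqref{lambda2} that $\lambda_k(\tau,\rho)$ is a smooth function of $\tau$ and $\rho^2$ on the original Fermi chart $\mathcal{U}_{\mathrm{Fermi}}$, so in particular continuous at $\rho=0$ with value $0$, matching the definition $\lambda_k(\tau,0)=0$.

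The main obstacle, if any, is the apparent mismatch between the polar-form coefficient $g_{\theta\theta}$ (which may be nonzero at $\rho=\rho_{\mathcal{M}_\tau}$ and whose limits as $t_0\to 0$ are computed via Eq.\eqref{H(0+)}) and the requirement that the Cartesian coefficient $\lambda_0(\tau,\rho)=(g_{\theta\theta}(\tau,\rho)-\rho^2)/\rho^4$ extend continuously across $\mathcal{M}^0$. This is handled purely by the denominator $\rho^4\geq \rho_{\mathcal{M}_\tau}^4>0$ being bounded away from zero near $\mathcal{M}^0$, so that continuity of the numerator (from Theorem \ref{gthetacontin}) transfers directly to $\lambda_0$. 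Collecting these observations, the metric coefficients of Eq.\eqref{fermimetric+} are continuous on all of $D_{\mathrm{Fermi}}$ and the metric coefficients of Eq.\eqref{extendpolar} are continuous on all of $D_{\mathrm{polar}}$, completing the proof.
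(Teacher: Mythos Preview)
Your proposal is correct and matches the paper's own treatment: Theorem \ref{justcontin} is stated without proof precisely because it ``collects assumptions from Theorems \ref{gcontin} and \ref{gthetacontin},'' and you have correctly identified these as supplying continuity of $g_{\tau\tau}$ and of $g_{\theta\theta}$, $g_{\phi\phi}$ respectively. Your additional care in handling $\lambda_{0}$ near $\rho=0$ via the smoothness remark after Eq.\eqref{lambda2}, and near $\mathcal{M}^{0}$ via the bounded-away-from-zero denominator, fills in the routine details the paper leaves implicit.
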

\noindent We note that the metric is as smooth as $a(t)$ off of the big bang $\mathcal{M}^{0}$.

\section{Examples}\label{examples}

In this section we give examples of Robertson-Walker cosmologies satisfying the conditions of  Theorems \ref{C1}, \ref{C0} and \ref{C02}. We begin with power law cosmologies, i.e., those with scale factors of the form $a(t)=t^\alpha$ with $\alpha>0$. These cosmologies include the radiation-dominated and matter-dominated universes, and models for dark energy (see \cite{power}).\\

\noindent For the power law scale factor $a(t)=t^\alpha$, Eq. \eqref{newg} gives,
\begin{equation}
g_{\tau\tau}(\tau,\rho)=-\left[\left(\frac{\tau}{t_0}\right)^{\alpha-1}-\sqrt{1-\left(\frac{t_0}{\tau}\right)^{2\alpha}}\left(\left(\frac{\tau}{t_0}\right)^{\alpha-1}{}_2F_1\left(\frac{1}{2},\frac{1-\alpha}{2\alpha};\frac{1+\alpha}{2\alpha};\left(\frac{t_0}{\tau}\right)^{2\alpha}\right)-C_\alpha\right)\right]^2,
\end{equation} 

\noindent where $t_0$ is given implicitly as a function of $\tau$ and $\rho$ by Eq. \eqref{properAlt}, ${}_2F_1(\cdot,\cdot;\cdot;\cdot)$ is the Gauss hypergeometric function, and (see \cite{Bolos12,sam}),

\begin{equation}\label{Ca}
C_\alpha=\frac{\rho_{\mathcal{M}_{\tau}}}{\tau}=\frac{\sqrt{\pi}\,\Gamma(\frac{1+\alpha}{2\alpha})}{\Gamma(\frac{1}{2\alpha})}.
\end{equation}
With Theorem \ref{gcontin}, it follows that,
\begin{equation}
\lim_{\quad\rho\rightarrow\rho_{\mathcal{M}_{\tau}}^{-}}g_{\tau\tau}(\tau,\rho)=g_{\tau\tau}(\tau, \rho_{\mathcal{M}_{\tau}})=-C_\alpha^2.
\end{equation}

\noindent Also by Eq. \eqref{properAlt},

\begin{equation}
\rho=\tau\left[C_\alpha-\frac{1}{1+\alpha}\left(\frac{t_0}{\tau}\right)^{1+\alpha}{}_2F_1\left(\frac{1}{2},\frac{1+\alpha}{2\alpha};\frac{1+3\alpha}{2\alpha};\left(\frac{t_0}{\tau}\right)^{2\alpha}\right)\right].
\end{equation}

\noindent Implicit differentiation or Eq.\eqref{taudp} gives,

\begin{equation}
\pder{t_0}{\tau}=\frac{\sqrt{\tau^{2\alpha}-t_0^{2\alpha}}}{t_0^\alpha}\frac{\rho}{\tau}+\frac{t_0}{\tau},
\end{equation}
and further calculations show,

\begin{equation}
\partial_\rho g_{\tau\tau}(\tau,\rho)=\frac{2\alpha}{\tau}\sqrt{-g_{\tau\tau}(\tau,t_0)}\left({}_2F_1\left(\frac{1}{2},\frac{1-\alpha}{2\alpha};\frac{1+\alpha}{2\alpha};\left(\frac{t_0}{\tau}\right)^{2\alpha}\right)-\left(\frac{t_0}{\tau}\right)^{\alpha-1}C_\alpha\right),
\end{equation}

\noindent and,

\begin{align}
\partial_{\tau}g_{\tau\tau}(\tau,\rho)&=-\frac{2\alpha}{\tau}\sqrt{-g_{\tau\tau}(\tau,t_0)}\left({}_2F_1\left(\frac{1}{2},\frac{1-\alpha}{2\alpha};\frac{1+\alpha}{2\alpha};\left(\frac{t_0}{\tau}\right)^{2\alpha}\right)-\left(\frac{t_0}{\tau}\right)^{\alpha-1}C_\alpha\right)\times\nonumber\\
&\left(C_\alpha-\frac{1}{\alpha+1}\left(\frac{t_0}{\tau}\right)^{1+\alpha}{}_2F_1\left(\frac{1}{2},\frac{1+\alpha}{2\alpha};\frac{1+3\alpha}{2\alpha};\left(\frac{t_0}{\tau}\right)^{2\alpha}\right)\right)\nonumber\\
&=-\frac{\rho}{\tau}\partial_\rho g_{\tau\tau}(\tau,\rho).
\end{align}

\noindent Hence,

\begin{equation}\label{limit1}
\partial_\rho g_{\tau\tau}(\tau, \rho_{\mathcal{M}_{\tau}})=\lim_{\quad\rho\rightarrow\rho_{\mathcal{M}_{\tau}}^{-}}\partial_\rho g_{\tau\tau}(\tau,\rho)=
\begin{cases}
\frac{2\alpha C_\alpha}{\tau}&\text{if }\alpha>1\\
-\infty&\text{if }0<\alpha<1
\end{cases}
\end{equation}

\noindent and,

\begin{equation}\label{limit2}
\partial_\tau g_{\tau\tau}(\tau, \rho_{\mathcal{M}_{\tau}})=\lim_{\quad\rho\rightarrow\rho_{\mathcal{M}_{\tau}}^{-}}\partial_{\tau}g_{\tau\tau}(\tau,\rho)=
\begin{cases}
\frac{-2\alpha C_\alpha^2}{\tau}&\text{if }\alpha>1\\
\infty&\text{if }0<\alpha<1.
\end{cases}
\end{equation}
If $\alpha<1$, then $\dot{a}(0^+)=\infty$ and $a(t)=t^{\alpha}$ is not inflationary near $t=0$ (nor at any time) and so $a(t)$ fails to satisfy the hypotheses of Theorems \ref{C1} or \ref{C02}.  For the Milne case (see Remarks \ref{Mink2} and \ref{Mink1}), $\alpha=1$ and taking into account that $\dot{a}(0)=1>0$, the hypotheses of Theorem \ref{C1} are satisfied, so the implied properties of Fig.\ref{Milne} given in the introduction follow from that theorem.\\  

\noindent If $\alpha>1$, then the scale factor $a(t)=t^{\alpha}$ is inflationary near $t=0$ (and for all $t$), and satisfies the hypotheses of Theorems \ref{C1} and \ref{C02} for $k=0$, so the conclusions hold for the cosmologies that are inflationary at the big bang, as shown by Eqs. \ref{limit1} and \ref{limit2}.\\

\begin{figure}[!h]
\begin{center}
\includegraphics[trim=0mm 85mm 4mm 85mm,clip,scale=0.52]{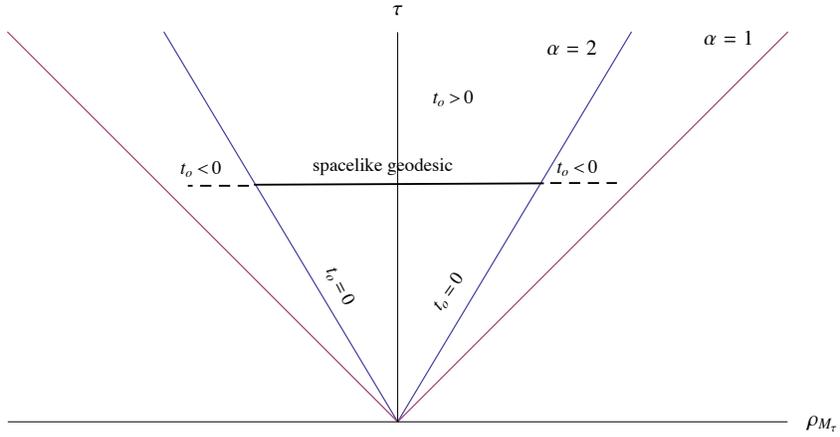}
\end{center}
\caption{A portion of the extended Robertson-Walker cosmology $\overline{\mathcal{M}}$ with scale factor $a(t)=t^{2}$ (i.e. $\alpha=2$).  The Milne Universe (see Fig.\ref{Milne}) with $\alpha=1$ is included only for comparison of Fermi radii, $\rho_{\mathcal{M}_{\tau}}$.  The comoving observer's worldline is the vertical line $\rho=0$ in the center. The dashed portion of the horizontal line extends the spacelike geodesic beyond the $\alpha=2$ universe $\mathcal{M}^{+}=\mathcal{M}$ through $t_{0}=0$ to include points with negative cosmological times. }  
\label{alpha=2}
\end{figure}

\noindent Figure \ref{alpha=2} depicts part of the extension $\overline{\mathcal{M}}$ of the Robertson-Walker cosmology with power law scale factor $a(t)=t^{2}$, i.e., $\alpha=2$. From Eq. \eqref{Ca} the boundary $\mathcal{M}^{0}$ of the Fermi chart in $(\tau,\rho)$ coordinates is,
\begin{equation}
\rho_{\mathcal{M}_{\tau}}=\frac{\sqrt{\pi}\,\Gamma(\frac{1+\alpha}{2\alpha})}{\Gamma(\frac{1}{2\alpha})}\tau=\frac{\sqrt{\pi}\,\Gamma(\frac{3}{4})}{\Gamma(\frac{1}{4})}\tau\approx0.6\tau.
\end{equation}
The Milne Universe ($\alpha=1$) in $(\tau,\rho)$ coordinates, whose boundary satisfies $\rho_{\mathcal{M}_{\tau}}=\tau$, is superimposed for comparison only.
\begin{remark} It can be shown for two dimensional power law cosmologies with $\alpha\geq1$, using Eqs. \ref{limit1} and \ref{limit2}, that $\mathcal{M}^{0}$ parameterized as $(\tau, \pm\rho_{\mathcal{M}_{\tau}})$ is both geodesic with affine parameter $\tau$ and lightlike (see Theorem \ref{lightlike}).
\end{remark}

\noindent For another class of examples, consider a Robertson-Walker cosmology with cosmological constant $\Lambda>0$, curvature parameter $k=0$, and an equation of state for a perfect fluid of the form, $p=(\gamma-1)\rho$, where for our present purposes only, $p$ is pressure, $\rho$ is energy density, and $1\leq\gamma\leq2$ is a constant. For a universe with positive cosmological constant and with matter alone, $\gamma=1$.  For radiation but no matter, $\gamma=4/3$.  The scale factor is then determined \cite{GP} by the Einstein field equations as,
\begin{equation}\label{lambdamatter}
a(t)= A\left[\sinh\left(\frac{3}{2}\sqrt{\frac{\Lambda}{3}}\,\gamma \,t\right)\right]^{2/3\gamma},
\end{equation}
for a constant $A$.  It can be shown that the scale factors given by Eq.\eqref{lambdamatter} satisfy the hypotheses of Theorems \ref{C1} and \ref{C02} for $k=0$, except that $a(t)$ is not inflationary near $t=0$. However, if $\gamma<2/3$,  in which case the pressure of the perfect fluid is negative, then all the hypotheses are satisfied.  For a scale factor that takes the form Eq.\eqref{lambdamatter} with $\gamma<2/3$ close to the big bang at $t_{0}=0$, later evolving into a scale factor with the same form but with $1\leq\gamma\leq2$, the conditions of Theorems \ref{C1} and \ref{C02} can be satisfied, and so the cosmologies can be extended as indicated by those theorems.

\section{Conclusions}\label{conclusions}

\noindent We have constructed geometric extensions of Robertson-Walker cosmologies by extending all spacelike geodesics orthogonal to a comoving observer through and beyond the big bang. Such extensions are possible because, under general conditions, spacelike geodesics with different initial points along a comoving observer's worldline do not intersect \cite{randles,klein13}. Although the extension $\mathcal{M}\to\overline{\mathcal{M}}$  is geometric and coordinate independent, Fermi coordinates are particularly well suited for the construction because the spacelike geodesics are coordinate curves in Fermi coordinates.\\

\noindent In the construction of $\overline{\mathcal{M}}$, the geometry of $\mathcal{M}^{-}$, the submanifold of pre-big bang points, is \textit{a priori} arbitrary, but it is natural to restrict the geometry by imposing as much regularity on the Fermi metric across the big bang $\mathcal{M}^{0}$ as possible, and to examine the restrictions on the scale factor required to achieve that regularity.  The metric in the extended Fermi coordinates is given by Eq.\eqref{fermimetric+}, and in polar form, may be expressed as,
\begin{equation}\label{fermipolarlast}
ds^2=g_{\tau\tau}(\tau,\rho) d\tau^2+d\rho^2 + g_{\theta\theta}(\tau,\rho) d\theta^{2}+g_{\phi\phi}(\tau,\rho,\theta) d\phi^{2}.
\end{equation}
Under the assumptions we make for $a(t)$,  $g_{\tau\tau}$ is continuously differentiable on $\mathcal{M}^{0}$, and with the curvature parameter $k=0$, $g_{\theta\theta}$ and $g_{\phi\phi}$ are continuous there (and the metric is otherwise smooth on $\overline{\mathcal{M}}$). In general greater smoothness is not possible, but even this degree of regularity necessarily constrains the geometry of $\mathcal{M}^{-}$. For a scale factor with nonvanishing $n$th derivative at $t=0$, the coefficients $g_{\theta\theta}$ and $g_{\phi\phi}$ necessarily vanish on $\mathcal{M}^{0}$ by Corollary \ref{anglelimit} which gives the big bang a two dimensional structure.  \\

\noindent The restrictions on the scale factor $a(t)$ needed to accomplish this regularity across the big bang are broadly consistent with observations.   In addition to spatial flatness, the main restriction is that $a(t)$ is increasing (so that the universe is expanding) and that $\ddot{a}(t)>0$ in a neighborhood of the big bang, i.e., inflation occurred right after the big bang.\footnote{The condition $\infty>\dot{a}(0^{+})>0$ may be substituted for inflation near the big bang.}\\  

\noindent Additional restrictions are also needed.  Regularity of the scale factor (see Definition \ref{regular}) may be understood in terms of the Hubble parameter, $H(t)=\dot{a}(t)/a(t)$ and the deceleration parameter $q$ defined by,  
\begin{equation}
q=-\frac{a(t)\ddot{a}(t)}{\dot{a}(t)^{2}}.
\end{equation}
Definition \ref{regular}c may be expressed as $q\geq-1$ which is equivalent to the condition that $H(t)$ is a non increasing function of $t$. In terms of the dimensionless density parameters, $\Omega_{M}, \Omega_{R}, \Omega_{\Lambda}$ for mass, radiation (and relativisitic matter), and cosmological constant, respectively, $q$ may written as,
\begin{equation}\label{q}
q=\frac{1}{2}(\Omega_{M}-2\Omega_{\Lambda}+2\Omega_{R}).
\end{equation}
Since each of the densities takes values between $0$ and $1$, it follows from Eq.\eqref{q} that $q\geq-1$. We require in addition that $|q|$ be bounded for all $t$, but with no restrictions on the size of the bound.  The present value, $q_{0}$, has been measured as $-0.58$ by the Supernova Cosmology Project \cite{weinberg}.  With this restriction for the spatially flat case, $k=0$, it follows from Theorem \ref{justcontin} that the metric tensor is continuous across $\mathcal{M}^{0}$. \\

\noindent The remaining restriction we place on the scale factor for Theorems \ref{C1}, \ref{C0}, and \ref{C02} may be understood in similar terms.  By Remark \ref{qadot}, a sufficient additional condition for the inequality Eq. \eqref{abound'} is that $\dot{a}(t)$, regarded as a scale factor itself, is strongly regular, so that its own deceleration parameter is bounded.\\

\noindent In the two dimensional case (or on submanifolds of $\overline{\mathcal{M}}$ with fixed angular coordinates), Theorem \ref{lightlike} shows that the cosmological time zero submanifold $\mathcal{M}^{0}$ is lightlike.  The proof makes use of the relationship between the geometry of Robertson-Walker cosmologies and geometrically defined relative velocities previously developed.\\ 

\noindent Not included in our extended cosmologies $\overline{\mathcal{M}}$ are spacetime points corresponding to $\tau=0$ with $\rho=0$, formally the spacetime point of the comoving observer at the big bang.  Such points must be excluded using the methods we employ here, but it is perhaps possible that by including multiple extended Fermi charts for different comoving observers, a point identified as $\tau=0$ for one observer could be included in an extended chart of another comoving observer. 

\section{Appendix}\label{appendix}

In this appendix we collect the proofs of the theorems and lemmas stated in Section \ref{1+1}.\\

\textbf{Proof of Lemma} \ref{t0cont}:

\begin{proof} Let $\tau_0>0$ be fixed. Choose $M>0$ so that $\tau_0<M$. We have that $t_0(\tau_0,\rho_{\mathcal{M}_{\tau_0}})=0$. Let $\epsilon>0$, with $\epsilon<M$. Set

\begin{equation}
\delta_1=\frac{1}{2}\int_0^\epsilon \frac{a(t)}{\sqrt{a^2(M)-a^2(t)}}dt.
\end{equation}
Since $\rho_{\mathcal{M}_{\tau}}$ is differentiable by Eq.\eqref{dradius}, it is continuous in $\tau$, so choose a $\delta_2$ so that $|\tau-\tau_0|<\delta_2$ implies that

\begin{equation}
\left|\int_0^\tau \frac{a(t)}{\sqrt{a^2(\tau)-a^2(t)}}dt-\int_0^{\tau_0}\frac{a(t)}{\sqrt{a^2(\tau_0)-a^2(t)}}dt\right|<\delta_1
\end{equation}
Let $\delta=\min\{\delta_1,\delta_2,M-\tau_0\}$ and suppose that $|\tau-\tau_0|<\delta\leqslant \delta_2$ and $|\rho-\rho_{\mathcal{M}_{\tau_0}}|<\delta\leqslant \delta_1$. From the triangle inequality,
\begin{align}
&|\rho-\rho_{\mathcal{M}_{\tau_0}}|=\left|\int_{t_0}^\tau \frac{a(t)}{\sqrt{a^2(\tau)-a^2(t)}}dt-\int_0^{\tau_0}  \frac{a(t)}{\sqrt{a^2(\tau_0)-a^2(t)}}dt\right|\nonumber\\
&=\left|\int_{0}^\tau \frac{a(t)}{\sqrt{a^2(\tau)-a^2(t)}}dt-\int_0^{\tau_0}  \frac{a(t)}{\sqrt{a^2(\tau_0)-a^2(t)}}dt-\int_0^{t_0}\frac{a(t)}{\sqrt{a^2(\tau)-a^2(t)}}dt\right|\nonumber\\
&\geqslant \left|\int_0^{t_0}\frac{a(t)}{\sqrt{a^2(\tau)-a^2(t)}}dt\right|-\left|\int_{0}^\tau \frac{a(t)}{\sqrt{a^2(\tau)-a^2(t)}}dt-\int_0^{\tau_0}  \frac{a(t)}{\sqrt{a^2(\tau_0)-a^2(t)}}dt\right|.
\end{align}
Hence,
\begin{equation}\label{tintbound}
\begin{split}
&\left|\int_0^{t_0} \frac{a(t)}{\sqrt{a^2(\tau)-a^2(t)}}dt\right|\leq \left|\rho-\rho_{\mathcal{M}_{\tau_0}}\right|+\\ 
&\left|\int_{0}^\tau \frac{a(t)}{\sqrt{a^2(\tau)-a^2(t)}}dt-\int_0^{\tau_0}  \frac{a(t)}{\sqrt{a^2(\tau_0)-a^2(t)}}dt\right|<\delta_1+\delta_1= 2\delta_1.
\end{split}
\end{equation}
Now, since $a$ is an even function and $|\tau-\tau_0|<\delta\leqslant M-\tau_0$ implies that $\tau<M$, we have that
\begin{align}\label{tintbound2}
\int_0^{|t_0|}\frac{a(t)}{\sqrt{a^2(M)-a^2(t)}}dt&\leqslant \int_0^{|t_0|}\frac{a(t)}{\sqrt{a^2(\tau)-a^2(t)}}dt=\left|\int_0^{t_0}\frac{a(t)}{\sqrt{a^2(\tau)-a^2(t)}}dt\right|\nonumber\\
&<2\delta_1=\int_0^\epsilon \frac{a(t)}{\sqrt{a^2(M)-a^2(t)}}dt,
\end{align}
where we used Eq. \eqref{tintbound} in the second line. Since 
\begin{equation}
\int_0^{|t_0|}\frac{a(t)}{\sqrt{a^2(M)-a^2(t)}}dt
\end{equation}
defines a strictly increasing function of $|t_0|$, Eq. \eqref{tintbound2} implies that $|t_0|<\epsilon$.
\end{proof}

\textbf{Proof of Theorem} \ref{gcontin}

\begin{proof}  From Eq. \eqref{newg3}, it suffices to prove continuity of the function $\mathbf{f}(\tau,\rho)\equiv f(\tau, t_{0}(\tau,\rho))$ at any point of the form $(\tau_{0},\rho_{\mathcal{M}_{\tau_{0}}})$ since continuity at all other points in $D$ follows from smoothness of Fermi coordinates on their original charts.  \\

\noindent We first simplify notation by abbreviating the integrand for the function $f(\tau,t_0)$ in Eq.\eqref{gint} for the case $t_{0}\geq0$ as follows,

\begin{equation}
f(\tau,t_{0})=\int_{t_{0}}^{\tau}h(\tau,t_{0},t)dt.
\end{equation}
Now define a new function of two independent variables $\tau$ and $x$, by,
\begin{equation}\label{defh}
f(\tau,x)=\int_{x}^{\tau}h(\tau,x,t)dt=\int_{x}^{\tau}\frac{\ddot{a}(t)}{\dot{a}(t)^{2}}\left(\frac{\sqrt{a^{2}(\tau)-a^{2}(x)}}{\sqrt{a^{2}(\tau)-a^{2}(t)}}-1\right)dt
\end{equation}
with the restriction $\tau>x\geq0$.  Since $t_{0}(\tau_{0},\rho_{\mathcal{M}_{\tau_{0}}})=0$, our plan of proof is first to show that $f(\tau,x)$ is continuous at any point of the form $(\tau,0)$. Then using Lemma \ref{t0cont},  the composition $f(\tau, t_{0}(\tau,\rho))$ must be continuous  at any point of the form $(\tau_{0},\rho_{\mathcal{M}_{\tau_{0}}})$ in the domain where $t_{0}\geq0$.  It then follows directly from Eq. \eqref{gint} and from the easily verified result that $f(\tau,0)$ is a continuous function of $\tau$ that the restriction $t_{0}\geq0$ may be removed, and the theorem will be established. \\

\noindent If $\tau\geq\tau_{0}>0$, we may assume with no loss of generality that $x<\tau_{0}$. Then from the triangle inequality,
\begin{multline}\label{gttcont}
|f(\tau,x)-f(\tau_{0},0)|\leq\int_{x}^{\tau_{0}}|h(\tau,x,t)-h(\tau_{0},0,t)|dt \\+\left|\int_{\tau_{0}}^{\tau}h(\tau,x,t)dt\right|
+\left|\int_{0}^{x}h(\tau_{0},0,t)dt\right|.
\end{multline}
The third term can be made arbitrarily small by choosing $x$ sufficiently close to zero because $h(\tau_{0},0,t)$ is integrable. The second term on the right is small for $\tau$ sufficiently close to $\tau_{0}$, uniformly in $x$ according to Lemma \ref{continuitylemma}(a) with $\ell(t)=\ddot{a}(t)/\dot{a}(t)^{2}$. The first term on the right side of Eq.\eqref{gttcont} can be made arbitrarily small by choosing $\tau$ sufficiently close to $\tau_{0}$ and $x$ sufficiently close to $0$ by using the fact that $h$ is continuous jointly in all its variables and that $h(\tau_{0},0,t)$ and $h(\tau,x,t)$ are both integrable on $(0,\tau_{0})$.   \\

\noindent  For the other case,  $\tau<\tau_{0}$, the right side of Eq. \eqref{gttcont} must be modified by replacing $\tau_{0}$ by $\tau$ in the upper limit of integration of the first term, interchanging  $\tau_{0}$ and $\tau$ and setting $x=0$ in the second term, and leaving the third term unchanged.  All three terms may then be bounded as before. Thus, $f(\tau,\rho)$ is continuous at any point of the form $(\tau_{0},\rho_{\mathcal{M}_{\tau_{0}}})$ and therefore $g_{\tau\tau}$ is continuous on $D$.
\end{proof}

\textbf{Proof of Lemma }\ref{c1lemma1}

\begin{proof} 

\noindent Let $\tau>0$. First we show that Eq. \eqref{dt0g} holds for $t_0\in(0,\tau)$. Choose $\delta>0$ so that $[t_0-\delta,t_0+\delta]\subset(0,\tau)$. As in the proof of Theorem \ref{gcontin}, let $h(\tau,t_0,t)$ denote the integrand of Eq. \eqref{gint}. For any $\Delta t_{0}$ with $0<|\Delta t_{0}|<\delta$, we have that
\begin{multline}
\frac{f(\tau,t_0+\Delta t_{0})-f(\tau,t_0)}{\Delta t_{0}}\\ =\int_{t_0}^\tau \frac{h(\tau,t_0+\Delta t_{0},t)-h(\tau,t_0,t)}{\Delta t_{0}}dt-\int_{t_0}^{t_0+\Delta t_{0}}\frac{h(\tau,t_0+\Delta t_{0},t)}{\Delta t_{0}}dt
\end{multline}
By the mean value theorem there exists $\zeta_{\Delta t_{0}}$ and $\xi_{\Delta t_{0}}$ between $t_0$ and $t_0+\Delta t_{0}$ such that
\begin{equation}
\int_{t_0}^{t_0+\Delta t_{0}}\frac{h(\tau,t_0+\Delta t_{0},t)}{\Delta t_{0}}dt=h(\tau,t_0+\Delta t_{0},\zeta_{\Delta t_{0}}),
\end{equation}
and
\begin{multline}
\frac{h(\tau,t_0+\Delta t_{0},t)-h(\tau,t_0,t)}{\Delta t_{0}}=\partial_{t_0}h(\tau,\xi_{\Delta t_{0}},t)\\=-\frac{a(\xi_{\Delta t_{0}})\dot{a}(\xi_{\Delta t_{0}})}{\sqrt{a^{2}(\tau)-a^{2}(\xi_{\Delta t_{0}})}}\frac{\ddot{a}(t)}{\dot{a}(t)^{2}}\frac{1}{\sqrt{a^{2}(\tau)-a^{2}(t)}}
\end{multline}
So, we can write
\begin{multline}\label{L2int}
\frac{f(\tau,t_0+\Delta t_{0})-f(\tau,t_0)}{\Delta t_{0}}=-\frac{a(\xi_{\Delta t_{0}})\dot{a}(\xi_{\Delta t_{0}})}{\sqrt{a^{2}(\tau)-a^{2}(\xi_{\Delta t_{0}})}}\int_{t_0}^\tau\frac{\ddot{a}(t)}{\dot{a}(t)^{2}}\frac{dt}{\sqrt{a^{2}(\tau)-a^{2}(t)}}\\-h(\tau,t_0+\Delta t_{0},\zeta_{\Delta t_{0}})
\end{multline}
Now let, $\Delta t_{0}\to0$, so that $\zeta_{\Delta t_{0}}\to t_0$ and $\xi_{\Delta t_{0}}\to t_0$. Eq. \eqref{dt0g} follows from Eq. \eqref{L2int} by continuity. Now if $t_0\in(-\tau,0)$, we have that
\begin{equation}
\partial_{t_0}f(\tau,t_0)=\partial_{t_0}[2f(\tau,0)-f(\tau,-t_0)]=\partial_{t_0}f(\tau,-t_0),
\end{equation}
which completes the proof since $-t_0=|t_0|$.
\end{proof}

\textbf{Proof of Lemma} \ref{c1lemma2}

\begin{proof}
(a) By assumption $\ddot{a}(t)/\dot{a}^2(t)\geqslant0$ on $(0,\epsilon)$. It follows that for any $t_0\in(0,\epsilon)$,
\begin{align}\label{divrintp}
\int_{t_{0}}^{\tau}\frac{\ddot{a}(t)}{\dot{a}(t)^{2}}\frac{dt}{\sqrt{a^{2}(\tau)-a^{2}(t)}}&\geqslant\frac{1}{a(\tau)}\int_{t_0}^\epsilon\frac{\ddot{a}(t)}{\dot{a}(t)^{2}}dt+\int_\epsilon^\tau\frac{\ddot{a}(t)}{\dot{a}(t)^{2}}\frac{dt}{\sqrt{a^{2}(\tau)-a^{2}(t)}}\nonumber\\
&= \frac{1}{a(\tau)}\left(\frac{1}{\dot{a}(t_0)}-\frac{1}{\dot{a}(\epsilon)}\right)+\int_\epsilon^\tau\frac{\ddot{a}(t)}{\dot{a}(t)^{2}}\frac{dt}{\sqrt{a^{2}(\tau)-a^{2}(t)}}.
\end{align}
Since the integral on the second line of Eq. \eqref{divrintp} is finite, the second line of Eq. \eqref{divrintp} is seen to diverge to $\infty$ as $t_0\to 0^+$. Hence Eq. \eqref{divrint} follows.\newline
(b) Observe first, using regularity, that for $0<\delta<\tau$,

\begin{equation}
\int_{\delta}^{\tau}\frac{\ddot{a}(t)}{\dot{a}^2(t)}\frac{dt}{\sqrt{a^{2}(\tau)-a^{2}(t)}}\leq\frac{1}{a^{2}(\delta)}\int_{\delta}^{\tau}\frac{a(t)}{\sqrt{a^{2}(\tau)-a^{2}(t)}}\,dt<\infty,
\end{equation} 
by Eq.\eqref{radius}.  Now from continuity and since $\dot{a}(0^+)>0$, the integrand on the left side is bounded near zero, and the result follows. 
\end{proof}

\begin{remark}\label{derivative}
The following observation will be useful in what follows. Suppose that $f:[a,b]\to\mathbb{R}$ is continuous, and continuously differentiable on $(a,b)$ except perhaps at some point $x_0\in(a,b)$. If $\lim_{x\to x_0}f'(x)=L$ exists, then $f'(x_0)=\lim_{x\to x_0}\frac{f(x)-f(x_0)}{x-x_0}=L$ by L'H\^{o}pital's rule. Hence $f$ is in fact continuously differentiable on $(a,b)$.
\end{remark}

\textbf{Proof of Theorem} \ref{drhometric'}

\begin{proof}
From Eq.\eqref{newg3} it suffices to prove that $f(\tau,t_0(\tau,\rho))=\mathbf{f}(\tau,\rho)$ is differentiable with respect to $\rho$ in $D$, where $f(\cdot,\cdot)$ is given by Eq. \eqref{gint}. Eq. \eqref{drhometric} then follows from Eq. \eqref{newg3} and the chain rule.  First note that from Eq. \eqref{properAlt2}, for $t_0\neq0$,

\begin{equation}\label{dtdrho}
\pder{t_{0}}{\rho}=-\frac{\sqrt{a^{2}(\tau)-a^{2}(t_{0})}}{a(t_{0})}.
\end{equation}

\noindent For $(\tau,\rho)\neq (\tau,\rho_{\mathcal{M}_\tau})$ in $D$, by Lemma \ref{c1lemma1}, Eq. \eqref{dtdrho} and the chain rule we have that,

\begin{equation}
\partial_{\rho}\mathbf{f}(\tau,\rho)=\partial_{t_{0}}f(\tau,t_{0}(\tau,\rho))\cdot\partial_{\rho}t_{0}(\tau,\rho).
\end{equation}

\noindent It remains to prove that for any $\tau>0$ that $\mathbf{f}(\tau,\rho)$ is differentiable with respect to $\rho$ at $(\tau,\rho_{\mathcal{M}_\tau})$ (i.e. at $t_0=0$). To this end, we will prove the existence of the limit
\begin{equation}
\lim_{\rho\to\rho_{\mathcal{M}_\tau}}\partial_\rho \mathbf{f}(\tau,\rho)=\lim_{t_0\to0}\partial_{t_{0}}f(\tau,t_{0}(\tau,\rho))\cdot\partial_{\rho}t_{0}(\tau,\rho).
\end{equation}
\noindent The desired result will then follow by Remark \ref{derivative}.  By Lemma \ref{c1lemma1},
\begin{align}
\partial_{t_{0}}f(\tau,t_{0})\cdot\partial_{\rho}t_{0}&=-\frac{a(t_0)\dot{a}(|t_{0}|)}{\sqrt{a^2(\tau)-a^2(t_0)}}\int_{|t_{0}|}^{\tau}\frac{\ddot{a}(t)}{\dot{a}(t)^{2}}\frac{dt}{\sqrt{a^{2}(\tau)-a^{2}(t)}}\times\nonumber\\
&-\frac{\sqrt{a^2(\tau)-a^2(t_0)}}{a(t_0)}\nonumber\\
&=\dot{a}(|t_0|)\int_{|t_{0}|}^{\tau}\frac{\ddot{a}(t)}{\dot{a}(t)^{2}}\frac{dt}{\sqrt{a^{2}(\tau)-a^{2}(t)}}.
\end{align}
We now consider the two sets of conditions from Lemma \ref{c1lemma2}.\newline
\noindent (a) If $\dot{a}(0^+)=0$ then by Lemma \ref{c1lemma2} and L'H\^{o}pital's rule,
\begin{multline}\label{limrhof}
\lim_{\rho\to\rho_{\mathcal{M}_\tau}}\partial_{\rho}\mathbf{f}(\tau,\rho)=\lim_{t_0\to0}\dot{a}(|t_{0}|)\int_{|t_{0}|}^{\tau}\frac{\ddot{a}(t)}{\dot{a}(t)^{2}}\frac{1}{\sqrt{a^{2}(\tau)-a^{2}(t)}}dt\\=\lim_{t_0\to0}\frac{\int_{|t_{0}|}^{\tau}\frac{\ddot{a}(t)}{\dot{a}(t)^{2}}\frac{1}{\sqrt{a^{2}(\tau)-a^{2}(t)}}dt}{1/\dot{a}(|t_0|)}=\frac{1}{a(\tau)}
\end{multline}
(b) If $\dot{a}(0^+)>0$ then by Lemma \ref{c1lemma2}, 
\begin{multline}
\lim_{\rho\to\rho_{\mathcal{M}_\tau}}\partial_{\rho}\mathbf{f}(\tau,\rho)=\lim_{t_0\to0}\dot{a}(|t_{0}|)\int_{|t_{0}|}^{\tau}\frac{\ddot{a}(t)}{\dot{a}(t)^{2}}\frac{1}{\sqrt{a^{2}(\tau)-a^{2}(t)}}dt\\
=\dot{a}(0^+)\int_0^\tau\frac{\ddot{a}(t)}{\dot{a}^2(t)}\frac{1}{\sqrt{a^2(\tau)-a^2(t)}}dt
\end{multline}

\end{proof}

\textbf{Proof of Theorem} \ref{dtaug}

\begin{proof} From Eq. \eqref{newg3}, it suffices to prove differentiability of the function $\mathbf{f}(\tau,\rho)\equiv f(\tau, t_{0}(\tau,\rho))$ at any point of the form $(\tau_{0},\rho_{\mathcal{M}_{\tau_{0}}})$ since differentiability at all other points in $D$ follows from smoothness of Fermi coordinates in their original charts. By Remark \ref{derivative}, differentiability at $(\tau_{0},\rho_{\mathcal{M}_{\tau_{0}}})$ will follow from the existence of the limit,
\begin{multline}\label{tauderivlimit2}
\lim_{\tau\rightarrow\tau_{0}}\der{f(\tau, t_{0}(\tau,\rho_{\mathcal{M}_{\tau_{0}}}))}{\tau}= \lim_{\tau\rightarrow\tau_{0}}\pder{f(\tau, t_{0})}{\tau}
+ \lim_{\tau\rightarrow\tau_{0}}\pder{f(\tau, t_{0})}{t_{0}}\pder{t_{0}(\tau,\rho_{\mathcal{M}_{\tau_{0}}}))}{\tau}.
\end{multline}
This is established by Lemmas \ref{taulimlemma1} and \ref{taulimlemma2} below which establish the existence of the limits on the right side of Eq. \eqref{tauderivlimit2} for an arbitrary $\tau_0>0$.\\

\end{proof}

\textbf{Proof of Lemma} \ref{firstlemmatau}

\begin{proof}
\noindent To compute $\partial f/\partial \tau$ in the case that $t_0>0$, we make the change of variable $\sigma=(a(\tau)/a(t))^2$ and find that,

\begin{equation}\label{fsub}
f(\tau,t_0)=-\frac{a(t_0)}{2}\int_1^{\sigma(\tau)}\ddot{b}\left(\frac{a(\tau)}{\sqrt{\sigma}}\right)\left(\frac{\sqrt{\sigma(\tau)-1}}{\sqrt{\sigma-1}}-\frac{\sqrt{\sigma(\tau)}}{\sqrt{\sigma}}\right)\frac{d\sigma}{\sigma},
\end{equation}
where $\sigma(\tau)=(a(\tau)/a(t_0))^2$ and $b(t)$ is the inverse function of $a(t)$.  Applying the Dominated Convergence theorem, we calculate,

\begin{equation}\label{dfdtpositive}
\pder{f}{\tau}(\tau,t_0)=I_1(\tau,t_0)+I_2(\tau,t_0).
\end{equation}

\noindent If $t_0<0$, then $f(\tau,t_0)=2f(\tau,0)-f(\tau,-t_0)$. Changing the integration variable to $\sigma$ again, we find that
\begin{equation}
f(\tau,0)=-\frac{a(\tau)}{2}\int_0^\infty \ddot{b}\left(\frac{a(\tau)}{\sqrt{\sigma}}\right)\left[\sqrt{\frac{\sigma}{\sigma-1}}-1\right]\frac{d\sigma}{\sigma^{3/2}}.
\end{equation}

\noindent Applying the dominated convergence theorem we verify Eq. \eqref{partialtauf0}. Then by Eq. \eqref{dfdtpositive}, we have

\begin{equation}
\pder{f}{\tau}(\tau,t_0)=\pder{}{\tau}(2f(\tau,0)-f(\tau,-t_0))=2\pder{f}{\tau}(\tau,0)-I_1(\tau,-t_0)-I_2(\tau,-t_0).
\end{equation}

\end{proof}

\textbf{Proof of Lemma} \ref{taulimlemma1}

\begin{proof}
First consider $\tau>\tau_0$. It is easy to show that in this case $t_0(\tau)>0$. Using the hypotheses and following the notation of Lemma \ref{firstlemmatau} we may bound the integrand of $I_1(\tau,t_0(\tau))$ as follows,
\begin{align}
I_{[t_0(\tau),\tau]}&\left|3\frac{\ddot{a}^2(t)a(t)}{\dot{a}^4(t)}-\frac{\dddot{{}a}(t)a(t)}{\dot{a}^3(t)}\right|\left[\frac{\sqrt{a^2(\tau)-a^2(t_0(\tau))}}{\sqrt{a^2(\tau)-a^2(t)}}-1\right]&\nonumber\\
&\leqslant (3K^2+C)I_{[0,\tau]}\frac{1}{a(t)}\left[\frac{a(\tau)}{\sqrt{a^2(\tau)-a^2(t)}}-1\right]\nonumber\\
&\leqslant\frac{3K^2+C}{\dot{a}(\tau)}I_{[0,\tau]}\frac{\dot{a}(t)/a(\tau)}{\sqrt{1-\frac{a^2(t)}{a^2(\tau)}}\left(1+\sqrt{1-\frac{a^2(t)}{a^2(\tau)}}\right)},
\end{align}
\noindent where $I_{[a,b]}$ is the indicator function for $[a,b]$, and where in the last line we have used the fact that the Hubble parameter, $H(t)$, is a decreasing function of $t$, as in Eq. \eqref{long}. A direct calculation shows that
\begin{multline}\label{i1boundlim}
\lim_{\tau\to\tau_0}\frac{1}{\dot{a}(\tau)}\int_0^\tau\frac{\dot{a}(t)/a(\tau)}{\sqrt{1-\frac{a^2(t)}{a^2(\tau)}}\left(1+\sqrt{1-\frac{a^2(t)}{a^2(\tau)}}\right)}dt\\=\frac{1}{\dot{a}(\tau_0)}\int_0^{\tau_0}\frac{\dot{a}(t)/a(\tau_0)}{\sqrt{1-\frac{a^2(t)}{a^2(\tau_0)}}\left(1+\sqrt{1-\frac{a^2(t)}{a^2(\tau_0)}}\right)}dt.
\end{multline}
So by Eq.\eqref{i1boundlim} and the generalized dominated convergence theorem \cite{royden},
\begin{equation}\label{i1lim}
\lim_{\tau\to\tau_0^+}I_1(\tau,t_0(\tau))=I_1(\tau_0,0).
\end{equation}
Let $\delta\in(0,\tau_0)$ and $M>\tau_0$. We can assume that $\tau$ is sufficiently close to $\tau_0$ so that $t_0(\tau)<\delta$ and $\tau<M$. Using the hypotheses of the lemma we may bound the integrand of $I_2(\tau,t_0(\tau))$ as follows,
\begin{align}\label{i2boundlim}
&I_{[t_0(\tau),\tau]}\left|\frac{\ddot{a}(t)}{\dot{a}^2(t)}\right|\left[\frac{a^2(\tau)}{\sqrt{a^2(\tau)-a^2(t)}\sqrt{a^2(\tau)-a^2(t_0)}}-1\right]\nonumber\\
&\leqslant I_{[0,\delta]}\frac{K}{a(t)}\left[\frac{a^2(\tau)}{a^2(\tau)-a^2(t)}-1\right]+I_{[\delta,\tau]}\left|\frac{\ddot{a}(t)}{\dot{a}^2(t)}\right|\left[\frac{a^2(\tau)}{\sqrt{a^2(\tau)-a^2(t)}\sqrt{a^2(\tau)-a^2(t_0)}}-1\right]\nonumber\\
&\leqslant I_{[0,\delta]}K\frac{a(\delta)}{a^2(\tau)-a^2(\delta)}+I_{[\delta,\tau]}D\frac{a^2(\tau)}{\sqrt{a^2(\tau)-a^2(\delta)}}\frac{\dot{a}(t)}{\sqrt{a^2(\tau)-a^2(t)}},
\end{align}
where $|\ddot{a}(t)/\dot{a}^3(t)|\leqslant D$ on $[\delta,M]$. Then by Eq. \eqref{i2boundlim} and the generalized dominated convergence theorem, we have that
\begin{equation}\label{i2lim}
\lim_{\tau\to\tau_0^+}I_2(\tau,t_0(\tau))=I_2(\tau_0,0).
\end{equation}
So by Eqs. \eqref{i1lim} and \eqref{i2lim} we have that
\begin{equation}
\lim_{\tau\to\tau_0^+}\pder{f}{\tau}(\tau,t_0(\tau))=I_1(\tau_0,0)+I_2(\tau_0,0)=\pder{f}{\tau}(\tau_0,0).
\end{equation}
If $\tau<\tau_0$ then $t_0(\tau)<0$ and similar arguments used for the $\tau>\tau_0$ case above show that 
\begin{equation}\label{i12limnegative}
\lim_{\tau\to\tau_0^-} I_1(\tau,-t_0(\tau))+I_2(\tau,-t_0(\tau))=I_1(\tau_0,0)+I_2(\tau_0,0).
\end{equation}
Using the hypotheses of the theorem we may bound the integrand of Eq. \eqref{partialtauf0} in a similar way as we did for $I_1$ and $I_2$ and apply the generalized dominated convergence theorem to calculate
\begin{equation}\label{limdftau0}
\lim_{\tau\to\tau_0}\pder{f}{\tau}(\tau,0)=\pder{f}{\tau}(\tau_0,0).
\end{equation}
By Eqs. \eqref{i12limnegative} and \eqref{limdftau0},
\begin{multline}
\lim_{\tau\to\tau_0^-}\pder{f}{\tau}(\tau,t_0(\tau))\\=\lim_{\tau\to\tau_0^-}\left(2\pder{f}{\tau}(\tau,0)-I_1(\tau,-t_0(\tau))-I_2(\tau,t_0(\tau))\right)=\pder{f}{\tau}(\tau_0,0).
\end{multline}
\end{proof}

\textbf{Proof of Lemma }\ref{taulimlemma2}

\begin{proof}
\noindent A convenient expression for $\pder{t_0}{\tau}$, valid for $t_0\neq0$ is obtained from the relationship,

\begin{align}
\rho=\int_{t_0}^{\tau}\frac{a(t)}{\sqrt{a^2(\tau)-a^2(t)}}dt&=\int_0^\tau\frac{a(t)}{\sqrt{a^2(\tau)-a^2(t)}}dt-\int_0^{t_0}\frac{a(t)}{\sqrt{a^2(\tau)-a^2(t)}}dt\nonumber\\
&\equiv\rho_{\mathcal{M}_\tau}-G(\tau,t_0).
\end{align}
So by the chain rule,

\begin{equation}\label{taudp1}
0=\der{\rho_{\mathcal{M}_\tau}}{\tau}(\tau)-\pder{G}{\tau}(\tau,t_0)-\pder{G(\tau,t_0)}{t_0}(\tau,t_0)\pder{t_0}{\tau}
\end{equation}
or,
\begin{equation}\label{taudp}
\pder{t_0}{\tau}=\frac{\sqrt{a^2(\tau)-a^2(t_0)}}{a(t_0)}\left[\der{\rho_{\mathcal{M}_\tau}}{\tau}(\tau)-\pder{G}{\tau}(\tau,t_0)\right].
\end{equation}
where $d\rho_{\mathcal{M}_\tau}/d\tau$ is given by Eq.\eqref{dradius} and a calculation shows that,
\begin{equation}\label{dgdtau}
\pder{G}{\tau}(\tau,t_0)=-a(\tau)\dot{a}(\tau)\int_0^{t_0}\frac{a(t)}{(a^2(\tau)-a^2(t))^{3/2}}dt.
\end{equation}
It follows easily from the generalized dominated convergence theorem and the continuity of $\rho_{\mathcal{M}_\tau}$ that
\begin{equation}\label{Grholimit}
\lim_{\tau\to\tau_0}\left[\der{\rho_{\mathcal{M}_\tau}}{\tau}(\tau)-\pder{G}{\tau}(\tau,t_0(\tau))\right]=\der{\rho_{\mathcal{M}_\tau}}{\tau}(\tau_0).
\end{equation}
By Lemma \ref{c1lemma1} and Eq. \eqref{taudp},
\begin{multline}
\pder{f}{t_{0}}(\tau,t_0(\tau))\pder{t_0}{\tau}(\tau,t_0(\tau))\\=\frac{\dot{a}(|t_0(\tau)|)}{\sqrt{a^2(\tau)-a^2(t_0(\tau))}}\int_{|t_0(\tau)|}^\tau\frac{\ddot{a}(t)}{\dot{a}^2(t)}\frac{1}{\sqrt{a^2(\tau)-a^2(t)}}dt\left[\der{\rho_{\mathcal{M}_\tau}}{\tau}(\tau)-\pder{G}{\tau}(\tau,t_0(\tau))\right].
\end{multline}
We now consider the two conditions from Lemma \ref{c1lemma2}.\newline
(a) If $a(0^+)=0$, then a calculation using the generalized dominated convergence theorem \cite{royden}, L'H\^{o}pital's rule and Eq. \eqref{Grholimit} shows that
\begin{equation}
\lim_{\tau\to\tau_0}\pder{f}{\tau}(\tau,t_0(\tau))\pder{t_0}{\tau}(\tau,t_0(\tau))=\frac{1}{a(\tau_0)}\der{\rho_{\mathcal{M}_\tau}}{\tau}(\tau_0).
\end{equation}
(b) If $a(0^+)>0$, then a calculation using the generalized dominated convergence theorem shows that
\begin{multline}
\lim_{\tau\to\tau_0}\pder{f}{\tau}(\tau,t_0(\tau))\pder{t_0}{\tau}(\tau,t_0(\tau))\\=\dot{a}(0^+)\der{\rho_{\mathcal{M}_\tau}}{\tau}(\tau_0)\int_0^\tau\frac{\ddot{a}(t)}{\dot{a}^2(t)}\frac{1}{\sqrt{a^2(\tau)-a^2(t)}}dt,
\end{multline}
and this expression is finite.
\end{proof}

\textbf{Proof of Theorem} \ref{rhoderivg}

\begin{proof}
By Eq. \eqref{drhometric}, continuity of $\dot{a}(t)$, and the continuity of $\sqrt{-g_{\tau\tau}}$ established by  Theorem \ref{gcontin}, it suffices to show that $\partial_\rho \mathbf{f}(\tau,\rho)$ is continuous at points of the form $(\tau_0,\rho_{\mathcal{M}_{\tau_0}})$ in $D$. For the duration of this proof, let,

\begin{equation}\label{F0}
F_{0}(\tau,x)=\dot{a}(x)\int_{x}^\tau \frac{\ddot{a}(t)}{\dot{a}^2(t)}\frac{dt}{\sqrt{a^2(\tau)-a^2(t)}}dt\equiv \dot{a}(x)\int_{x}^\tau h(\tau,t)dt.
\end{equation}
From Eq.\eqref{drhometric2}, $\partial_\rho \mathbf{f}(\tau,\rho)=F_{0}(\tau,|t_{0}(\tau,\rho)|)$ for $\rho\neq \rho_{\mathcal{M}_\tau}$.
When $\rho\ = \rho_{\mathcal{M}_\tau}$ there are two cases to consider:
(a) $\dot{a}(0^+)>0$ and (b) $\dot{a}(0)=0$.  For each of these cases we will define $F_{0}(\tau,0)=\partial_\rho \mathbf{f}(\tau,\rho_{\mathcal{M}_\tau})$ so that $F_{0}(\tau,x)$ is defined on the set $\{(\tau,x): \tau\geq x\geq0\}$.  Since $t_{0}(\tau,\rho)=0$ if and only if $\rho = \rho_{\mathcal{M}_\tau}$ and $t_{0}(\tau,\rho)$ is continuous on $D$ by Lemma \ref{t0cont}, it suffices to show that $F_{0}(\tau,x)$ is continuous at $(\tau_{0},0)$ for any $\tau_{0}>0$. \\

\noindent For case (a), denote $\dot{a}(0^+)$ by $\dot{a}(0)$ for ease of notation. Then $\partial_\rho \mathbf{f}(\tau,\rho_{\mathcal{M}_\tau})=F_{0}(\tau,0)$. Since $\dot{a}(|t_0|)$ is continuous everywhere, we need only show that $\int_{x}^\tau h(\tau,t)dt$ is continuous at $(\tau,x)=(\tau_{0},0)$. \\ 

\noindent Let $\tau_0>0$ be fixed and first suppose that $\tau>\tau_0$. From the triangle inequality and using Lemma \ref{c1lemma2}(b),
\begin{multline}\label{rhocontineq2}
\left|\int_{x}^\tau h(\tau,t)dt-\int_0^{\tau_0}h(\tau_0,t)dt\right|\leqslant \left|\int_{\tau_0}^\tau h(\tau,t)dt\right|\\+\int_0^{\tau_0}|h(\tau,t)-h(\tau_0,t)|dt+\left|\int_0^xh(\tau_0,t)dt\right|.
\end{multline}
The first term on the right can be made arbitrarily small for $\tau$ sufficiently close to $\tau_0$ by Lemma \ref{continuitylemma}(b). The last term in this inequality can made arbitrarily small for sufficiently small $x$ by the integrability of $h(\tau_0,t)$. The middle term can be made small for all $\tau$ sufficiently close to $\tau_0$ since $h(\tau,t)$ is uniformly continuous in both variables for $\tau$ in a neighborhood of $\tau_{0}$ and $t$ restricted to any interval of the form $[0,\tau_{0}-\epsilon]$ for $\epsilon>0$, and because $h(\tau_{0},t)$ is integrable on $[0,\tau_{0}]$. The case $\tau<\tau_0$ is similar.\\

\noindent For case (b), let $F_{0}(\tau,0)\equiv\lim_{x\to0^{+}}F_{0}(\tau,x)$.  Then $F_{0}(\tau,0) =1/a(\tau)=\partial_\rho \mathbf{f}(\tau,\rho_{\mathcal{M}_\tau})$ by Theorem \ref{drhometric'}, so,
\begin{equation}\label{caseb}
|F_{0}(\tau,x)-F_{0}(\tau_{0},0)| \leq |F_{0}(\tau,x)-F_{0}(\tau_{0},x)|+\left|F_{0}(\tau_{0},x)-\frac{1}{a(\tau_{0})}\right|.
\end{equation}
The second term on the right side of Eq.\eqref{caseb} can be made arbitrarily small for all $x$ sufficiently close to zero. To show that the first term on the right side can be made arbitrarily small for all $(\tau,x)$ sufficiently close to $(\tau_{0},0)$, we assume first that $\tau>\tau_{0}$ so that,
\begin{equation}\label{caseb2}
|F_{0}(\tau,x)-F_{0}(\tau_{0},x)|\leq \dot{a}(x)\int_{\tau_{0}}^\tau |h(\tau,t)|dt+\dot{a}(x)\int_{x}^{\tau_{0}}|h(\tau,t)-h(\tau_{0},t)|dt.
\end{equation}
The first term on the right side of Eq.\eqref{caseb2} can be made arbitrarily small for $\tau$ sufficiently close to $\tau_0$ by Lemma \ref{continuitylemma}(b) and the boundedness of $\dot{a}(x)$.  For the second term, choose $\epsilon$ with $0<\epsilon<\tau_{0}$ so that in accordance with Lemma \ref{c1lemma2}(a), $\ddot{a}(t)>0$ on $(0,\epsilon)$. For $x<\epsilon$,
\begin{multline}\label{caseb3}
\dot{a}(x)\int_{x}^{\tau_{0}}|h(\tau,t)-h(\tau_{0},t)|dt=\dot{a}(x)\int_{\epsilon}^{\tau_{0}}|h(\tau,t)-h(\tau_{0},t)|dt\\
+\dot{a}(x)\int_{x}^{\epsilon}|h(\tau,t)-h(\tau_{0},t)|dt
\end{multline}
The first integral on the right side of Eq.\eqref{caseb3} is bounded by Lemma \ref{continuitylemma}(b) and because $|h(\tau,t)|\leq|h(\tau_{0},t)|$ for $\tau\geq\tau_{0}$.  Thus the first term on the right converges to zero uniformly in $\tau$ as $x\to0^{+}$.  By continuity, given any $\delta>0$, the integrand in the second term on the right is bounded by $\frac{\ddot{a}(t)}{\dot{a}^2(t)}\delta$ for $\tau$ sufficiently close to $\tau_{0}$, independent of $x$, and then,
\begin{equation}
\dot{a}(x)\int_{x}^{\epsilon}|h(\tau,t)-h(\tau_{0},t)|dt<\delta\dot{a}(x)\int_{x}^{\epsilon}\frac{\ddot{a}(t)}{\dot{a}^2(t)}dt=\delta\left[1-\frac{\dot{a}(x)}{\dot{a}(\epsilon)}\right]<\delta.
\end{equation}
because $\dot{a}(t)$ is increasing on $(0,\epsilon)$.  The argument for the case that $\tau<\tau_{0}$ is similar.  Thus, $F_{0}(\tau,x)$ is continuous for $\tau\geq x\geq 0$ and it follows that $\partial_\rho \mathbf{f}(\tau,\rho)$ is continuous on $D$ and therefore $\partial_\rho g_{\tau\tau}$ is continuous on $D$.
\end{proof}

\textbf{Proof of Theorem} \ref{tauderivg}

\begin{proof} By Eq. \eqref{tauderivformula}, the smoothness of $a(\tau)$, and the continuity of $g_{\tau\tau}$ established by  Theorem \ref{gcontin}, we need only show that $\partial_{\tau} \mathbf{f}(\tau,\rho)$ is continuous at points of the form $(\tau_0,\rho_{\mathcal{M}_{\tau_0}})$ in $D$.\\

\noindent From Eq.\eqref{dtauf} it suffices to show that $\partial_{\tau}f(\tau,t_0)$ and $\partial_{t_0}f(\tau,t_0)\partial_\tau t_0(\tau,t_0)$ are continuous at points of the form $(\tau_0,\rho_{\mathcal{M}_{\tau_0}})$, i.e., exactly at the points where the function $t_{0}$ vanishes.\\

\noindent Since $t_{0}=t_{0}(\tau, \rho)$ is continuous on $D$ by Lemma \ref{t0cont}, it is sufficient to show that $\partial_{\tau}f(\tau,t_0)$ and $\partial_{t_0}f(\tau,t_0)\partial_\tau t_0(\tau,t_0)$ are continuous as functions of $\tau$ and $t_{0}$ at points of the form $(\tau,t_{0})=(\tau,0)$.\\

\noindent We first consider $\partial_{\tau}f(\tau,t_0)$.  From Eq.\eqref{limdftau0}, $\partial_{\tau}f(\tau,0)$ is a continuous function of $\tau$, so in light of Eq.\eqref{ftauderivative}, it is enough to prove that the functions $I_{1}(\tau,x)$ and $I_{2}(\tau,x)$ given by Eqs.\eqref{I1}, and \eqref{I2} and defined on the set $\{(\tau,x): \tau\geq x\geq0\}$, are continuous at any point of the form $(\tau_{0},0)$.\\ 

\noindent To show that $I_{1}(\tau,x)$ is continuous at $(\tau_{0},0)$ we show that the same is true of the integral expression, $F_{1}(\tau,x)$ given by,
\begin{equation}
\begin{split}\label{I1a}
F_{1}(\tau,x)&\equiv-\frac{a(\tau)}{\dot{a}(\tau)}I_1(\tau,x)\\
&=\int_{x}^\tau\left[3\frac{\ddot{a}^2(t)a(t)}{\dot{a}^4(t)}-\frac{\dddot{{}a}(t)a(t)}{\dot{a}^3(t)}\right]\left(\frac{\sqrt{a^2(\tau)-a^2(x)}}{\sqrt{a^2(\tau)-a^2(t)}}-1\right)dt\\
&\equiv\int_{x}^\tau\ell_{1}(t)\left(\frac{\sqrt{a^2(\tau)-a^2(x)}}{\sqrt{a^2(\tau)-a^2(t)}}-1\right)dt\\
&\equiv\int_{x}^{\tau}h_{1}(\tau,x,t)dt,
\end{split}
\end{equation}
in analogy with Eq.\eqref{defh}. The proof of continuity of $F_{1}(\tau,x)$ at $(\tau_{0},0)$ proceeds exactly as in Eq.\eqref{gttcont} and the two paragraphs following it, with $f$ replaced by $F_{1}$ and $h$ replaced by $h_{1}$. \\

\noindent We proceed in a similar fashion for the function $I_{2}(\tau,x)$.  Define
\begin{equation}
\begin{split}\label{I2a}
F_{2}(\tau,x)&\equiv\frac{a(\tau)}{\dot{a}(\tau)}I_2(\tau,x)\\
&=\int_{x}^\tau\frac{\ddot{a}(t)}{\dot{a}^2(t)}\left[\frac{a^2(\tau)}{\sqrt{a^2(\tau)-a^2(t)}\sqrt{a^2(\tau)-a^2(x)}}-1\right]dt\\
&\equiv\int_{x}^{\tau}h_{2}(\tau,x,t)dt.
\end{split}
\end{equation}
As a preliminary estimate, we observe that for $\tau>\tau_{0}$,
\begin{equation}
\begin{split}\label{lemma3best}
\int_{\tau_{0}}^{\tau}|h_{2}(\tau,x,t)|dt&<\int_{\tau_{0}}^\tau\left|\frac{\ddot{a}(t)}{\dot{a}^2(t)}\right|\frac{a^2(\tau)}{\sqrt{a^2(\tau)-a^2(t)}\sqrt{a^2(\tau)-a^2(x)}}\,dt\\
&=\frac{a(\tau)}{\sqrt{a^2(\tau)-a^2(x)}} \int_{\tau_{0}}^\tau\left|\frac{\ddot{a}(t)}{\dot{a}^2(t)}\right|\frac{a(\tau)}{\sqrt{a^2(\tau)-a^2(t)}}\,dt,
\end{split}
\end{equation}
and it follows from Lemma \ref{continuitylemma}(b) that the right side can be made arbitrarily small for $\tau$ is sufficiently close to (but greater than) $\tau_{0}$, and this may be done uniformly in $x$ for all $x$ sufficiently small.\\

\noindent Now in analogy with Eq.\eqref{gttcont}, for $\tau>\tau_{0}$,
\begin{multline}\label{dtaugttcont}
|F_{2}(\tau,x)-F_{2}(\tau_{0},0)|\leq\int_{x}^{\tau_{0}}|h_{2}(\tau,x,t)-h_{2}(\tau_{0},0,t)|dt \\+\left|\int_{\tau_{0}}^{\tau}h_{2}(\tau,x,t)dt\right|
+\left|\int_{0}^{x}h_{2}(\tau_{0},0,t)dt\right|.
\end{multline}
The third term can be made arbitrarily small by choosing $x$ sufficiently close to zero because $h_{2}(\tau_{0},0,t)$ is integrable. The second term on the right is small for $\tau$ sufficiently close to $\tau_{0}$, uniformly in $x$ by Eq.\eqref{lemma3best} and Lemma \ref{continuitylemma}(b) with $\ell(t)=\ddot{a}(t)/\dot{a}(t)^{2}$. The first term on the right side of Eq.\eqref{gttcont} can be made arbitrarily small by choosing $\tau$ sufficiently close to $\tau_{0}$ and $x$ sufficiently close to $0$ by using the fact that $h_{2}$ is continuous jointly in all its variables and that $h_{2}(\tau_{0},0,t)$ and $h_{2}(\tau,x,t)$ are both integrable on $(0,\tau_{0})$.  The case $\tau<\tau_{0}$ is similar. Thus, $\partial_{\tau}f(\tau,t_0)$ is continuous on $D$.\\

\noindent It remains to show that $\partial_{t_0}f(\tau,t_0)\partial_\tau t_0(\tau,t_0)$ is continuous at all points of the form $(\tau_0,\rho_{\mathcal{M}_{\tau_0}})$.  From Eqs. \eqref{dt0g} and \eqref{taudp},
\begin{multline}
\pder{f}{t_{0}}(\tau,t_{0})\pder{t_0}{\tau}(\tau,t_{0})\\
=\frac{\dot{a}(|t_{0}|)}{\sqrt{a^2(\tau)-a^2(t_{0})}}\int_{|t_{0}|}^\tau\frac{\ddot{a}(t)}{\dot{a}^2(t)}\frac{1}{\sqrt{a^2(\tau)-a^2(t)}}dt\left[\der{\rho_{\mathcal{M}_\tau}}{\tau}(\tau)-\pder{G}{\tau}(\tau,t_{0})\right],
\end{multline}
where 
\begin{equation}\label{dgdtau}
\pder{G}{\tau}(\tau,t_0)=-a(\tau)\dot{a}(\tau)\int_0^{t_0}\frac{a(t)}{(a^2(\tau)-a^2(t))^{3/2}}dt,
\end{equation}
and
\begin{equation}
\der{\rho_{\mathcal{M}_\tau}}{\tau}(\tau)=\frac{\dot{a}(\tau)}{a(\tau)}\int_0^\tau\left(1-\frac{a(t)\ddot{a}(t)}{\dot{a}^2(t)}\right)\frac{a(t)dt}{\sqrt{a^2(\tau)-a^2(t)}}.
\end{equation}
As above, it is sufficient to show that the function $F_{3}(\tau,x)\equiv\partial_{t_{0}}f(\tau,x)\partial_\tau t_{0}(\tau,x)$ defined on the set $\{(\tau,x): \tau\geq x\geq0\}$ is continuous at any point of the form $(\tau,x)=(\tau_{0},0)$.  Using Eq.\eqref{F0}, we may write,
\begin{equation}
F_{3}(\tau,x)= \frac{F_{0}(\tau,x)}{\sqrt{a^2(\tau)-a^2(x)}}\left[\der{\rho_{\mathcal{M}_\tau}}{\tau}(\tau)-\pder{G}{\tau}(\tau,x)\right].
\end{equation}
The function $F_{0}(\tau,x)$ was shown to be continuous at $(\tau_{0},0)$ in the proof of Theorem \ref{rhoderivg} and it is easily shown that the functions in the square brackets are continuous as well.  It follows that  $\partial_{t_0}f(\tau,t_0)\partial_\tau t_0(\tau,t_0)$ is continuous at $(\tau_0,\rho_{\mathcal{M}_{\tau_0}})$. 
\end{proof}

\noindent \textbf{Acknowledgments.} \\

\noindent J. Reschke was partially supported during the course of this research by the Leslie and Terry Cutler scholarship, and the California State University, Northridge Association of Retired Faculty and College of Science and Mathematics.

\end{document}